\newcommand{\OO}{{\cal O}}
\newcommand{\LL}{\mathcal{L}}
\newcommand{\RR}{\mathcal{R}}
\newcommand{\CC}{\mathcal{C}}
\newcommand{\BB}{\mathcal{B}}
\newcommand{\FF}{\mathcal{F}}
\newcommand{\MM}{\mathcal{M}}
\newcommand{\GG}{\mathcal{G}}
\newcommand{\HH}{\mathcal{H}}
\newcommand{\XX}{\mathcal{X}}
\newcommand{\ZZ}{\mathcal{Z}}
\newcommand{\JJ}{\mathcal{J}}
\newcommand{\EE}{\mathcal{E}}
\newcommand{\QQ}{\mathcal{Q}}
\newcommand{\WW}{\mathcal{W}}
\newcommand{\YY}{\mathcal{Y}}
\newcommand{\DD}{\mathcal{D}}
\newtheorem{definition}{Definition}
\newtheorem{property}{Property}
\newtheorem{lemma}{Lemma}
\newtheorem{theorem}{Theorem}
\newtheorem{observation}{Observation}
\newtheorem{conjecture}
{Conjecture}
\title{Layouts for Plane Graphs on Constant Number of Tracks}
\date{}
\author{Jiun-Jie Wang\\
Email: jiunjiew@buffalo.edu}
\begin{document}
\maketitle

\abstract{
A \emph{$k$-track} layout of a graph consists of a vertex $k$ colouring, and a total order of each vertex colour class, such that between each pair of colour classes no two edges cross.
A \emph{$k$-queue} layout of a graph consists of a total order of the vertices, and a partition of the edges into $k$ sets such that no two edges that are in the same set are nested with respect to the vertex ordering.
The \emph{track number} (\emph{queue number}) of a graph $G$, is the minimum $k$ such that $G$ has a $k$-track ($k$-queue) layout.

This paper proves that every $n$-vertex plane graph has constant-bound track and queue numbers.
The result implies that every plane has a 3D crossing-free straight-line grid drawing
in $O(n)$ volume.
The proof utilizes a novel graph partition technique.
}

\section{Introduction}

A \emph{track} layout of a graph consists of a vertex $k$ colouring, and a total order of each vertex colour class, such that between each pair of colour classes no two edges cross.
A \emph{queue} layout of a graph consists of a total order of the vertices, and a partition of the edges into sets (called queues) such that no two edges that are in the same set are nested with respect to the vertex ordering. The minimum number of queues in a queue layout of a graph is its \emph{queue number}.
Track layouts have been extensively studied in \cite{D15, DFJW12, DMW05, DMW13, DPW04, DW04, DW05, FLW02}.
Queue layouts have been introduced by Heath, Leighton, and Rosenberg \cite{HLR92, HR92} and have been extensively studied from \cite{BFP10, DMW05, DMW13, DPW04, DW05, EI71, H04, HLR92, HR92, P92, RM95, SS00, T72, W05, W08}. Both track and queue layouts have applications in parallel process scheduling, fault-tolerant processing, matrix computations, and sorting networks (see \cite{P92} for a survey).
Queue layouts of directed acyclic graphs \cite{BCLR96, HP99, HPT99, P92} and posets \cite{HP97, P92} have also been investigated.

The question in Heath et al. \cite{HLR92, HR92}, whether the queue number of a planar graphs is constant-bound (it also leads to constant-bound track number), remains open. Heath et al. \cite{HLR92, HR92} conjectured that the question has an affirmative answer.
However, Pemmaraju \cite{P92} conjectured that every planar graph has $O(\log n)$ queue number. Also,
he conjectured that this is the correct lower bound.
Up to now, the best known lower bound is still constant-bound.
On the other hand, the well-known upper bound for the queue number of planar graphs had remained stagnant as $O(\sqrt{n})$ roughly two decades. This upper bound utilizes the fact that planar graphs have path width at most $O(\sqrt{n})$.
Recently, the upper bounds of queue and track numbers for planar graphs were reduced to $O(\log^2 n)$, by Di Battista, Frati and Pach \cite{BFP10} and $O(\log n)$, by Vida Djumovic \cite{D15}, respectively.

In this paper, we provide a layout on constant number of tracks for a plane graph.
Our result attempts to break Pemmaraju's conjecture in a positive direction.
The proof that a plane graph has constant-bound track number is simple.
It utilizes a novel graph partition technique.
In particular, our main result states that every $n$-vertex plane graph has such a graph partition and it leads to $O(1)$-track layouts for plane graphs.

One of the most important motivations for studying queue layouts is 3D crossing-free straight-line grid drawing in a small volume.
Particularly, a 3D crossing-free straight-line grid drawing of a graph is a placement of the vertices at distinct points in a 3D grid, and the straight-line representing the edges are pairwise non-crossing.
One of the most important open problems that Felsner et al. \cite{FLW02} present in graph drawing questions  is whether planar graphs have 3D crossing-free straight-line grid drawings in a linear volume.
A 3D crossing-free straight-line grid drawing with volume $X \times Y \times Z$ is an $X \times Y \times Z$ drawing that fits in an axis-aligned box with side lengths $X-1$, $Y-1$, and $Z-1$.
The following theorem has been established in \cite{DMW05, DPW04}.
\begin{theorem}
An $n$-vertex graph $G$ has a 3D crossing-free straight grid drawing in an $O(1) \times O(1) \times O(n)$ volume, if and only if $G$ has a constant-bound queue number. (constant-bound track number.)
\end{theorem}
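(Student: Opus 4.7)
The plan is to prove the two directions of the equivalence separately, treating queue and track numbers uniformly by invoking the known fact of Dujmović--Pór--Wood that a graph has $O(1)$ queue number if and only if it has $O(1)$ track number. So it is enough to establish the geometric equivalence between a constant track number and the existence of a crossing-free straight-line grid drawing of constant cross-sectional area and linear depth.

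For the ``track layout implies drawing'' direction, I would start from a $t$-track layout with $t = O(1)$, and assign to each track $T_i$ a distinct point $p_i = (x_i, y_i)$ in an $O(1) \times O(1)$ grid, chosen in convex/general position (for instance along a moment curve $(i, i^2)$ scaled into the grid). A vertex $v$ on track $T_i$ at rank $r_v \in \{1,\dots,n\}$ is placed at $(x_i, y_i, r_v)$, yielding volume $O(1) \times O(1) \times O(n)$. The core work is a case analysis on two straight-line edges $e_1, e_2$ according to the number of tracks they span. If both use the same pair of tracks, they lie in the vertical plane through those two tracks and the track-layout property directly forbids a 2D crossing there. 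If they share exactly one track, the shared vertical line forces any would-be crossing to occur on that line, which is handled by the per-track total order. If they use four distinct tracks, they are skew segments and cross only on a measure-zero locus that the general-position choice of the $p_i$'s avoids.

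For the reverse direction, given a crossing-free 3D straight-line grid drawing in an $X \times Y \times Z$ box with $X, Y = O(1)$ and $Z = O(n)$, I would define the tracks as the $XY = O(1)$ vertical columns of vertices sharing the same $(x,y)$-coordinates, ordering each column by $z$-coordinate. It then suffices to check that, for every pair of columns, the projection of their induced edges onto the vertical plane spanned by them is crossing-free; but any 2D crossing in this projection lifts to a 3D crossing of the same two edges in the original drawing, contradicting the hypothesis.

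The main obstacle is the geometric non-crossing claim in the forward direction, especially the case of four distinct tracks, since ordinary general position does not by itself rule out crossings between edges whose endpoints have arbitrary ranks. The standard remedy is to place the track-anchors $p_i$ in strictly convex position so that the cyclic order of tracks is preserved by every planar projection, allowing the track-layout non-crossing condition to be invoked simultaneously on each pair of tracks involved. Turning this ``convex anchor'' intuition into a clean, rank-independent argument is the one step that truly requires care; everything else is bookkeeping around volume bounds and the queue/track equivalence.
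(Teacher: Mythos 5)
The paper does not prove this statement at all; it is quoted verbatim as an established result of Dujmovi\'{c}, Morin and Wood \cite{DMW05} and Dujmovi\'{c}, P\'{o}r and Wood \cite{DPW04}, so there is no internal proof to compare against. Your outline follows the same route as those papers (tracks $\leftrightarrow$ vertical grid columns), and your reverse direction is sound: if both endpoints of an edge lie on two columns, the whole segment lies in the vertical plane spanned by those columns, so an $X$-crossing between two column-orders would be an actual crossing in the drawing.

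The genuine gap is exactly the one you flag and then do not close: the forward direction for two edges whose four endpoints lie on four distinct tracks. Placing the track anchors in convex (or generic) position in the $O(1)\times O(1)$ base does \emph{not} suffice. If $p_1p_3$ and $p_2p_4$ are crossing diagonals of the anchor quadrilateral, the two segments pass over the common projection point $q$ at heights $h_1$ and $h_2$ that are prescribed convex combinations of the integer $z$-coordinates of their endpoints; nothing in convex position or general position of the anchors prevents $h_1=h_2$, since the ranks $r_v$ are dictated by the track layout and cannot be perturbed. The known proof resolves this by coupling two devices you omit: the anchors are placed on a moment curve reduced modulo a prime $p$ (roughly $(i \bmod p,\ i^2 \bmod p)$), and the $z$-coordinates are structured so that a vertex of rank $r$ on track $i$ sits at height congruent to $i$ modulo the number of tracks. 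A coincidence $h_1=h_2$ then forces an arithmetic relation among the four track indices modulo $p$ that the moment-curve choice rules out. Without this (or an equivalent refinement such as passing through $(k,t)$-track layouts with bounded span), your case analysis does not go through. Separately, note that the ``if and only if'' with \emph{queue} number hides the hardest part of \cite{DMW05} --- bounding track number by a function of queue number --- which you outsource by citation; that is acceptable here since the paper itself only cites the theorem, but it should be stated as an explicit external input rather than folded into ``bookkeeping.''
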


The road map for this paper is as follows:
the first half part from Sections \ref{sec:prelim} to \ref{sec:cons-tracks}
explain the basic framework and ideas for this article.
The second half part explain more details in the first half part of this article.
\section{Preliminaries}\label{sec:prelim}
In this section, Some definitions and important preliminary results are given.
Definitions not mentioned here are standard. A graph $G=(V,E)$ is called
{\em planar} if it can be drawn on the plane with no edge
crossings. Such a drawing is called a {\em plane embedding} of $G$.
A {\em plane graph} is a planar graph with a fixed plane embedding.

A \emph{layerlike} graph $\Pi$ is a graph whose vertices are partitioned and placed on contiguous layers such that
no edge is placed between any two non-contiguous layers and no edges are crossing.
Given a layerlike graph $\Pi$, a \emph{down-pointing} triangle $\triangledown$ is a a cycle $(l, \cdots, r, m)$
that vertices on the cycle $(l, \cdots, r, m)$
are on the two contiguous layers
where the path from $l$ to $r$ are on the upper layer and the vertex $m$ is on the lower layer.
A \emph{bowl} $\heartsuit$ is a cycle $(l, \cdots, r)$
that the cycle are on the same layer
where each vertex of the cycle is on the same layer. In Fig. \ref{fig:perfect-layer-graph},
vertices $(b_8, b_9, b_{10}, b_{11}, b_{12}, b_{13})$ form a bowl in the composite-layerlike graph $\GG$.

\begin{definition}
A \emph{composite-layerlike} graph $\GG$ can be recursively defined as follows:
$\GG$ consists of a layerlike graph $\Pi$ such that
each bowl $\heartsuit$ of $\GG$ has a smaller composite-layerlike graph $\GG_1$
where $\GG_1$'s first layer is the bowl $\heartsuit$,
and each down-pointing triangle $\triangledown$ has a composite-layerlike graph $\GG_2$
where the first layer of $\GG_2$ is the upper layer of $\triangledown$.
\end{definition}

An edge $e=(u, v)$ is called a \emph{chord} if
both end-vertices $u$ and $v$
are on the same layer in a composite-layerlike graph $\GG$.
A \emph{region} $\WW$, rooted at a vertex $r$ in a composite-layerlike graph $\GG$, consists of a left boundary $\BB^L$ and a right boundary $\BB^R$ such that $\WW$ satisfies (1): $B^L$ and $\BB^R$ are two paths walking along contiguous layers from the vertex $r$ to two different vertices on lower layers in $\GG$, and
(2) $\WW$ is a separator of the composite-layerlike graph $\GG$. Also, we denote the left and right boundaries of a region $\WW$ as $\BB^L(\WW)$ and $\BB^R(\WW)$, respectively.
Consider a region $\WW$ rooted at a vertex $r$ in a composite-layerlike graph $\GG$.
A composite-layerlike graph $\GG(\WW)$ is \emph{induced} by $\WW$ if
$\GG(\WW)$ is a subgraph of $\GG$ inside by the two boundaries $\BB^L(\WW)$ and $\BB^R(\WW)$.
Also,
we denote $\WW^{\MM}$ as the maximum region bounded by the leftmost and rightmost boundaries of $\GG$.
Obviously, a composite-layerlike graph $\GG$ is a maximum composite-layerlike graph $\GG(\WW^{\MM})$
induced by the maximum region $\WW^{\MM}$.

A \emph{ladder} $\HH$ is defined to
consist of contiguous tracks.
A \emph{layout} of a composite-layerlike graph $\GG$ in a ladder $\HH$ is defined to be an arbitrary vertices's partition of $\GG$ on tracks of $\HH$.
For a layout of a composite-layerlike graph $\GG$ in a ladder $\HH$,
a set of chords $\{e_1=(u_1, v_1), e_2=(u_2, v_2), \cdots, e_q=(u_q, v_q)\}$ are called \emph{nest} if $\{e_1, e_2, \cdots, e_q\}$ are placed on a track in $\HH$ as the order:
$(u_1,$ $u_2,$ $\cdots,$ $u_q,$ $v_q,$ $\cdots,$ $v_2,$ $v_1)$.
A set of edges $\{e_1=(u_1, v_1),$ $e_2=(u_2, v_2),$ $\cdots,$ $e_q=(u_q, v_q)\}$ are called \emph{$X$-cross} if
$(u_1,$ $u_2,$ $\cdots,$ $u_q)$ are orderly placed  as $(u_1,$ $u_2,$ $\cdots,$ $u_q)$　on a track and
$(v_1,$ $v_2,$ $\cdots,$ $v_q)$ are reversely placed as
$(v_q,$ $\cdots,$ $v_2,$ $v_1)$ on another track in $\HH$.

Given an edge $e=(u, v)$ a layout in $\HH$, let $\LL_{\HH}(u)$ and $\LL_{\HH}(v)$ be the track numbers where the vertices $u$ and $v$ placed in $\HH(\GG)$, respectively. The \emph{gap} of an edge $e=(u, v)$ is the absolute difference $|\LL_{\HH}(u)-\LL_{\HH}(v)|$.
Also, the \emph{queue number} on a track is defined as the maximum size of edges nest on the track, and
the $X$-\emph{crossing number} for any two tracks in $\HH$ is defined as the maximum size of edges $X$-cross between the two track.
The \emph{distance-number} of a layout in a ladder $\HH$ is defined as the maximum gaps among all edges.

\begin{figure}[t]
\begin{center}
\includegraphics[width=1\textwidth, angle =0]{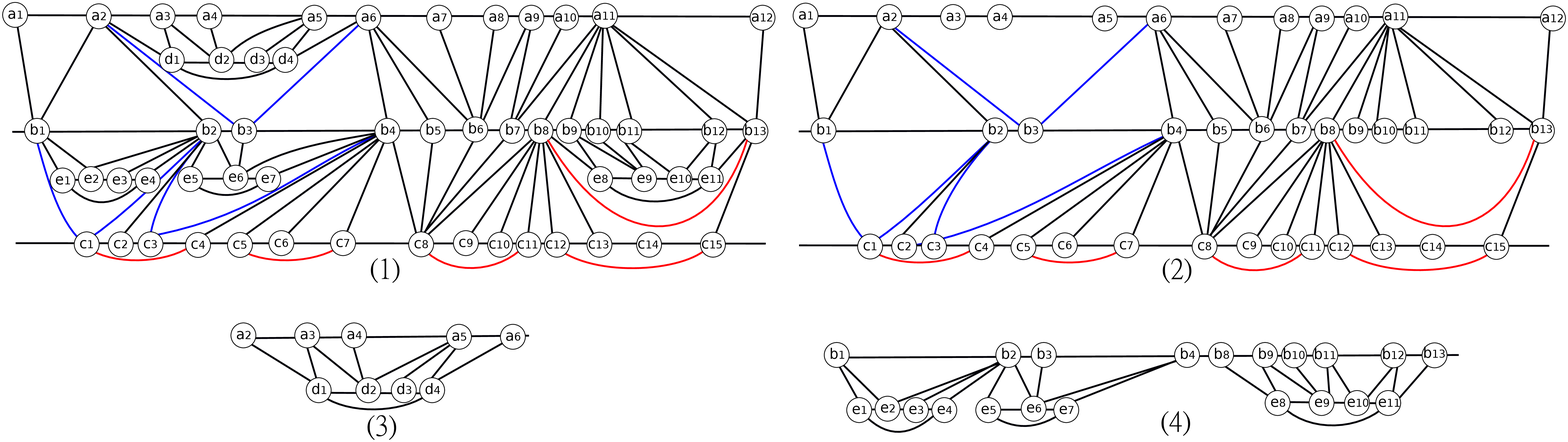}
  \centering
\caption{(1): $\GG$ is a composite-layerlike graph;
(2) is a layerlike graph $\Pi$ of $\GG$ with three layers $(\LL_1, \LL_2, \LL_3)$ and $\Pi$ has three down-pointing triangles and one bowl
$\triangledown_1=(a_2, a_3, a_4, a_5, a_6, b_3)$,
$\triangledown_2=(b_1, b_2, c_1)$,
$\triangledown_3=(b_2, b_3, b_4, c_3)$ and
$\heartsuit_1=(b_8, b_9, b_{10}, b_{11}, b_{12}, b_{13})$;
(3) $\triangledown_1$'s inner vertices can be placed on a layerlike graph $\Pi_1$ which has the same first layer $\LL_1$ with $\Pi$;
(4) the inner vertices of
$\{\triangledown_2, \triangledown_3\}$ and $\heartsuit_1$ can be placed on a  $\Pi_2$ which has the same first layer $\LL_2$ with $\Pi$.
}
\label{fig:perfect-layer-graph}
\end{center}
\vspace{-0.2in}
\end{figure}

\begin{definition}\label{def:well-placed}
A layout of a composite-layerlike graph $\GG$ in
a ladder $\HH$ is called $(\QQ, \XX, \DD)$-\emph{well-placed} if they can be placed in $\HH$ such that
\begin{itemize}
\item each track's queue number is less than $\QQ$,

\item the $X$-crossing number between any two tracks is less than $\XX$,

\item the distance-number in $\HH$ is less than $\DD$, and

\item $\GG$ can be placed as sequential regions in $\HH$;
The sequential regions are denoted as $\tilde{\WW}_{\HH}(\GG)$.
\end{itemize}
\end{definition}

\begin{theorem}\label{thm:wrap}
If a composite-layerlike graph $\GG$ is an $(\QQ, \XX, \DD)$-well-placed layout in $\HH'$, then $\GG$ can be placed as an $(\QQ, \XX, \DD)$-well-placed layout on $2\DD$ tracks in $\HH$.
\end{theorem}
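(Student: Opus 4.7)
The natural approach is the classical ``wrap-around'' technique for track layouts with bounded distance. Given the $(\QQ,\XX,\DD)$-well-placed layout of $\GG$ in $\HH'$, I would define a new track assignment in $\HH$ by
\[
   \LL_{\HH}(v) \;:=\; \LL_{\HH'}(v) \bmod 2\DD,
\]
and inherit, on each resulting track of $\HH$, the left-to-right order of vertices from the sequential-region layout $\tilde{\WW}_{\HH'}(\GG)$. The regions $\tilde{\WW}_{\HH}(\GG)$ are simply the images of the regions $\tilde{\WW}_{\HH'}(\GG)$ under this wrap, in the same sequential order.

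\textbf{Step 1 (edges do not collapse and distance is preserved).} For any edge $e=(u,v)$ the original gap satisfies $|\LL_{\HH'}(u)-\LL_{\HH'}(v)| < \DD$. If $\LL_{\HH}(u)=\LL_{\HH}(v)$, then $\LL_{\HH'}(u)\equiv \LL_{\HH'}(v) \pmod{2\DD}$, forcing the original gap to be a positive multiple of $2\DD$, which contradicts the bound $\DD$. Hence each edge of $\GG$ has two endpoints on distinct tracks of $\HH$, and its new gap (the canonical representative of the residue $\LL_{\HH'}(u)-\LL_{\HH'}(v)$ modulo $2\DD$) is at most the original gap, so the distance-number remains $<\DD$.

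\textbf{Step 2 (queue number and $X$-crossings).} The key observation is that the sequential-region structure confines interactions: two edges from different regions of $\tilde{\WW}_{\HH'}(\GG)$ occupy disjoint horizontal intervals, so after wrapping they cannot nest on a common track of $\HH$, nor can they $X$-cross between two tracks of $\HH$. It therefore suffices to verify the two bounds region by region. Within a single region, every chord or inter-track edge has gap $<\DD<2\DD$, so by Step 1 the two endpoints lie on the unique pair of $\HH'$-tracks whose residues modulo $2\DD$ match the pair of $\HH$-tracks hosting the edge. Consequently, chords on a given $\HH$-track correspond exactly to chords on a single $\HH'$-track within that region (queue number $<\QQ$), and edges between two $\HH$-tracks correspond exactly to edges between a single pair of $\HH'$-tracks within that region ($X$-crossing number $<\XX$).

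\textbf{Main obstacle.} The delicate point is formalizing ``edges in different regions cannot interact after wrapping.'' This needs the horizontal ordering on each wrapped track to interleave the regions in a way that preserves the disjoint-interval structure; concretely, if two vertices from different sequential regions wrap onto the same $\HH$-track, I must ensure the one from the earlier region still precedes the one from the later region. Once this is enforced by the inherited order, nestings and $X$-crossings in $\HH$ are in bijection with their pre-images in $\HH'$, and the four conditions of Definition~\ref{def:well-placed} transfer directly. The rest is bookkeeping to record the wrapped sequential regions $\tilde{\WW}_{\HH}(\GG)$.
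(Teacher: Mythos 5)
Your wrapping rule ($\LL_{\HH}(v)=\LL_{\HH'}(v)\bmod 2\DD$) and your Step 1 coincide with the paper's construction, which moves the vertices of track $(i\times 2\DD+j)$ of $\HH'$ to the right of those of track $((i-1)\times 2\DD+j)$ on track $j$ of $\HH$. The gap is in Step 2, where you localize the verification to the sequential regions $\tilde{\WW}_{\HH'}(\GG)$. Under the block-concatenation order that this wrap forces, two regions do \emph{not} occupy disjoint horizontal intervals on a wrapped track: track $j$ of $\HH$ reads block $0$'s vertices, then block $1$'s, then block $2$'s, and a single region generally contributes vertices to many blocks, so distinct regions interleave. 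Likewise, inside one region that spans more than $2\DD$ layers of $\HH'$, a residue pair does \emph{not} determine a unique pair of $\HH'$-tracks, so your claim that chords on an $\HH$-track ``correspond exactly'' to chords on a single $\HH'$-track within that region fails. You correctly sense all of this in your ``Main obstacle'' paragraph, but asserting that the inherited order ``enforces'' the disjoint-interval structure does not resolve it: that structure cannot be enforced simultaneously with the block order the wrap requires.

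The paper closes exactly this gap by localizing to \emph{blocks} of $2\DD$ consecutive $\HH'$-tracks rather than to regions. Because every edge has gap at most $\DD<2\DD$, an edge landing on the wrapped track pair $(\LL_{\HH}(u),\LL_{\HH}(v))$ can only originate from the pairs $(i\times 2\DD+\LL_{\HH}(u),\,i\times 2\DD+\LL_{\HH}(v))$ as $i$ varies; and since on each wrapped track all of block $i$'s vertices precede all of block $i+1$'s, the edge sets coming from different values of $i$ occupy disjoint horizontal intervals and therefore can neither nest on a common track nor $X$-cross between a pair of tracks. The per-track queue bound $\QQ$ and the pairwise $X$-crossing bound $\XX$ then follow block by block from the hypothesis on $\HH'$, and the track count is $2\DD$ by construction. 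Replacing your region-based localization with this block-based one repairs the argument; the sequential regions play no essential role in this particular theorem.
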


\begin{proof}
Assume that a composite-layerlike graph $\GG$ can
be placed in a ladder $\HH'$
such that
\begin{enumerate}
\item the queue number of each track in $\HH'$ is less than or
equal to $\QQ$,
\item the $X$-crossing number of between any two tracks in $\HH'$ is less than or equal to $\XX$, and
\item the difference $|\LL_{\HH}(u)-\LL_{\HH}(v)|$ is less than or equal to $\DD$ for any edge $e=(u, v)$ in $\HH'$.
\end{enumerate}

Since the total tracks in $\HH'$ could grow beyond constant bound,
we need to wrap $\HH'$ as follows:
move vertices on $(i\times 2\DD + j)$-th track to right of
vertices on $((i-1)\times 2\DD + j)$-th track on the wrapped $\HH$'s $j$-th track.

Now each track $i$ in the wrapped ladder $\HH$,
vertices are from tracks
$(0\times 2\DD + j)$, $(1\times 2\DD + i)$, $(2\times 2\DD + j)$, $\cdots$ in the unwrapped ladder $\HH'$.
And, for each edge $e=(u, v)$ in the wrapped ladder $\HH$,
the edge $e$ comes from pair of tracks
$(0\times 2\DD + \LL_{\HH}(u), 0\times 2\DD + \LL_{\HH}(v))$,
$(1\times 2\DD + \LL_{\HH}(u), 1\times 2\DD + \LL_{\HH}(v))$,
$(2\times 2\DD + \LL_{\HH}(u), 2\times 2\DD + \LL_{\HH}(v))$,
$\cdots$
in the unwrapped ladder $\HH'$.

Because for any edge $e=(u, v)$ in the unwrapped ladder $\HH'$, the difference $|\LL_{\HH}(u)-\LL_{\HH}(v)|$ is at most $\DD$,
only edges on pair tracks
$(0\times 2\DD + \LL_{\HH}(u), 0\times 2\DD + \LL_{\HH}(v))$,
$(1\times 2\DD + \LL_{\HH}(u), 1\times 2\DD + \LL_{\HH}(v))$,
$(2\times 2\DD + \LL_{\HH}(u), 2\times 2\DD + \LL_{\HH}(v))$,
$\cdots$
in the unwrapped ladder $\HH'$
can be placed on the pair tracks $(\LL_{\HH}(u), \LL_{\HH}(v))$ in the wrapped ladder $\HH$.
Also,
for a track $(\LL_{\HH}(u))$ ($\LL_{\HH}(v)$, respectively)
on the wrapped ladder $\HH$, we know that vertices on a track
$i\times 2\DD + \LL_{\HH}(u)$
($i\times 2\DD + \LL_{\HH}(v)$, respectively)
from the unwrapped ladder $\HH'$ are placed at left of
vertices on a track
$(i+1) \times 2\DD + \LL_{\HH}(v)$
($(i+1)\times 2\DD + \LL_{\HH}(u)$, respectively)
from the unwrapped ladder $\HH'$.

Hence
there is no any $X$-crossing edge
between edges from pair tracks
$(i\times 2\DD + \LL_{\HH}(u), i\times 2\DD + \LL_{\HH}(v))$
and
pair tracks
$((i+1)\times 2\DD + \LL_{\HH}(u), (i+1)\times 2\DD + \LL_{\HH}(v))$.

Finally, We can conclude that a composite-layerlike graph $\GG$ can be placed in the wrapped laddder graph $\HH$ such that
\begin{enumerate}
\item the queue number of each track in the wrapped ladder $\HH$ is less than or
equal to $\QQ$,
\item the $X$-crossing number of between any two track in the wrapped ladder $\HH$ is less than or equal to $\XX$, and
\item the number of tracks in the wrapped ladder $\HH$ is $2\DD$.
\end{enumerate}

\end{proof}

\section{A Framework to Construct an $(\QQ, \XX, \DD)$-Well-Placed Layout on Constant Number of Tracks for a Composite-Layerlike Graph $\GG$}\label{sec:perfect-layer-graph-layout}

\begin{algorithm}[ht]
\caption{A Framework to Place an $(\QQ, \XX, \DD)$-Well-Placed Layout on Constant Number Tracks in a Ladder $\HH$ for a Composite-Layerlike Graph $\GG(\WW^{\MM})$}
\label{alg:framework}

\KwIn{A composite-layerlike graph $\GG(\WW^{\MM})$.}

Place the $\WW^{\MM}$'s root on the first track in $\HH$\;

Place the contiguous layers $(2, 3, \cdots)$ of $\WW^{\MM}$ at right of $\WW^{\MM}$'s root on the contiguous $(\ZZ+2, \ZZ+3, \cdots)$ tracks in $\HH$\;

Add the maximum region $\WW^{\MM}$ into the empty first-in-first-out queue $\tilde{\YY}$\;

\While{$\tilde{\YY}$ is not empty}
{

Let $\WW$ rooted at $r$ be the first region in $\tilde{\YY}$\;



Find sequential skeletons $(\Psi_1, \Psi_2, \cdots)$ inside the region $\WW$\;

Place the sequential skeletons $(\Psi_1, \Psi_2, \cdots)$ orderly at the rightmost part in $\HH$ and
from the $(\LL_{\HH}(r)+2\ZZ)$-th track in $\HH$
where $\LL_{\HH}(r)$ is the track number of the vertex $r$ in $\HH$\;

Add the maximum subsequential regions of $\tilde{\WW}_{\HH}(\tilde{\Psi}(\WW))$ into $\tilde{\YY}$
such that each region of the maximum subsequential regions does not root at $r$\;

Remove the region $\WW$ from $\tilde{\YY}$\;

}
Wrap $\HH$\;
\end{algorithm}

In this section, we provide a framework in Algorithm \ref{alg:framework} to place a composite-layerlike graph $\GG$ as $(\QQ, \XX, \DD)$-well-placed layout in $\HH$ on constant number of tracks.
Before we describe the framework, we introduce  a structure \emph{skeleton}
as follows:
\begin{definition}\label{def:skeleton}
Consider a region $\WW=(\BB^L, \BB^R)$ rooted at a vertex $r$. A subgraph $\Psi$ of a composite-layerlike graph $\GG(\WW)$ is called a \emph{skeleton} of $\WW$ if
$\Psi$ consists of
\begin{enumerate}

\item the $\WW$'s root $r$,

\item sequential regions $\tilde{\WW}(\Psi)$ such that
there are subsequential regions $(\WW^L_1, \WW^L_2, \cdots,$ $\WW^M_1,$ $\cdots,$ $\WW^M_m,$ $\WW^R_1, \WW^R_2, \cdots)$ $\subseteq \tilde{\WW}(\Psi)$
where
(1) for each vertex $u\in \BB^L$, each $u$'s child is inside a region in $(\WW^L_1, \WW^L_2, \cdots)$,
(2) for each vertex $v\in \BB^R$, each $v$'s child is inside a region in $(\WW^R_1, \WW^R_2, \cdots)$, and
(3) each $r$'s child is inside a region in $(\WW^M_1, \WW^M_2, \cdots, \WW^M_m)$.
\end{enumerate}

\end{definition}

\begin{figure}[t]
\includegraphics[width=1\textwidth, angle =0]{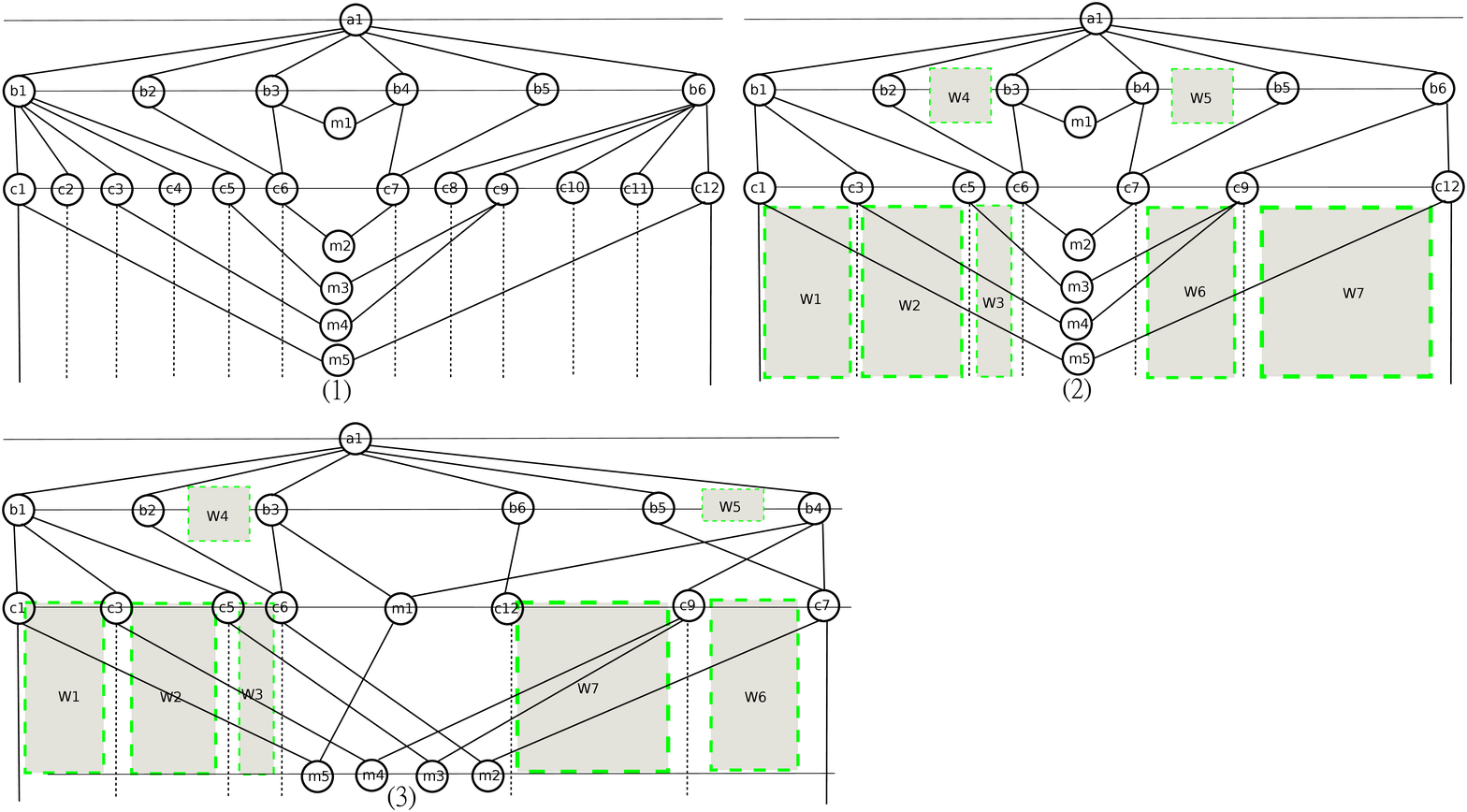}
\caption{(1) shows an example of a region $\WW=(\BB^L, \BB^R)=((a_1, b_1, c_1, \cdots), (a_1, b_6, c_{12}, \cdots))$} and the subgraph induced by $\WW$;
(2) shows a $\WW$'s skeleton that consists of sequential regions
$\WW_1=((b_1, c_1, \cdots), (b_1, c_3, \cdots))$,
$\WW_2=((b_1, c_3, \cdots), (b_1, c_5, \cdots))$,
$\WW_3=((a_1, b_1, c_5, \cdots), (a_1, b_2, c_6, \cdots))$,
$\WW_4=((a_1, b_2, c_6), (a_1, b_2, c_6))$,
$\WW_5=((a_1, b_4, c_7), (a_1, b_5, c_7))$,
$\WW_6=((a_1, b_5, c_7, \cdots), (a_1, b_6, c_9, \cdots))$,
$\WW_7=((b_6, c_9, \cdots), (b_6, c_{12}, \cdots))$.
In addition, the region $\WW_1$ consists of two subregions
$\WW'_1=((b_1, c_1, \cdots), (b_1, c_2, \cdots))$ and
$\WW'_2=((b_1, c_2, \cdots), (b_1, c_3, \cdots))$.
The region $\WW_2$ consists of two subregions
$\WW'_3=((b_1, c_3, \cdots), (b_1, c_4, \cdots))$ and
$\WW'_4=((b_1, c_4, \cdots), (b_1, c_5, \cdots))$.
the region $\WW_7$ consists of three subregions
$\WW'_5=((b_6, c_9, \cdots), (b_6, c_{10}, \cdots))$,
$\WW'_6=((b_6, c_{10}, \cdots), (b_6, c_{11}, \cdots))$ and
$\WW'_7=((b_6, c_{11}, \cdots), (b_6, c_{12}, \cdots))$;
(3): the sequential regions $(\WW_1, \WW_2, \WW_3, \WW_4, \WW_5, \WW_6, \WW_7)$ are placed
as:
$(\WW_1, \WW_2, \WW_3, \WW_4, \WW_7, \WW_6, \WW_5)$
in a ladder $\HH$;
\label{fig:skeleton}
\vspace{-0.2in}
\end{figure}



\begin{figure}[t]
\includegraphics[width=1\textwidth, angle =0]{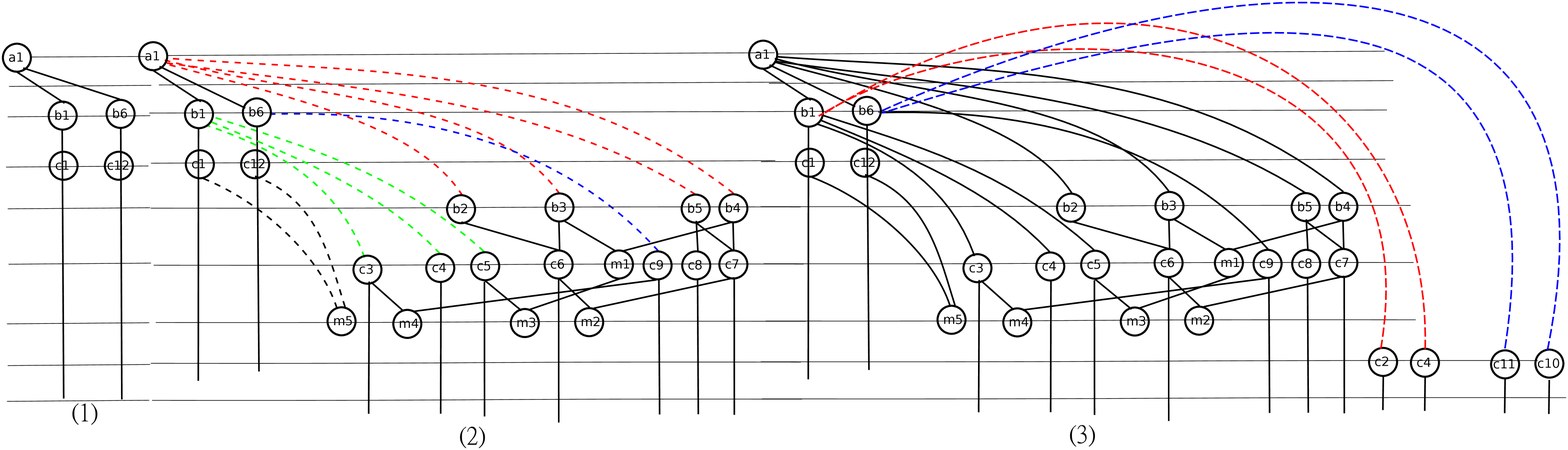}
\caption{(1), (2) and (3) show an example to place a region $\WW$ and $\WW$'s skeleton
of Fig. \ref{fig:skeleton} in a ladder $\HH$ by Algorithm \ref{alg:framework}.
The $\WW$'s skeleton is placed at right of the region $\WW$ in $\HH$ and
starts to place $\WW$'s skeleton from the track three
in $\HH$.
Because $\WW$'s distance-number is one, place the $\WW$'s skeleton from the track three
doesn't make any $X$-crossing edge
between $\WW$ and $\WW$'s skeleton.
After the $\WW$'s skeleton is placed,
the subgraph inside each region of $\WW$'s skeleton
is placed from the track five.
Moreover,
the placement cannot
make $X$-crossing edges with the previous one
in a ladder $\HH$.
}
\label{fig:framework}
\vspace{-0.2in}
\end{figure}

Firstly, this framework place $\WW^M$ on the contiguous tracks from the first track in $\HH$.
Next, this framework also consists of a loop and each iteration of the loop in Algorithm \ref{alg:framework}
starts from the first region $\WW$ rooted at $r$ of the first-in-first-out queue $\tilde{\YY}$
and executes the following steps:
find sequential skeletons $(\Psi_1, \Psi_2, \cdots)$ such that
each skeleton $\Psi_i, i\geq 1,$ roots at the vertex $r(\WW)$ and
each $r(\WW)$'s child inside $\WW$ is a vertex in a skeleton of $(\Psi_1, \Psi_2, \cdots)$.
Also, the sequential skeletons $(\Psi_1, \Psi_2, \cdots)$ are placed orderly at the rightmost part in $\HH$
and starts from the track $(\LL_{\HH}(r(\WW))+2\ZZ)$ in $\HH$.

Before we prove the correctness of Algorithm \ref{alg:framework} in Subsection \ref{subsec:correctness-framework},
we assume the following conjecture in advance.
This conjecture is proved in Section \ref{sec:skeleton}.
\begin{conjecture}\label{conj:skeleton}
Given a region $\WW$ rooted at a vertex $r$,
we have a skeleton $\Psi$ that
$\Psi$ can have sequential regions $\tilde{\WW}(\Psi)$. And, the skeleton $\Psi$ can be placed in $\HH$ as $\Psi_{\HH}$ such that
$\Psi_{\HH}$ is $(\QQ, \XX, \JJ)$-well-placed in $\HH$.
\end{conjecture}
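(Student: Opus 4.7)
The plan is to construct the skeleton $\Psi$ explicitly from the combinatorial structure of the region $\WW$ and then verify the layout properties via the planar embedding. First I would identify the sequential regions of $\Psi$ by walking the left boundary $\BB^L(\WW)$ from $r$ downward: at each vertex $u \in \BB^L(\WW)$, the children of $u$ that lie interior to $\WW$ bound a collection of maximal subregions, which I enumerate from top to bottom as $(\WW^L_1, \WW^L_2, \ldots)$. The symmetric walk down $\BB^R(\WW)$ produces $(\WW^R_1, \WW^R_2, \ldots)$, and the children of $r$ itself, together with the maximal subregions each of them roots, yield $(\WW^M_1, \ldots, \WW^M_m)$. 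Because the embedding of $\GG$ is planar, these subregions are pairwise separated by shared boundary paths, and their concatenation in the natural left-to-middle-to-right order constitutes a valid sequential decomposition $\tilde{\WW}(\Psi)$ in the sense of Definition~\ref{def:skeleton}.

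Next I would construct the layout $\Psi_{\HH}$ by placing $r$ on a single track and then placing the root-paths of the sequential regions immediately to the right, in the order chosen above, with each root spreading down at most a constant number of additional tracks so that every region occupies a constant-width vertical corridor. All chords of $\Psi$ either lie along a boundary path or connect $r$ directly to one of its middle roots, so they are short in both the vertical and horizontal senses. I would bound the per-track queue number $\QQ$ by observing that chords on a single track come from nested boundary segments of a single planar face, and the planarity of $\GG$ limits their nesting to a constant. For the $X$-crossings I would argue directly from the plane embedding: two skeleton edges sharing a pair of tracks can cross only if their endpoints interleave, but this is forbidden by the way the boundary walk respects the cyclic order of children at each vertex. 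The distance-number is bounded by a small constant $\JJ$ since every skeleton edge spans at most one layer of $\GG$ and each per-region corridor has constant height.

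The main obstacle will be the $X$-crossing bound, because the skeleton must coexist with the subregions that Algorithm~\ref{alg:framework} later inserts below it. To handle this I would prove the invariant that the $2\ZZ$-track offset used in Algorithm~\ref{alg:framework} isolates each inserted subregion inside its own vertical corridor, so that future recursive insertions cannot introduce crossings with any already-placed skeleton edge. Combined with the plane-embedding argument for the skeleton itself, this yields absolute constants $\QQ$, $\XX$, and $\JJ$ independent of the size of $\WW$, and the sequential-regions property required by Definition~\ref{def:well-placed} then follows directly from the construction of $\tilde{\WW}(\Psi)$.
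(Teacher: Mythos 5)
Your proposal has a genuine gap at the point where all the real difficulty of this conjecture lives: the bounds on the queue number $\QQ$ and the $X$-crossing number $\XX$. You assert that chords on a single track ``come from nested boundary segments of a single planar face, and the planarity of $\GG$ limits their nesting to a constant,'' and that interleaving of endpoints ``is forbidden by the way the boundary walk respects the cyclic order of children at each vertex.'' Neither claim is true as stated: planarity alone does not bound nesting or crossing under an arbitrary vertex ordering (if it did, the constant queue number of planar graphs would be immediate, which is precisely the open problem this paper attacks). A skeleton is not merely the two boundary paths plus edges from $r$ to its middle roots; in the paper it is the union of two forest-like families $\clubsuit_{\BB^L}$ and $\clubsuit_{\BB^R}$ of \emph{raising fans}, and each fan has a lower vertex of unbounded degree joined to a long run of upper vertices. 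If you place all the sequential regions ``in the natural left-to-middle-to-right order,'' as you propose, the arms of consecutive fans in a raising fan nest arbitrarily deeply on a track. The paper's construction avoids this only by a deliberate asymmetry: the left sequential regions $\tilde{\WW}^L(\tilde{\FF})$ are placed in order while the right sequential regions $\tilde{\WW}^R(\tilde{\FF})$ and the middle path $\MM$ are placed in \emph{reverse} order (see the order $(\WW_1,\WW_2,\WW_3,\WW_4,\WW_7,\WW_6,\WW_5)$ in Fig.~\ref{fig:skeleton} and Theorem~\ref{thm:fan-path-region-order}). That reversal, together with the contiguity properties of rightward-outer and leftward-outer fans along the ancestor--descendant path of black-holes in $\clubsuit^R$ (Properties~\ref{prop:left-contiguous} and~\ref{prop:right-contiguous}, Lemma~\ref{lem:layout-boundary}), is what actually yields constant $\QQ$ and $\XX$; your proposal contains no substitute for it.

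A secondary issue: your ``main obstacle'' paragraph about the $2\ZZ$-track offset addresses the interaction between a skeleton and the subregions inserted later by Algorithm~\ref{alg:framework}. That interaction is handled separately in Lemmas~\ref{lem:noncrossing} and~\ref{lem:gap_number} and is not part of this conjecture, which concerns the internal layout of a single skeleton. The distance-number bound $\JJ$ is the one part of your argument that survives, since skeleton edges do span only adjacent layers; but without a correct vertex ordering the queue and crossing bounds fail, so the proof as proposed does not go through.
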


\subsection{Sequential Skeletons $(\Psi_1, \Psi_2, \cdots)$ inside a Region $\WW$ Rooted at a Vertex $r$}\label{subsec:correctness-framework}

In this subsection, we start to show two lemmas, the first one shows how to find sequential skeletons $(\Psi_1, \Psi_2, \cdots)$ inside a region $\WW$ rooted at $r$ such that
each $r$'s child in $\WW$ is a vertex of a skeleton of $(\Psi_1, \Psi_2, \cdots)$,
And the second one shows how to place the sequential skeletons
$(\Psi_1, \Psi_2, \cdots)$ in $\HH$ and
$(\Psi_1, \Psi_2, \cdots)$ can partition the region $\WW$ into sequential regions such that
$(\Psi_1, \Psi_2, \cdots)$ are $(\QQ, \XX, \JJ)$-well-placed in $\HH$.

In the following lemma, we give a constructive proof to find sequential skeletons $(\Psi_1, \Psi_2, \cdots)$ consisting of all $r$'s children inside a region $\WW$.

\begin{lemma}\label{lem:partition}
For a region $\WW=(\BB^L, \BB^R)$ rooted at a vertex $r$, sequential skeletons $\tilde{\Psi}(\WW)=$
$(\Psi_1, \Psi_2, \cdots)$ can be constructed such that
each $r$'s child is a vertex of some skeleton $\Psi$ of $\tilde{\Psi}(\WW)$.
\end{lemma}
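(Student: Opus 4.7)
The plan is to construct $\tilde{\Psi}(\WW)$ constructively by sweeping the children of $r$ from left to right and grouping them into skeletons on the fly. First, I would enumerate the children of $r$ lying strictly inside $\WW$ as $c_1,c_2,\ldots,c_k$, ordered according to the plane embedding of $\GG$. Together with the two vertices on $\BB^L$ and $\BB^R$ that are adjacent to $r$, these children cut the second layer of $\WW$ into $k+1$ consecutive intervals along the second layer of $\WW$. The intuition is that each $c_i$ should end up as the root (or an interior vertex) of exactly one middle subregion $\WW^M_j$ in some skeleton $\Psi_j$.

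Second, for each pair of consecutive ``cutters'' on the second layer (whether boundary vertices of $\WW$ or children of $r$), I would define a candidate subregion: its left boundary descends along the layered structure hanging below the left cutter, its right boundary descends below the right cutter, and its interior is everything strictly between. These candidates fall naturally into three families: the leftmost ones bounded on the left by a portion of $\BB^L$ (these become the $\WW^L_i$), the rightmost ones bounded on the right by a portion of $\BB^R$ (these become the $\WW^R_i$), and the middle ones bounded between two consecutive children of $r$ (these become the $\WW^M_i$). Each child $c_i$ of $r$ is either itself a root of a middle subregion or lies on the separating boundary between two middle subregions, so every $c_i$ is accounted for.

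Third, I would bundle these subregions into skeletons. Whenever the consecutive subregions all lie inside a single layerlike subgraph $\Pi$ of $\GG$ together with its recursive bowl and down-pointing-triangle substructures, they can be absorbed into a single skeleton $\Psi_j$ meeting all three clauses of Definition~\ref{def:skeleton}. As soon as the sweep crosses into a new substructure, for instance when the next child $c_{i+1}$ is separated from $c_i$ by a bowl $\heartsuit$ or a new down-pointing triangle $\triangledown$ at the second layer of $\WW$, I would close $\Psi_j$ and open $\Psi_{j+1}$ starting from $r$ again. Iterating until every $c_i$ has been absorbed into some $\Psi_j$ produces the required sequence.

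The main obstacle will be verifying that each candidate subregion is genuinely a region in the sense of Section~\ref{sec:prelim}: that its two candidate boundaries are paths walking along contiguous layers and that together they separate the induced subgraph from the rest of $\GG$. This is where the recursive composite-layerlike definition of $\GG$ must be exploited, with the sub-composite-layerlike graphs attached to each bowl and each down-pointing triangle supplying the layers and paths that extend the boundary below the second layer; the separator property then follows by induction on the recursion depth of $\GG$. A secondary technical point is that two consecutive middle subregions $\WW^M_i,\WW^M_{i+1}$ must share their separating boundary path so that they are truly sequential rather than overlapping, and I would ensure this by choosing each child $c_i$ itself, viewed as a degenerate one-vertex ``layer'', to lie on both boundaries in common.
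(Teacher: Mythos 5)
Your construction is genuinely different from the paper's, and it has a gap at exactly the point you flag as the ``main obstacle.'' The paper does not try to cut the region at every child of $r$ in one pass. Instead it applies Theorem~\ref{thm:skeleton} once to obtain a single skeleton $\Psi$ (built from the rightward-outer and leftward-outer fans of $\BB^L$ and $\BB^R$), observes that by Definition~\ref{def:skeleton} every child of $r$ not already a vertex of $\Psi$ lies \emph{inside} one of the middle regions $(\WW^M_1,\cdots,\WW^M_m)$ that are again rooted at $r$, and then recurses on those middle regions, producing $\Psi_1,\Psi_2,\cdots$ round by round until no uncovered child remains. The sequence of skeletons in the lemma is the trace of this recursion, and its length is governed by how deeply the children of $r$ are nested, not by a left-to-right sweep of the second layer.

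The concrete flaw in your version is the assumption that the children of $r$ inside $\WW$ all sit on ``the second layer of $\WW$'' and can simultaneously act as cutters from which boundary paths descend. In a composite-layerlike graph the children of $r$ are distributed across the nested frames attached to bowls and down-pointing triangles, so two consecutive children $c_i,c_{i+1}$ need not lie on a common layer, the ``interval between them'' is not well defined, and a single path walking along contiguous layers below $c_i$ that also separates $\GG$ generally does not exist for all children at once. This is precisely why Definition~\ref{def:skeleton} only requires each child of $r$ to be \emph{inside} a middle region rather than \emph{on} the skeleton, and why the lemma needs an iterated construction at all: one round of fan-based cutting reaches only the outermost children, and the rest must wait for later rounds. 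Your closing remark that the separator property ``follows by induction on the recursion depth of $\GG$'' is the statement that needs proving, not a proof; without the fan/raising-fan machinery (or the recursion on middle regions that replaces your sweep) the candidate subregions are not regions in the sense of Section~\ref{sec:prelim}, and the bundling step has nothing well-defined to bundle.
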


\begin{proof}
From Theorem \ref{thm:skeleton}, there exists a skeleton $\Psi$ for the region $\WW$ and
$\Psi$ can partition the region $\WW$ into  sequential regions
$\tilde{\WW}(\Psi)=(\WW_1, \WW_2, \cdots)$.
From Definition \ref{def:skeleton}, each $r$'s children not included into $\Psi$ is inside in a region of $\tilde{\WW}(\Psi)$.
Let $(\WW^M_1, \WW^M_2, \cdots, \WW^M_m)$ be the maximal subsequential regions of
$\tilde{\WW}(\Psi)$ such that each $r$'s child not included in $\Psi$ is inside
a region of $(\WW^M_1, \WW^M_2, \cdots, \WW^M_m)$ rooted at the vertex $r$.

Now for each region $\WW^M_i\in (\WW^M_1, \WW^M_2, \cdots, \WW^M_m)$,
a skeleton $\Psi_i$ for the region $\WW_i$ can be found.
Then by the above discussion,
for each skeleton $\Psi_i, 1\leq i\leq m$, the skeleton $\Psi_i$ can partition the region $\WW^M_i$
into sequential regions $\tilde{\WW}(\Psi_i)$. And,
each $r$'s children inside $\WW^M_i$ not included into $\Psi_i$ is inside
a region of $\tilde{\WW}(\Psi_i)$.
the same partition can be repeatedly executed till each $r$'s child is included into a skeleton.
Hence we can conclude that given a region $\WW$ rooted at a vertex $r$,
we can have sequential skeletons $\tilde{\Psi}(\WW)$
such that each $r$'s child is a vertex of some skeleton $\Psi$ in $\tilde{\Psi}(\WW)$.
\end{proof}

Consider sequential regions $(\WW_1, \WW_2, \cdots)$ in $\HH$.
The sequential regions $(\WW_1, \WW_2, \cdots)$ and the sequential subgraphs $(\tilde{\Psi}_{\HH}(\WW_1), \tilde{\Psi}_{\HH}(\WW_2), \cdots)$ are called an \emph{ordered} layout in $\HH$ if the sequential regions $(\WW_1, \WW_2, \cdots)$ are at left of the sequential subgraphs $(\tilde{\Psi}_{\HH}(\WW_1), \tilde{\Psi}_{\HH}(\WW_2), \cdots)$ in $\HH$, and for any two regions $\WW_i$ and $\WW_j$ in $(\WW_1, \WW_2, \cdots)$, the region $\WW_i$ is at left of the region $\WW_j$ if and only if the subgraph $\tilde{\Psi}_{\HH}(\WW_i)$ is at left of the subgraph $\tilde{\Psi}_{\HH}(\WW_j)$ in $\HH$.
The task of this subsection is to show that if the sequential regions $(\WW_1, \WW_2, \cdots)$ and the sequential subgraphs $(\tilde{\Psi}_{\HH}(\WW_1), \tilde{\Psi}_{\HH}(\WW_2), \cdots)$ in $\HH$ are an ordered layout in $\HH$, then the layout is also $(\QQ, \XX, \DD)$-well-placed in $\HH$.
Now consider sequential edges $(\EE_1, \EE_2, \cdots)$ where each $\EE_i, i\geq 1,$ are edges connected between the region $\WW_i$ and the subgraph $\tilde{\Psi}_{\HH}(\WW_i)$.

Now we prove that the sequential skeletons $\tilde{\Psi}(\WW)$
can be $(\QQ, \XX, \JJ)$-well-placed on contiguous tracks in $\HH$.
From Theorem \ref{thm:skeleton}, we can have a skeleton $\Psi$ for the region $\WW$
such that
$\Psi$ can be $(\QQ, \XX, \JJ)$-well-placed in $\HH$ as $\Psi_{\HH}$ and
have sequential regions $\tilde{\WW}_{\HH}(\Psi)$ in $\HH$.
Then we can pick the maximum subsequential regions $(\WW^M_1, \WW^M_2, \cdots, \WW^M_m)$ of $\tilde{\WW}_{\HH}(\Psi)$
such that each $r$'s child is inside a region of $(\WW^M_1, \WW^M_2, \cdots, \WW^M_m)$.
And, for the sequential regions $(\WW^M_1, \WW^M_2, \cdots, \WW^M_m)$,
we can have corresponding sequential skeletons $(\Psi_1, \Psi_2, \cdots, \Psi_m)$
such that $(\Psi_1, \Psi_2, \cdots, \Psi_m)$ can be $(\QQ, \XX, \JJ)$-well-placed as
$(\tilde{\WW}_{\HH}(\Psi_1), \tilde{\WW}_{\HH}(\Psi_2), \cdots, \tilde{\WW}_{\HH}(\Psi_m))$ in $\HH$.
Now we have placed $(\Psi, \Psi_1, \Psi_2, \cdots, \Psi_m)$ in $\HH$ as sequential regions
$(\tilde{\WW}_{\HH}(\Psi),$
$\tilde{\WW}_{\HH}(\Psi_1),$
$\tilde{\WW}_{\HH}(\Psi_2),$
$\cdots,$
$\tilde{\WW}_{\HH}(\Psi_m))$.
Since regions in $(\tilde{\WW}_{\HH}(\Psi_1),$ $\tilde{\WW}_{\HH}(\Psi_1),$
$\cdots,$
$\tilde{\WW}_{\HH}(\Psi_m))$ are mutually disjoint and
$(\WW^M_1,$
$\WW^M_2,$
$\cdots,$
$\WW^M_m,$
$\tilde{\WW}_{\HH}(\Psi_1),$
$\tilde{\WW}_{\HH}(\Psi_1),$
$\cdots,$
$\tilde{\WW}_{\HH}(\Psi_m))$ are an ordered layout,
edges $\EE_i$ between $\WW^M_i$ and $\tilde{\WW}_{\HH}(\Psi_i)$ and
edges $\EE_j$ between $\WW^M_j$ and $\tilde{\WW}_{\HH}(\Psi_j)$ don't nest in $\HH$.
Hence the layout in $\HH$ is $(\QQ, \XX, \JJ)$-well-placed.
We can repeat the above steps until each $r$'s child inside the region $\WW$ is a vertex of some skeleton of
the sequential skeletons $\tilde{\Psi}(\WW)$ and have sequential regions $\tilde{\WW}_{\HH}(\tilde{\Psi}(\WW))$ in $\HH$.
From the above discussion, we immediately have the following lemma.

\begin{lemma}\label{lem:skeleton-wrap}
For a region $\WW$ rooted at a vertex $r$,
sequential skeletons $\tilde{\Psi}(\WW)$ can be placed as
the new order $\tilde{\Psi}_{\HH}(\WW)$
on contiguous tracks in $\HH$
such that $\tilde{\Psi}_{\HH}(\WW)$ are $(\QQ, \XX, \JJ)$-well-placed in $\HH$ where
the three numbers $\QQ$, $\XX$ and $\JJ$ are constant-bound.
\end{lemma}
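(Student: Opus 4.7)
The plan is to iteratively apply Conjecture \ref{conj:skeleton} while maintaining an ordered layout of the regions produced. First I would invoke Conjecture \ref{conj:skeleton} on the region $\WW$ itself to obtain a single skeleton $\Psi$ that is $(\QQ, \XX, \JJ)$-well-placed in $\HH$ and whose induced sequential regions $\tilde{\WW}_{\HH}(\Psi)$ appear in left-to-right order. By Lemma \ref{lem:partition}, the children of $r$ not yet incorporated into $\Psi$ all lie inside a maximum subsequential family $(\WW^M_1, \WW^M_2, \ldots, \WW^M_m)$ of $\tilde{\WW}_{\HH}(\Psi)$, each of which is again rooted at $r$; this is exactly the input shape needed to recurse.

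Next I would recurse on each $\WW^M_i$, obtaining skeletons $\Psi_1, \ldots, \Psi_m$ and placing their well-placed layouts $\tilde{\WW}_{\HH}(\Psi_i)$ immediately to the right of the material already laid out, on the same contiguous set of tracks used for $\Psi$. Because each $\WW^M_i$ sits strictly to the left of the corresponding $\tilde{\WW}_{\HH}(\Psi_i)$, and because the $\WW^M_i$'s are pairwise disjoint in the horizontal ordering, the global arrangement satisfies the ordered-layout condition introduced just before the lemma. Iterating this construction until every child of $r$ has been absorbed into some $\Psi_k$ yields the sequential skeletons $\tilde{\Psi}(\WW)$ in their intended horizontal order $\tilde{\Psi}_{\HH}(\WW)$.

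The crux is then verifying that this ordered concatenation preserves the three parameters $(\QQ, \XX, \JJ)$. For any two index pairs $i\neq j$, the edges $\EE_i$ between $\WW^M_i$ and $\tilde{\WW}_{\HH}(\Psi_i)$ and the edges $\EE_j$ between $\WW^M_j$ and $\tilde{\WW}_{\HH}(\Psi_j)$ cannot nest, because the supports of $\EE_i$ and $\EE_j$ are disjoint horizontal intervals on every track they touch; for the same reason no new $X$-crossing can form across the boundary between consecutive blocks. Inside each block, the bounds $\QQ$, $\XX$, $\JJ$ are inherited directly from Conjecture \ref{conj:skeleton}, and the distance number is unaffected because placement is confined to the same constant-width range of tracks throughout.

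The main obstacle I anticipate is arguing that the recursion terminates while keeping $\QQ$, $\XX$, $\JJ$ genuinely constant rather than accumulating with the recursion depth. To handle this I would note that each recursive call strictly removes at least the children of $r$ captured by the current skeleton, so the nested sequence of subregions still rooted at $r$ is finite; since every recursive instance reuses the same ladder and inherits the same constant bounds from Conjecture \ref{conj:skeleton}, the final layout $\tilde{\Psi}_{\HH}(\WW)$ is $(\QQ, \XX, \JJ)$-well-placed with the same absolute constants, as required.
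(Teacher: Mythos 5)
Your proposal follows essentially the same route as the paper: obtain one $(\QQ,\XX,\JJ)$-well-placed skeleton for $\WW$ from the single-skeleton result (Conjecture \ref{conj:skeleton}, proved as Theorem \ref{thm:skeleton}), take the maximal subsequential regions of its induced partition that still contain children of $r$, recurse on them while keeping everything in an ordered layout so that the connecting edge sets $\EE_i$, $\EE_j$ occupy disjoint horizontal intervals and neither nest nor $X$-cross, and iterate until all of $r$'s children are absorbed. Your added remarks on termination and on the constants not accumulating with recursion depth only make explicit what the paper leaves implicit, so the argument is the same in substance.
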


In the next lemma,
we prove that when a sequential regions $\tilde{\WW}_{\HH}(\tilde{\Psi}(\WW))$ are placed in $\HH$ in Algorithm \ref{alg:framework},
all edges connecting between $\WW$ and $\tilde{\WW}_{\HH}(\tilde{\Psi}(\WW))$ cannot make $X$-crossing with the existing layout in $\HH$.

\begin{lemma}\label{lem:noncrossing}
Given sequential regions $(\WW_1, \WW_2, \cdots)$ in $\HH$
where each region $\WW_i, i\geq 1,$ roots at a vertex $r_i$.
If each sequential regions $\tilde{\WW}_{\HH}(\tilde{\Psi}(\WW_i)), i\geq 1,$ is placed at right of $(\WW_1, \WW_2, \cdots,$ $\tilde{\WW}_{\HH}(\tilde{\Psi}(\WW_1)),$ $\cdots,$ $\tilde{\WW}_{\HH}(\tilde{\Psi}(\WW_{i-1}))$ and
from the track $\LL_{\HH}(r_i)+2\ZZ$ in $\HH$,
then (1): the edges set
$\EE_i$ connecting between $\WW_i$ and $\tilde{\WW}_{\HH}(\tilde{\Psi}(\WW_i))$ don't have $X$-crossing edges with the layout
$(\WW_1, \WW_2, \cdots,$ $\tilde{\WW}_{\HH}(\tilde{\Psi}(\WW_1)),$ $\cdots,$ $\tilde{\WW}_{\HH}(\tilde{\Psi}(\WW_i))$ in $\HH$.
\end{lemma}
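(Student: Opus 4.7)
The plan is to fix an arbitrary edge $e=(u,v)\in \EE_i$, with $u\in \WW_i$ and $v\in \tilde{\WW}_{\HH}(\tilde{\Psi}(\WW_i))$, together with an arbitrary second edge $e'$ already present in the layout built so far, and verify that $e$ and $e'$ cannot be $X$-crossing. Because an $X$-crossing is a purely combinatorial condition on the left-to-right ordering of four endpoints lying on two tracks, I would proceed by a case analysis on where the two endpoints of $e'$ lie, and in each case exhibit a pair of endpoints whose ordering forbids the reversal required for an $X$-crossing.

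First I would dispose of the case in which $e'$ is entirely contained in one previously placed block: inside some region $\WW_j$ with $j\leq i$, inside a skeleton layout $\tilde{\WW}_{\HH}(\tilde{\Psi}(\WW_j))$ for $j<i$, or inside the newly placed $\tilde{\WW}_{\HH}(\tilde{\Psi}(\WW_i))$ itself. In each of these sub-cases the ordered-layout convention---all of $\WW_1,\WW_2,\ldots$ are laid out first, each $\tilde{\WW}_{\HH}(\tilde{\Psi}(\WW_j))$ sits strictly to the right of all earlier material, and $\tilde{\WW}_{\HH}(\tilde{\Psi}(\WW_i))$ is appended at the extreme right---confines both endpoints of $e'$ to a horizontal interval that is disjoint from at least one of $\{u,v\}$. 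Hence on any two candidate tracks the left-to-right order of the four endpoints cannot reverse, ruling out an $X$-crossing.

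The main case is $e'=(u',v')\in \EE_j$ with $j<i$, $u'\in \WW_j$ and $v'\in \tilde{\WW}_{\HH}(\tilde{\Psi}(\WW_j))$. Here I would invoke Conjecture \ref{conj:skeleton}: $\tilde{\WW}_{\HH}(\tilde{\Psi}(\WW_j))$ is $(\QQ,\XX,\JJ)$-well-placed, so it occupies a narrow band of tracks starting at $\LL_{\HH}(r_j)+2\ZZ$, and symmetrically $v$ lies in the band starting at $\LL_{\HH}(r_i)+2\ZZ$. The $2\ZZ$ vertical offset chosen by Algorithm \ref{alg:framework} is precisely what guarantees that the track of $u$ (near $\LL_{\HH}(r_i)$) cannot coincide with the track of $v'$ (at or above $\LL_{\HH}(r_j)+2\ZZ$), and symmetrically for $u'$ versus $v$. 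Consequently the only way $e$ and $e'$ can share a pair of tracks is the configuration $\{u,u'\}$ on a common upper track and $\{v,v'\}$ on a common lower track. But then the horizontal placement rule forces $u$ strictly to the right of $u'$ (because $\WW_i$ is placed after $\WW_j$) and $v$ strictly to the right of $v'$ (because $\tilde{\WW}_{\HH}(\tilde{\Psi}(\WW_i))$ is appended to the right of $\tilde{\WW}_{\HH}(\tilde{\Psi}(\WW_j))$); both orderings agree, so $e$ and $e'$ are not $X$-crossing.

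The main obstacle I anticipate is the bookkeeping in this last case, specifically verifying that the ``$u$-track'' of an edge in $\EE_i$ and the ``$v'$-track'' of an earlier edge in $\EE_j$ are indeed separated by the $2\ZZ$ cushion. This hinges on combining (i) the $(\QQ,\XX,\JJ)$-well-placedness of each skeleton, which bounds how many tracks a single skeleton layout spans by a constant absorbed into $\ZZ$, with (ii) the placement rule in Algorithm \ref{alg:framework}, which starts every new skeleton exactly $2\ZZ$ tracks below its root. Once this vertical separation is pinned down, the case analysis above completes the proof.
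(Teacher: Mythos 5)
Your ``main case'' ($e'\in\EE_j$ with $j<i$) matches the first half of the paper's argument: the ordered placement of the regions and of their skeleton blocks forces the left endpoints and the right endpoints of $\EE_i$ and $\EE_j$ into the same relative order, so those two edge sets cannot $X$-cross. The genuine gap is in your first case. For an edge $e'$ lying entirely inside an already-placed skeleton block $\tilde{\WW}_{\HH}(\tilde{\Psi}(\WW_j))$ with $j<i$ (and likewise for $e'$ inside $\WW_i$ itself, or inside the new block $\tilde{\WW}_{\HH}(\tilde{\Psi}(\WW_i))$), the horizontal-interval argument does not work: that block sits horizontally \emph{between} $\WW_i$, which contains $u$, and $\tilde{\WW}_{\HH}(\tilde{\Psi}(\WW_i))$, which contains $v$. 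So if the four endpoints did land on a common pair of tracks, you would have $u$ to the left of $u'$ on one track and $v'$ to the left of $v$ on the other --- exactly the reversed order that defines an $X$-crossing. ``Disjoint from at least one of $\{u,v\}$'' does not forbid the reversal; it only relocates one endpoint.

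What actually excludes these configurations is a vertical fact, and it is the second half of the paper's proof: by Lemma \ref{lem:skeleton-wrap} every edge internal to a placed region or skeleton block has gap at most $\JJ$, whereas every edge of $\EE_i$ has gap at least $\ZZ>\JJ$; since two edges can only $X$-cross when they join the same pair of tracks, and hence have equal gaps, an edge of $\EE_i$ can never $X$-cross an internal edge at all. You clearly have this ingredient in hand --- you invoke the $2\ZZ$ offset and the $\JJ$-bounded spread in your main case --- but you deploy it in the one place where the horizontal ordering already suffices, and omit it in the cases where it is indispensable. Moving the gap-number comparison into your first case repairs the argument and makes it essentially the paper's proof.
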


\begin{proof}
Because the region $\WW_i$ is placed at right of the sequential regions $(\WW_1, \WW_2, \cdots, \WW_{i-1})$ and $(\WW_1, \WW_2, \cdots, \WW_i)$ are mutually disjoint,
the edges sets $\EE_i$ and $\EE_j, 1\leq i-1$ don't have any $X$-crossing edge in $\HH$
where $\EE_j$ are all edges connecting between $\WW_j$ and $\tilde{\WW}_{\HH}(\tilde{\Psi}(\WW_j))$.

From Lemma \ref{lem:skeleton-wrap}, we can assume that each edge's gap number on each region $\WW_i$ is less than or equal to $\JJ$.
Because the gap numbers of edges in each edges set $\EE_i$ is greater than or equal to $\ZZ$ and $\ZZ$ is greater than $\JJ$,
each region of $(\WW_1, \WW_2, \cdots,$ $\tilde{\WW}_{\HH}(\tilde{\Psi}(\WW_1)),$ $\cdots,$ $\tilde{\WW}_{\HH}(\tilde{\Psi}(\WW_{i-1}))$ and $\EE_j$ cannot make $X$-crossing in $\HH$.
\end{proof}

\begin{lemma}\label{lem:gap_number}
For each edge $e$ placed in Algorithm \ref{alg:framework},
$e$'s gap number is at most $2\ZZ$.
\end{lemma}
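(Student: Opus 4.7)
The plan is to analyze every edge placed by Algorithm \ref{alg:framework}, classify these edges according to when and where they appear, and bound each category's gap by $2\ZZ$. An inductive argument on the order in which regions leave the queue $\tilde{\YY}$ will drive the analysis, since each iteration adds a fresh batch of edges on top of an already-bounded layout.

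First, I would handle the edges from the two initialization lines of the algorithm. The root of $\WW^{\MM}$ sits on track $1$, while the contiguous layers $2,3,\ldots$ sit on tracks $\ZZ+2,\ZZ+3,\ldots$. Since the underlying layerlike graph $\Pi$ only carries edges between contiguous layers, each initial edge has gap $1$ (between two consecutive layers $\ge 2$) or gap $\ZZ+1$ (from the root on track $1$ to layer $2$ on track $\ZZ+2$). In either case the gap is at most $2\ZZ$ provided $\ZZ\ge 1$, which is the standing assumption used in Lemma \ref{lem:noncrossing}.

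Next I would handle the edges introduced by a single iteration of the while-loop that processes region $\WW$ rooted at $r$. The new edges split into two sub-categories: edges interior to the newly placed sequential skeletons $\tilde{\WW}_{\HH}(\tilde{\Psi}(\WW))$, and the set $\EE$ of edges that cross from $\WW$ into these skeletons. Edges in the first sub-category have gap at most $\JJ$ by Lemma \ref{lem:skeleton-wrap}, and $\JJ<\ZZ\le 2\ZZ$ by the same choice of $\ZZ$ invoked in Lemma \ref{lem:noncrossing}. For the crossing edges in $\EE$, the inductive hypothesis places every vertex of $\WW$ at a track in $[\LL_{\HH}(r),\,\LL_{\HH}(r)+\JJ]$, and the algorithm positions the sequential skeletons starting at track $\LL_{\HH}(r)+2\ZZ$. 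Combined with the layerlike property of $\Pi$, each edge in $\EE$ connects a vertex on $\WW$'s layer $\ell_r+k$, placed at track $\LL_{\HH}(r)+k$, to its child on layer $\ell_r+k+1$, placed at the top of the matching sequential region at track $\LL_{\HH}(r)+2\ZZ+k$. Hence every crossing edge has gap exactly $2\ZZ$.

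The main technical obstacle will be establishing the layering alignment required for the crossing edges: each child of $r$ and of each boundary vertex must enter the new skeleton at precisely the shifted track that preserves the gap bound. This relies on the structural statement in Definition \ref{def:skeleton}, namely that the children of boundary and root vertices lie inside the sequential regions $(\WW^L_i)$, $(\WW^M_i)$, $(\WW^R_i)$ at their topmost layers, together with the $(\QQ,\XX,\JJ)$-well-placedness given by Lemma \ref{lem:skeleton-wrap}. Once this alignment invariant is carried through the induction, the uniform bound $2\ZZ$ on the gap of every edge placed by the algorithm follows.
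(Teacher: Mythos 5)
Your proposal follows essentially the same route as the paper: bound the initial edges of $\WW^{\MM}$ directly, then induct over the iterations of the while-loop, splitting each iteration's new edges into skeleton-internal edges (gap at most $\JJ$ by Lemma \ref{lem:skeleton-wrap}) and region-to-skeleton edges whose gap is forced to about $2\ZZ$ by the placement at track $\LL_{\HH}(r)+2\ZZ$. The alignment issue you flag as the ``main technical obstacle'' is treated at the same informal level in the paper's own proof, so your argument is neither weaker nor stronger than the original.
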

\begin{proof}
Initially,
Each edge of the maximum region $\WW^{\MM}$ except the edges connecting to $\WW^{\MM}$'s root can be placed in $\HH$ with gap number = 1.
Each Edge connecting to $\WW^{\MM}$'s root has gap number = $\ZZ$.
Each edge $e$ connecting to $\WW^{\MM}$'s root inside $\WW^{\MM}$ is an
edge of a skeleton of $\tilde{\Psi}(\WW^{\MM})$ and $\tilde{\Psi}(\WW^{\MM})$ is placed in $\HH$ from the track $2\ZZ+1$ in $\HH$. It leads to $e$'s gap number = $2\ZZ$ in $\HH$.

For a region $\WW$ in $\HH$, the region $\WW$ is called \emph{processed} if $\WW$ has been partitioned by sequential skeletons $\tilde{\Psi}(\WW)$ and each edge $e$ connecting to a boundary of $\WW$,
either $e$ is an edge of a skeleton $\Psi$ for $\WW^{\MM}$ or $e$ is inside a region in $\tilde{\WW}(\Psi)$ in $\HH$.

In each iteration of Algorithm \ref{alg:framework}, we pick the leftmost unprocessed region $\WW$ in $\HH$ and place its sequential skeletons $\tilde{\Psi}(\WW)$ into $\HH$. From Lemma \ref{lem:skeleton-wrap}, for each region $\WW' \in \tilde{\Psi}(\WW)$, each edge of $\WW'$ except connecting to $\WW'$'s root has gap number = $\JJ$.
Now for each vertex $v$ on a boundary of the leftmost unprocessed region $\WW$, all edges connecting to the vertex $v$ inside the region $\WW$ are processed by two consecutive steps:
(1): some edges connecting to the vertex $v$ are edges of a skeleton of $\tilde{\Psi}(\WW)$ and have gap number $\ZZ$ in $\HH$. And,
(2): each remaining edge connecting to the vertex $v$ is inside a region $\WW'$ rooted at the vertex $v$ of $\tilde{\WW}(\tilde{\Psi}(\WW))$.
Then the region $\WW'$ has sequential skeletons $\tilde{\Psi}(\WW')$ such that each remaining edge $e'$ is an edge of some skeleton of $\tilde{\Psi}(\WW')$ and has gap number $2\ZZ$.
Hence we can conclude that
each edge placed in Algorithm \ref{alg:framework} has gap number at most $2\ZZ$.
\end{proof}

From Lemmas \ref{lem:partition}, \ref{lem:skeleton-wrap} and \ref{lem:noncrossing},
we know that each current step in Algorithm \ref{alg:framework}, the placement of sequential skeletons for a region $\WW$ in $\HH$ cannot make $\XX$-crossing with the placement of previous steps in $\HH$
because each edge's gap number in each skeleton is at most $\JJ$.
Also, from Lemma \ref{lem:gap_number},
we know that the distance number of the layout in Algorithm \ref{alg:framework} is at most $2\ZZ$.

Hence we can immediately prove that the framework in Algorithm \ref{alg:framework} can place a composite-layerlike graph $\GG$ as an $(\QQ, \XX, \DD=2\ZZ)$-well-placed layout on $2\DD$ tracks of a ladder $\HH$
in Theorem \ref{thm:cons-track}.

\begin{theorem}\label{thm:cons-track}
If Conjecture \ref{conj:skeleton} can be proven,
then by Algorithm \ref{alg:framework}, every composite-layerlike graph $\GG$ can be an $(\QQ, \XX, \DD)$-well-placed layout on $2\DD$ tracks in a ladder $\HH$.
\end{theorem}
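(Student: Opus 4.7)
The plan is to show that Algorithm \ref{alg:framework} maintains, as an invariant across its main loop, that the portion of $\GG$ already placed in the unwrapped ladder $\HH'$ is $(\QQ, \XX, 2\ZZ)$-well-placed, and then to invoke Theorem \ref{thm:wrap} with $\DD = 2\ZZ$ to fold the result onto $2\DD$ tracks. Because Conjecture \ref{conj:skeleton} is assumed, every region popped from the queue $\tilde{\YY}$ admits the skeleton guaranteed by Lemma \ref{lem:skeleton-wrap}, and the whole argument reduces to assembling Lemmas \ref{lem:partition}, \ref{lem:skeleton-wrap}, \ref{lem:noncrossing}, \ref{lem:gap_number} and Theorem \ref{thm:wrap}.

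First I would handle termination and coverage. By Lemma \ref{lem:partition}, the sequential skeletons $\tilde{\Psi}(\WW)$ built in the current iteration contain every child of the root $r(\WW)$, and the maximum subsequential regions enqueued in line~9 are all rooted at vertices strictly below $r(\WW)$. An induction on the number of layers strictly below the current region's root then shows that $\tilde{\YY}$ empties in finitely many steps and every vertex of $\GG(\WW^{\MM})$ is placed exactly once.

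Next I would verify the three numerical conditions of Definition \ref{def:well-placed} as invariants of the outer loop. For the queue-number bound $\QQ$ on each track, Lemma \ref{lem:skeleton-wrap} controls the chords \emph{within} the freshly placed skeleton packet, while the fact that each packet is appended to the rightmost part of $\HH$ prevents new chords from nesting with previously placed ones on the same track. For the $\XX$-crossing bound, Lemma \ref{lem:noncrossing} kills crossings between the connecting edge set $\EE_i$ and everything to its left in $\HH$, whereas Lemma \ref{lem:skeleton-wrap} handles crossings strictly inside the skeleton packet; here the offset $\LL_{\HH}(r)+2\ZZ$ in line~7, rather than $\LL_{\HH}(r)+\JJ$, supplies the vertical buffer that the hypothesis of Lemma \ref{lem:noncrossing} requires (since $2\ZZ > \JJ$). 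For the distance number, Lemma \ref{lem:gap_number} directly yields gap at most $2\ZZ$ for every edge placed by the algorithm.

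With the invariant established, the final wrap step at line~11 applies Theorem \ref{thm:wrap} with $\DD = 2\ZZ$, reducing the unbounded-height unwrapped ladder $\HH'$ to a wrapped ladder $\HH$ with exactly $2\DD$ tracks while preserving the $\QQ$, $\XX$, and $\DD$ parameters. The main obstacle I expect lies not in any single lemma invocation but in the bookkeeping that links them: one must check carefully that the first-in-first-out order imposed on $\tilde{\YY}$, combined with the rule that each skeleton packet is appended to the global rightmost part of $\HH$, really produces an \emph{ordered} layout of the sequential regions in the sense required by Lemmas \ref{lem:skeleton-wrap} and \ref{lem:noncrossing}, even when a deeply nested descendant region is processed long after its ancestor. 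Once this ordering invariant is pinned down, the theorem follows by induction on loop iterations together with a single application of Theorem \ref{thm:wrap}.
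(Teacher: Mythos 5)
Your proposal follows essentially the same route as the paper: it assembles Lemmas \ref{lem:partition}, \ref{lem:skeleton-wrap}, \ref{lem:noncrossing} and \ref{lem:gap_number} into a loop invariant that the partial layout is $(\QQ, \XX, 2\ZZ)$-well-placed, and then wraps onto $2\DD$ tracks via Theorem \ref{thm:wrap}. Your version is in fact more explicit than the paper's (which states the conclusion directly from those lemmas), adding the termination argument and correctly flagging the ordered-layout bookkeeping as the delicate point, but the underlying argument is the same.
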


\section{From a Plane Graph $G$ To a Composite-Layerlike Graph $\GG$}\label{sec:transformation}

\begin{figure}[t]
\begin{center}
\includegraphics[width=1\textwidth, angle =0]{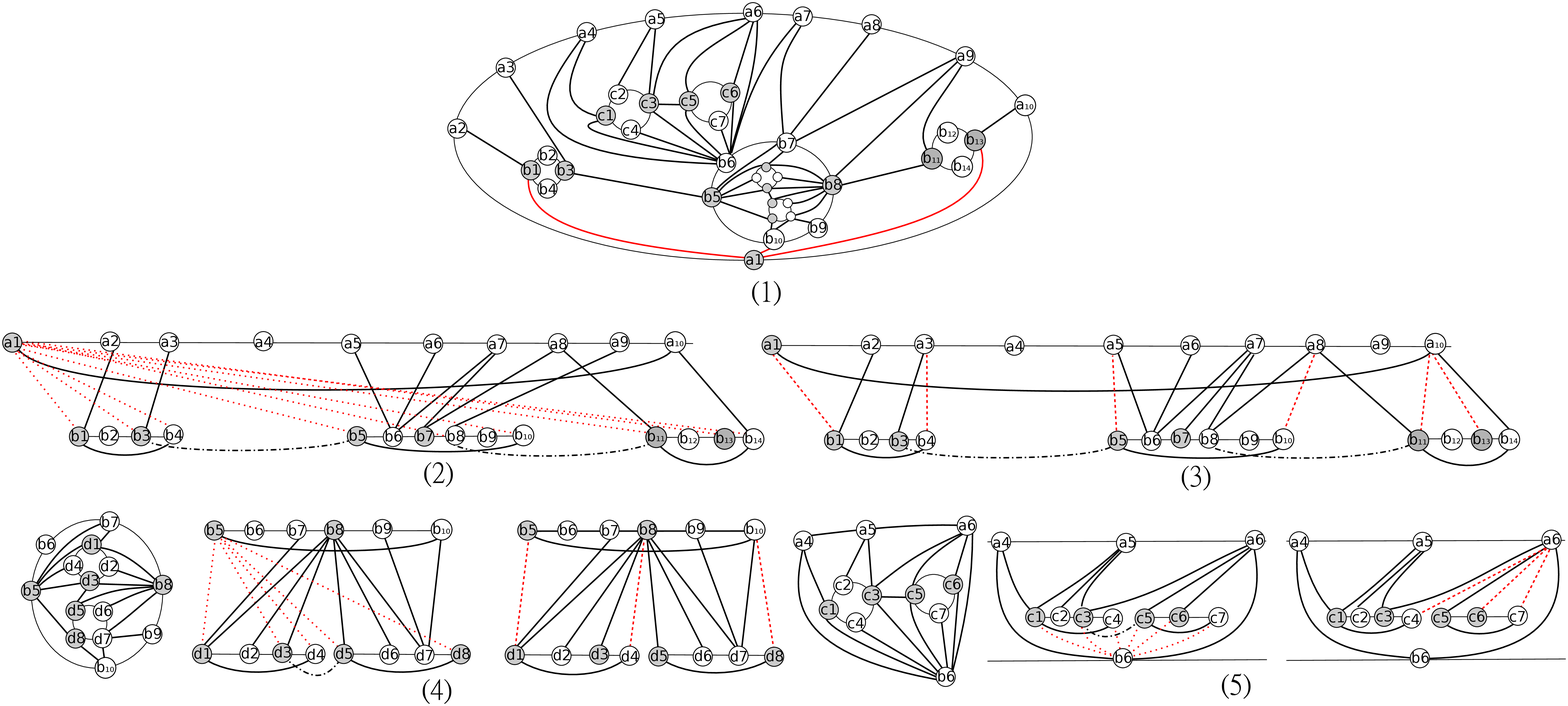}
  \centering
\caption{It shows a procedure to transform a plane graph $G$ of (1) to a composite-layerlike graph $\GG$;
(2) The outer boundary of $G$ is orderly placed from the vertices $a_1$ to $a_{10}$.
And, the second layer consists of three
sequential vertices $(b_1, b_2, b_3, b_4)$,
$(b_5, b_6, b_7, b_8, b_9, b_{10})$ and
$(b_{11}, b_{12}, b_{13}, b_{14})$.
The dash edges are wires.
(3) The wires
$\{(a_1, b_3),$
$(a_1, b_4),$
$(a_1, b_5),$
$(a_1, b_8),$
$(a_1, b_9),$
$(a_1, b_{10}),$
$(a_1, b_{11}),$
$(a_1, b_{13}),$
$(a_1, b_{14}))\}$ and the bridges
$\{(b_3, b_5),$
$(b_8, b_{11})\}$ are removed.
The dummy edges
$\{(a_3, b_4),$
$(a_4, b_5),$
$(a_8, b_{10}),$
$(a_9, b_{11}),$
$(a_9, b_{13})\}$ are added;
(4) shows how to reform a inner cycle into a composite-layerlike graph.
The outer boundary of the inner cycle is placed on a layer in $\GG$
from the vertex $b_5$ and follows the clockwise order as $(b_5, b_6, b_7, b_8, b_9, b_{10})$.
Inside the inner cycle, we have two smaller inner cycles
$(d_1, d_2, d_3, d_4)$ and $(d_5, d_6, d_7, d_8)$.
The two smaller inner cycles are placed in $\HH$ from the vertices $d_1$ and $d_8$ by clockwise order, respectively.
Next, the four wires
$\{(b_5, d_3), (b_5, d_4), (b_5, d_5), (b_5, d_8)\}$ and the bridge $(d_3, d_5)$ are removed,
and the two dummy edges $\{(b_8, d_4), (b_{10}, d_8)\}$ are added into $\GG$;
(5) shows how to reform a down-pointing triangle in a composite-layerlike graph.
The six piles
$\{(b_6, c_1), (b_6, c_3), (b_6, c_4), (b_6, c_5), (b_6, c_6), (b_6, c_7)\}$ and a bridge $(b_7, b_9)$ are removed,
and the dummy edges $\{(a_6, c_4), (a_6, c_6), (a_6, c_7)\}$ are added into a composite-layerlike graph $\GG$.}
\label{fig:reform}
\end{center}
\vspace{-0.2in}
\end{figure}

In this section,
we show how to reform a plane graph $G$ to a composite-layerlike graph $\GG$.
Let $G$ be a plane graph and $\OO(G)$ be its outer boundary.
Each layer in a composite-layerlike graph $\GG$ can be recursively defined as follows:
the first layer is $\OO(G)$ and $\OO_{G}$ is placed as clockwise order $(m, u_1, u_2, \cdots)$.
Then $(\OO_1, \OO_2, \cdots, \OO_p)$ are the sequential maximal inner cycles inside $\OO$ such that
for each maximal inner cycle $\OO_i, 1\leq i\leq p$,
there are some vertices on $\OO_i$ connecting to the vertex $m$.
Now for each maximal inner cycle $\OO_i, 1\leq i\leq p$,
we can walk around the cycle $\OO_i$ by clockwise order
to get two contiguous vertices $(L^U(\OO_i)=(v^i_1=u^i_y, v^i_2, \cdots, v^i_x=u^i_1), L^B(\OO_i)=(u^i_1=v^i_x, u^i_2, \cdots, u^i_y=v^i_1) )$
where each vertex of $L^U(\OO_i)$ don't connect to the vertex $m$ except the first and last vertices $\{v^i_1, v^i_y\}$ of $L^U(\OO_i)$ and each vertex of $L^B(\OO_i)$ connects to the vertex $m$.
All maximal inner cycles $(\OO_1, \OO_2, \cdots, \OO_p)$ can be placed on the second layer as the order:
$(\OO_1, \OO_2, \cdots, \OO_p)$.
Moreover,
for each maximal inner cycle $\OO_i$,
$\OO_i$ is placed on the second layer from the vertex $u^i_1$ by
clockwise order as:
$(v^i_1=u^i_y, v^i_2, \cdots, v^i_x=u^i_y, u^i_{y-1}, \cdots, u^i_2)$.
By the above placement, we can have sequential induced subgraphs
$(G|\triangledown_1, G|\triangledown_2, \cdots, G|\triangledown_q)$
that each induced subgraph $G|\triangledown_i, 1\leq i\leq q$,
is a subgraph induced by a maximal down-pointing triangle $(l_i, \cdots, r_i, m_i)$
where the path from vertices $l_i$ and $r_i$ are on the outer boundary $\OO(G)$ and $m_i$ is a vertex on some inner cycle $\OO_i \in (\OO_1, \OO_2, \cdots, \OO_p)$.
In the followings, we discuss how to place
each induced subgraphs $G|\triangledown_i, 1\leq i\leq q,$
on the subsequent layers in a composite-layerlike graph $\GG$.

For a subgraph induced by a maximal down-pointing triangle $\triangledown=(l, \cdots, r, m)$.
Sequential maximal inner cycles $(\OO_1, \OO_2, \cdots, \OO_q)$ inside $\triangledown=(l, \cdots, r, m)$ can be found
such that each maximal inner cycle $\OO_i, 1\leq i\leq q$ inside $\triangledown$ connects to the vertex $m$ and
each maximal inner cycle $\OO_i$ can be partitioned into two contiguous vertices $(L^U(\OO_i), L^B(\OO_i))$ by clockwise order
where each vertex of $L^U(\OO'_i)$ doesn't connect the vertex $m$ except the first and last vertices of $L^U(\OO_i)$ and
each vertex of $L^B(\OO'_i)$ connects to the vertex $m$.
Also,
(1) the sequential cycles $(\OO_1, \OO_2, \cdots, \OO_q)$ is placed orderly,
(2) the contiguous vertices $\LL^U(\OO_i)$ are placed orderly and
the contiguous vertices $\LL^B(\OO_i)$ are
placed reversely, and
(3) $\LL^U(\OO_i)$ is placed at left of $\LL^B(\OO_i)$
on the subsequent layer of
a new frame $\Pi$ in a composite-layerlike graph $\GG$
where the $\Pi$'s first layer is the same as the down-pointing triangle $\triangledown$'s upper layer.
From the above description,
we assume there is no any chord $(u, v)$ on each cycle $\OO$.
Next we start to explain how to eliminate each chord on each cycle.

Now we plan to remove edges from the above construction such that there is no any $X$-crossing edge in a composite-layerlike graph.
For each maximal inner cycle $\OO$ that the cycle $\OO$ is placed as the clockwise order: $(m, u_1, \cdots)$ on a layerlike graph $\Pi$,
and the sequential maximal inner cycles $(\OO_1, \OO_2, \cdots, \OO_p)$ inside $\OO$ such that
each maximal inner cycle $\OO_i, 1\leq i\leq q,$ connects to the vertex $u_1$.
we remove edges between the vertex $m$ and all vertices on the $(L^B(\OO'_1), L^B(\OO'_2), \cdots, L^B(\OO'_q))$ from the frame $\Pi$.
Moreover,
let $v_i$ be the vertex on the cycle $\OO$ such that
$v_i$ connects to the both cycles $\OO_i$ and $\OO_{i+1}$.
Then, we add each sequential edges between $v_i$ and $L^B(\OO_i), 1\leq i\leq q,$ in the frame $\Pi$.
Similarly, for each maximal down-pointing triangle $\triangledown=(l, \cdots, r, m)$,
we execute the above procedure for the vertex $m$ of the down-pointing triangle $\triangledown$.

During the above transformation, how to process that if there exists a chord
on a cycle $\OO$ is neglected to discuss. The reason is explained below.
A $d$-subdivision of a graph $G$ is a graph obtained by
replacing each edge of $G$ with a path having at most $2 + d$ vertices.
For each chord $(u, v)$ on a cycle $\OO$, a 1-subdivision plane graph $G^1$ without any chord on a cycle $\OO$ can be constructed by the following steps:
(1): find another
vertex $w\notin \OO$ such that the triple vertices $(u, v, w)$ form a triangle face in a plane graph $G$,
(2): a vertex $w'$
can be added inside the face $(u, v, w)$,
(3): the chord $(u, v)$ can be replaced with two edges
$\{(u, w'), (w', v)\}$, and
(4): a dummy edge $(w, w')$ can be added to satisfy triangulation property.
Hence how to process a chord on a cycle $\OO$ found
during the above transformation can be neglected by replaced a chord with two edges.
Now we immediately have the following theorem.

\begin{theorem}\label{thm:reform}
For each plane graph $G$, a $1$-subdivision $G^1$ of $G$ can be reformed into a composite-layerlike graph $\GG$.
\end{theorem}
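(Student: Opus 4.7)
The plan is to prove Theorem~\ref{thm:reform} by induction on the nesting depth of faces in $G$, verifying at each step of the construction described above that the resulting graph is indeed composite-layerlike in the sense of the recursive definition given in Section~\ref{sec:prelim}.

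First, I would dispatch the chord case so that every cycle destined to become a layer is chordless. For each chord $(u,v)$ on such a cycle $\OO$, identify the triangular face $(u,v,w)$ inside $\OO$ incident to $(u,v)$, insert a fresh vertex $w'$ inside this face, replace $(u,v)$ by the length-$2$ path $(u,w',v)$, and add the dummy edge $(w,w')$ to restore triangulation. The resulting graph is a $1$-subdivision $G^1$ of $G$ in which no layer-cycle has a chord, which is the essential property needed to place such a cycle as a bowl on a single layer without self-crossing.

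Second, I would verify that the two-layer base case goes through. Placing $\OO(G)$ clockwise as $(m,u_1,u_2,\ldots)$ on layer~$1$ and each maximal inner cycle $\OO_i$ clockwise from $u^i_1$ on layer~$2$, after removing the wires between $m$ and each $L^B(\OO_i)$ and the bridges between consecutive inner cycles, and after adding the dummy edges from the outer-boundary vertices $v_i$ on layer~$1$ to $L^B(\OO_i)$, produces a genuine layerlike graph $\Pi$. Planarity of $G^1$ is exactly what forces the cyclic order in which the $\OO_i$ attach to $m$ to agree with their left-to-right placement on layer~$2$, so no two remaining edges between layers~$1$ and~$2$ can cross, and $\LL^U(\OO_i)$ lying above $\LL^B(\OO_i)$ on the same layer is consistent with the bowl/down-pointing triangle structure.

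Third, I would apply the inductive hypothesis to the two kinds of subproblems that remain. Each inner cycle $\OO_i$ forms a bowl in $\GG$ and encloses a strictly smaller plane subgraph whose outer face is $\OO_i$; inductively this yields a composite-layerlike subgraph whose first layer is that bowl. Each maximal down-pointing triangle $\triangledown_i=(l_i,\ldots,r_i,m_i)$ between consecutive inner cycles encloses a strictly smaller plane subgraph $G|\triangledown_i$ whose outer face is $\triangledown_i$; inductively this yields a composite-layerlike subgraph whose first layer is the upper layer of $\triangledown_i$. These are precisely the two recursive ingredients required by the definition of a composite-layerlike graph.

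The main obstacle is controlling the interplay between chord-removal (which itself introduces new triangular faces and edges, and must preserve planarity and triangulation needed for deeper recursion) and the wire/bridge deletion plus dummy-edge addition on each layer. In particular, I have to argue that no chord ever reappears after the local edits, that connectivity of each subregion is preserved so the inductive call has a genuine plane subgraph to work with, and that every $m$-to-$L^B$ edge removed is faithfully replaced by a dummy edge that keeps the $1$-subdivision both planar and triangulated, so that the chord-elimination subroutine remains well-defined at every deeper level of the recursion.
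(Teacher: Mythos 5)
Your proposal follows essentially the same route as the paper: the identical chord-elimination step (inserting $w'$ in the triangular face $(u,v,w)$, replacing the chord by $(u,w'),(w',v)$ and adding the dummy edge $(w,w')$), the same placement of $\OO(G)$ on the first layer and of the maximal inner cycles split into $L^U(\OO_i)$ and $L^B(\OO_i)$ on the next layer with wires and bridges removed and dummy edges added, and the same recursion into bowls and down-pointing triangles that the definition of a composite-layerlike graph requires. The explicit induction on face-nesting depth you impose is only a formal packaging of the paper's recursive description, and the verification issues you flag as the ``main obstacle'' are likewise left at the same level of detail as in the paper, which states the theorem as immediate from the construction.
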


\section{A Track Layout for a Plane Graph on Constant Number of Tracks}\label{sec:cons-tracks}
We explain how a plane graph $G$ can be placed as a track layout on constant number of tracks.

\begin{theorem}\label{thm:subdivision}\emph{\cite{DW05}}
Suppose a graph $G$ has a $d$-subdivision $k$-track layout.
If the two numbers $k$ and $d$
are constant-bound,
$G$ also have a track layout on constant number of tracks.
\end{theorem}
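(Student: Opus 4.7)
The plan is to decouple the implication into three standard conversions, exploiting the well-known two-way bounds between track number and queue number. Let $G^d$ denote the $d$-subdivision of $G$ that, by hypothesis, admits a $k$-track layout.

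First, I would convert the $k$-track layout of $G^d$ into a queue layout of $G^d$. Concatenating the tracks in order yields a vertex ordering, and assigning each edge to the queue indexed by the absolute gap between its endpoints' track indices produces a queue layout on at most $k-1$ queues. Thus $\mathrm{qn}(G^d) \leq k-1$. This is the classical track-to-queue conversion.

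The heart of the argument is the second step: lift the bounded queue number from the subdivision back to the original graph. Retaining only the vertices of $G$ in the induced order, I would assign to each edge $e = uv$ of $G$ a \emph{signature} $\sigma(e)$, namely the sequence of queue indices used by the consecutive edges along the subdivision path $P_e$ in $G^d$ (a string of length at most $d+1$ over $[q]$ with $q := k-1$). There are at most $S := \sum_{i=1}^{d+1} q^i$ signatures, and I would place every edge of $G$ sharing a signature into a single new queue. The claim to establish is that two edges of $G$ with the same signature cannot be nested in the restricted order on $V(G)$: given a hypothetical endpoint nesting, one shows by induction along the positions $1, 2, \ldots, d+1$ that the $i$-th internal vertices of the two subdivision paths must appear in the same relative order as the endpoints (otherwise the $i$-th edges of the two paths would form a nested pair inside queue $\sigma(e)_i$ of the $G^d$ layout, contradicting that layout). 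Pushing the induction to $i = d+1$ contradicts the nesting at the far endpoints. Hence $\mathrm{qn}(G) \leq S$, a constant depending only on $k$ and $d$.

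Finally, applying the Dujmovi\'c--P\'or--Wood theorem, which bounds the track number of any graph by a function of its queue number, converts the bounded queue layout of $G$ into a constant-bound track layout of $G$, completing the argument. The main obstacle is the inductive non-nesting argument in the second step; once that is reduced to a single-step monotonicity check at each queue index of the signature, the rest is assembly of off-the-shelf results, and the hidden constant in the final track number is $T(S)$ with $S = \sum_{i=1}^{d+1}(k-1)^i$, which is a constant whenever $k$ and $d$ are.
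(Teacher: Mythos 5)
This theorem is not proved in the paper at all: it is imported verbatim from \cite{DW05}, so the only thing to assess is your argument itself. Your three-stage plan (track $\to$ queue for the subdivision, de-subdivision, queue $\to$ track for $G$) is the standard route, and stages one and three are fine as citations (a $k$-track layout gives $\mathrm{qn}(G^d)\le k-1$, and track number is bounded by a function of queue number, cf.\ \cite{DPW04, DMW05}). The gap is the central lemma of stage two: it is \emph{false} that two edges of $G$ with the same queue-index signature cannot nest. Take $d=1$ and a single queue: order the vertices $y,a,u,v,b,x$, let $e=uv$ be subdivided by $x$ and $f=ab$ by $y$, and place all four subdivision edges $(u,x),(x,v),(a,y),(y,b)$ in the one queue. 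No two of these edges are nested (each pair either shares an endpoint, spans disjoint intervals, or crosses), so this is a valid $1$-queue layout of the subdivision; both $e$ and $f$ have signature $(1,1)$, yet in the order restricted to $V(G)$ we have $a<u<v<b$, so $e$ and $f$ nest while your rule assigns them to the same queue. The induction you sketch already fails at its first step: from $a<u$ and the fact that $(a,y)$ and $(u,x)$ lie in a common queue, you can force $y<x$ only when both division vertices sit to the right of $u$; if one of them escapes to the far left, as $y$ does above, non-nesting imposes no relative order at all, and the contradiction you want at position $d+1$ never materializes.

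The missing idea is that the signature must record more than queue indices: for each position along the subdivision path one also needs positional/directional data about the division vertex (for instance, on which side of the relevant endpoints it lies, or, if you work directly with the track layout, which track it occupies). This enlarges the alphabet only by a bounded factor per step, which is exactly why the bounds in \cite{DW05} are exponential in $d$ with base proportional to the queue (or track) number, and it still yields a constant when $k$ and $d$ are constants. With that enrichment the de-subdivision step can be pushed through, and the remainder of your assembly is correct as quoted; as written, however, the key non-nesting claim is wrong, so the proposal has a genuine gap.
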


\begin{theorem}
Every plane graph $G$ has a track layout on constant number of tracks.
\end{theorem}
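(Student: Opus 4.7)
The plan is to chain together the constructions developed earlier in the paper. Given an arbitrary plane graph $G$, the proof proceeds in four stages.

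First, I would apply Theorem \ref{thm:reform} to pass from $G$ to a $1$-subdivision $G^1$ that admits a reformulation as a composite-layerlike graph $\GG$. This step is cheap and preserves planarity while adding only a linear number of subdivision vertices; the resulting $\GG$ exposes the layered/bowl/down-pointing-triangle hierarchy that the framework of Section \ref{sec:perfect-layer-graph-layout} needs.

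Second, I would invoke Theorem \ref{thm:cons-track} on $\GG$ (which rests on Conjecture \ref{conj:skeleton}, proved separately later in the paper) to produce an $(\QQ, \XX, \DD)$-well-placed layout of $\GG$ in a ladder $\HH$ on $2\DD$ tracks, where $\QQ$, $\XX$ and $\DD$ are all absolute constants.

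Third, I would upgrade this well-placed layout into a genuine $k$-track layout of $G^1$. By Definition \ref{def:well-placed}, every track has queue number below $\QQ$ and every pair of tracks has $X$-crossing number below $\XX$. A constant-factor refinement of the tracks resolves both obstructions: splitting each track into at most $\QQ$ sub-tracks converts every internal nest into a collection of non-nested edges between sub-tracks, and a further splitting into at most $\XX$ groups removes all $X$-crossings between each pair of tracks. Because $\QQ$, $\XX$ and $\DD$ are bounded, the total number of tracks after these refinements is still constant, and we obtain a proper track layout of $\GG$, and hence of the subgraph $G^1$, on $O(1)$ tracks. I expect this step to be the main obstacle, because the two refinements interact and one has to show that they can be performed simultaneously without reintroducing $X$-crossings; a careful scheduling argument in the spirit of the wrapping procedure in Theorem \ref{thm:wrap} is needed to make this precise.

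Finally, with $G^1$ equipped with a $k$-track layout for $k = O(1)$ and subdivision depth $d = 1$, Theorem \ref{thm:subdivision} applies directly and yields a track layout of $G$ itself on a constant number of tracks, which is the desired conclusion.
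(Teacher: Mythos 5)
Your overall chain (Theorem \ref{thm:reform} to pass to a composite-layerlike graph $\GG$, Theorem \ref{thm:cons-track} to get an $(\QQ, \XX, \DD)$-well-placed layout on $2\DD$ tracks, then Theorem \ref{thm:subdivision} with $d=1$) is the same route the paper takes, but there is a genuine gap at the hand-off from $\GG$ back to $G^1$. The reformation of Theorem \ref{thm:reform} does \emph{not} produce $\GG$ as a supergraph of $G^1$: it deletes the wires, bridges and piles incident to the bad vertices (and inserts dummy edges), so $G^1$ is not ``the subgraph'' of $\GG$ as your step 3 asserts, and a track layout of $\GG$ does not simply restrict to one of $G^1$. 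The paper closes exactly this gap in the section on deleted edges: Lemmas \ref{lem:vertical-X-crossing}, \ref{lem:pile-nested} and \ref{lem:bridge-nested} show that re-inserting the deleted wires, piles and bridges into the layout produced by Algorithm \ref{alg:framework} increases the queue, $X$-crossing and gap numbers only by constants, and this re-insertion is the content of Theorem \ref{thm:G1-well-placed}, which the paper's proof of the present theorem invokes and which your proposal bypasses entirely. As written, your argument lays out $\GG$, not $G^1$.

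A secondary remark: in your step 3 you are more explicit than the paper, which silently treats the $(\QQ, \XX, \DD)$-well-placed layout of Theorem \ref{thm:G1-well-placed} as a track layout; flagging the conversion is reasonable. However, the specific mechanism you sketch is too naive: splitting a track into $\QQ$ sub-tracks does not by itself eliminate nested chords (that is the queue-to-track conversion), and partitioning the edges between a pair of tracks into $\XX$ non-crossing classes gives an edge-coloured, $(k,t)$-track-style layout rather than a track layout, since a track layout requires a refinement of the \emph{vertex} classes. Both conversions can be done with only constant blow-up by citing the machinery of \cite{DMW05, DPW04}, so this step is repairable; the missing treatment of the deleted edges is the step that genuinely breaks the argument as written.
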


\begin{proof}
From Theorem \ref{thm:G1-well-placed}, a $1$-subdivision plane graph $G^1$ can have a track layout on constant number of tracks $\HH$.
Because $G^1$'s track number is constant-bound,
by Theorem \ref{thm:subdivision},
$G$ can also have a track layout on constant number of tracks.
\end{proof}

\section{An $(\QQ, \XX, \DD)$-Well-Placed Layout for a Raising Fan $\tilde{\FF}$ in a Ladder $\HH$}\label{sec:layout-fan-path}

\begin{figure}[t]
\begin{center}
\includegraphics[width=1\textwidth, angle =0]{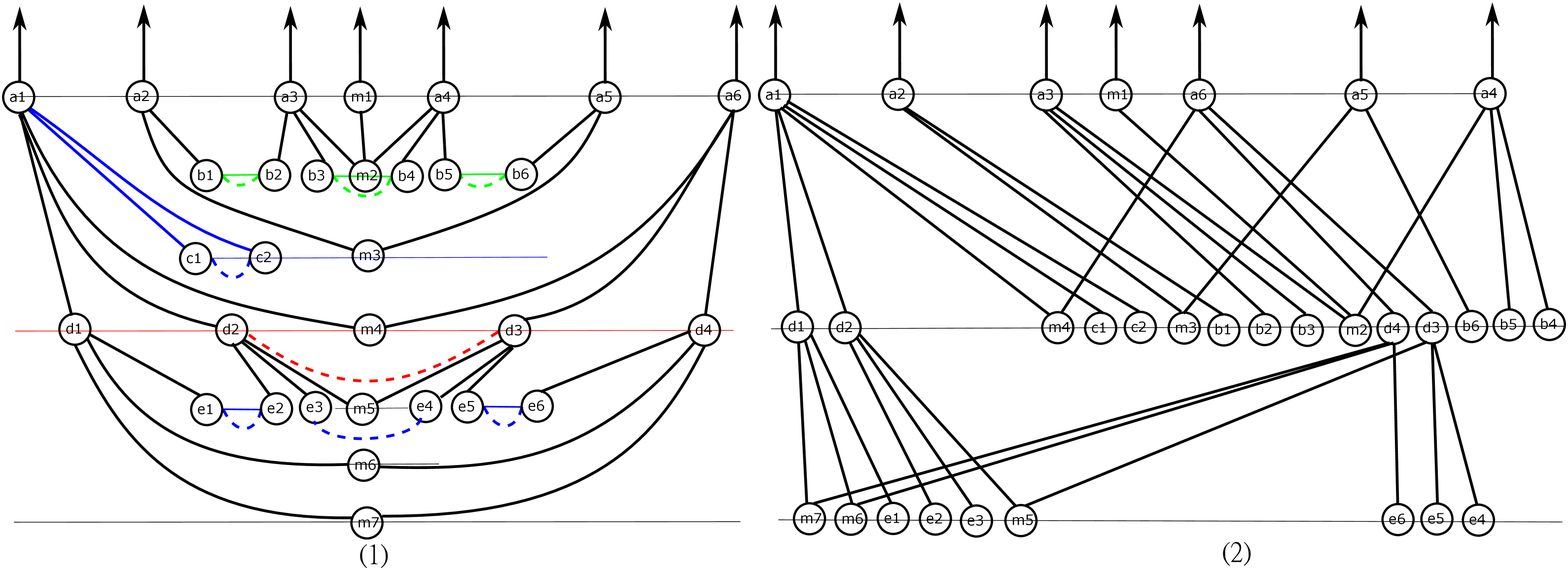}
  \centering
\caption{(1): shows a raising fan $\tilde{\FF}$ consists of six fans
$\FF_1=(a_3, m_1, a_4, m_2)$,
$\FF_2=(a_2, a_5, m_3)$,
$\FF_3=(a_1, a_6, m_4)$,
$\FF_4=(d_2, d_3, m_5)$,
$\FF_5=(d_1, d_4, m_6)$, and
$\FF_6=(d_1, d_4, m_7)$.
Also, $\tilde{\FF}$'s middle path is $\MM=(m_1, m_2, m_3, m_4, m_5, m_6, m_7)$.
For the raising fan $\FF$,
$\LL_1=\{(a_1, a_2, a_3)\}$,
$\RR_1=\{(a_4, a_5, a_6)\}$,
$\MM_1=\{m_1\}$,
$\LL_2=\{(b_1, b_2, b_3), (c_1, c_2), (d_1, d_2)\}$,
$\RR_2=\{(b_4, b_5, b_6), (d_3, d_4)\}$,
$\MM_2=(m_2, m_3, m_4)$,
$\LL_3=\{(e_1, e_2, e_3)\}$,
$\RR_3=\{((e_4, e_5, e_6)\}$,
$\MM_3=(m_5, m_6, m_7)$;
(2): shows an $(\QQ, \XX, \DD)$-well-placed layout for $\tilde{\FF}$ in $\HH$
by orderly placing the left wing of $\tilde{\FF}$ and
reversely placing the right wing of $\tilde{\FF}$.
}
\label{fig:fan-path-layout}
\end{center}
\vspace{-0.2in}
\end{figure}

In this section, we present an approach to have a specified type of sequential fans $\tilde{\FF}$ in a composite-layerlike graph $\GG(\WW^{\MM})$ called a \emph{raising-fan} path and defined later and
show hot to place it as an $(\QQ, \XX, \DD)$-well-placed layout in a ladder $\HH$.
Also, an $(\QQ, \XX, \DD)$-well-placed layout for a \emph{raising-fan} path $\tilde{\FF}$ can partition the region
$\WW^{\MM}$ into sequential regions in $\HH$.

A \emph{fan} $\FF$ consists of sequential vertices $(u_1, \cdots, u_a, m)$
such that $(u_1, \cdots, u_a)$ and $w$ are on two contiguous
layers $i$ and $i+1$ of a layerlike graph, and $w$ connects each vertex in $(u_1, \cdots, u_a$.
The sequential vertices $(u_1, \cdots, u_a)$ are called upper vertices of the fan $\FF$.
Also, the vertex $m$ are called the lower vertex of the fan $\FF$.
A \emph{raising-fan} path $\tilde{\FF}$ consists of sequential fans $(\FF_1, \FF_2, \cdots)$
such that for each fan $\FF_i, i\geq 1$,
there is a down-pointing triangle in $\FF_{i+1}$ bounds $\FF_i$.
Also, the lower vertices of the raising fan $\tilde{\FF}$ form
 sequential vertices $\MM=(m_1, m_2, \cdots)$ and is called a \emph{middle path}
where each vertex $m_i, i\geq 1$, is the lower vertex of the fan $\FF_i$.
For a fan $\FF\in \tilde{\FF}$ with the middle path $\MM$,
edges of the fan $\FF$ at left and right of the middle path $\MM$ are called
a \emph{left arm} and \emph{right arm} of the fan $\FF$, respectively

A path $\Re(v)$ is called a \emph{raising-path} from a vertex $v$ in a composite-layerlike graph $\GG$
if the path $\Re(v)$ starts from the vertex $v$ along edges from the lower layer to the upper layer
till the first layer in $\GG$,.
Any two raising-paths $\Re(v_1)$ and $\Re(v_2)$,
$\Re(v_1)$ and $\Re(v_2)$ are called \emph{upward-merging} if either
the two raising-paths $\Re(v_1)$ and $\Re(v_2)$ are vertex-disjoint or
the intersection of the two raising-paths $\Re(v_1) \cap \Re(v_2)$ is a subpath from some vertex to a vertex on the first layer in $\GG$.
A set of raising-paths $\tilde{\Re}=\{\Re(v)|v\in V(\GG)\}$ in a composite-layerlike graph $\GG$ are called \emph{upward-merging} to $\GG$
if for any two raising-paths $\Re(v_1)$ and $\Re(v_2)$ in $\tilde{\Re}$ are upward-merging and each vertex in $\GG$
is in a raising-path $\Re(v)$ in $\tilde{\Re}$.
From now on, when mention a raising-path $\Re(v)$, it always means that
the path $\Re(v)$ is a raising-path in some specific upward-merging raising-path set $\tilde{\Re}$ for a composite-layerlike graph $\GG$.

Assume that $\triangledown$ is a down-pointing triangle inside a fan $\FF$ of a raising fan $\tilde{\FF}$ with the middle path $\MM$.
\begin{itemize}
\item A \emph{left spine} of $\triangledown$ with respect to  the $\tilde{\FF}$'s middle path $\MM$ consists of
sequential maximal cycles $LS(\triangledown)=(\OO^L_1, \OO^L_2, \cdots, \OO^L_p)$ at left of the middle path $\MM$
such that for each cycle $\OO^L_i, 1\leq i\leq p$, there is an edge connecting between the upper vertices of $\triangledown$ and the boundary of $\OO^L_i$.

\item A \emph{right spine} of $\triangledown$ with respect to  the $\tilde{\FF}$'s middle path $\MM$ consists of
sequential maximal cycles $RS(\triangledown)=(\OO^R_1, \OO^R_2, \cdots, \OO^R_q)$ at right of the middle path $\MM$
such that for each cycle $\OO^R_i, 1\leq i\leq q$, there is an edge connecting between the upper vertices of $\triangledown$ and the boundary of $\OO^R_i$.

\item the $i$-th \emph{joint} of a spine inside $\FF'$ consists of the $i$-th cycle's the leftmost and rightmost vertices in the spine of $\triangledown$.
\end{itemize}
If a down-pointing triangle $\triangledown\in \FF$ which is not passed through the middle path $\MM$, then the sequential maximal cycles $(\OO_1, \OO_2, \cdots, \OO_t)$ is called a \emph{spine} in $\triangledown$
because the spine is not partitioned by the middle path $\MM$.

Let $\FF$ be a fan in a raising fan $\tilde{\FF}$ with the middle path $\MM$ that
Then the fan $\FF$ are partitioned by the middle path $\MM$ into the following sequential down-pointing triangles
$(\triangledown^L_1,$
$\triangledown^L_2,$
$\cdots,$
$\triangledown^L_a),$
$\triangledown^M,$
$(\triangledown^R_1,$
$\triangledown^R_2,$
$\cdots,$
$\triangledown^R_b)$
where each down-pointing triangle $\triangledown^L_i, 1\leq i\leq a,$ is at left of the middle path $\MM$,
the down-pointing triangle $\triangledown^M$ consists of the middle path $\MM$ and
each down-pointing triangle $\triangledown^R_i, 1\leq i\leq b,$ is at right of the middle path $\MM$.


\begin{itemize}
\item A \emph{left wing} $\LL(\FF)$ of a fan $\FF$ is the union of raising-paths consisting of all joints
in the sequential left down-pointing triangles $\triangledown^L_i, 1\leq i\leq a$ of the fan $\FF$,
and the down-pointing triangle $\triangledown^M$'s left joints with respect to the middle path $\MM$. And

\item A \emph{right wing} $\RR(\FF)$ of a fan $\FF$ is the union of raising-paths consisting of all joints
in the sequential right down-pointing triangles $\triangledown^R_i, 1\leq i\leq b$ of the fan $\FF$,
and the down-pointing triangle $\triangledown^M$'s right joints with respect to the middle path $\MM$.
\end{itemize}

From the definition of a raising-path, two contiguous raising-paths $(\Re(u), \Re(u')) \in \LL$,
$\Re(u)$ and $\Re(u')$ form
a disjoint region $\WW^L$, and $\Re(u)$ and $\Re(u')$ are the
left and right boundaries of $\WW^L$, respectively.
Similarly,
two raising-paths $(\Re(u), \Re(u')) \in \RR$ form
a region $\WW^R$, and $\Re(u)$ and $\Re(u')$ are the
left and right boundaries of $\WW^R$, respectively.
Hence we have two sequential regions for a fan $\FF$: $\tilde{\WW}^L(\FF)$ and $\tilde{\WW}^R(\FF)$ that are at left and right of the middle path $\MM$, respectively.

From the definitions of a raising fan $\tilde{\FF}$ and a composite-layerlike graph $\GG$,
we can give another representation for a raising fan $\tilde{\FF}$ as
$(\FF_{1, 1}, \FF_{1, 2}, \cdots, \FF_{1, a_1},$ $\FF_{2, 1}, \FF_{2, 2}, \cdots, \FF_{2, a_2},$ $\cdots)$
where for each subsequential fans $(\FF_{i, 1}, \FF_{i, 2}, \cdots, \FF_{i, a_i}), i\geq 1$,
all upper vertices of each fan $\FF_{i, j}, 1\leq j\leq a_i,$ are on the $i$-th layer of the layerlike graph $\Pi$ of $\GG$.
Also, subsequential fans
$\tilde{\FF}^C=(\FF_{1, 1}, \FF_{1, 2}, \cdots,$ $\FF_{1, a_1}, \FF_{2, 1}, \FF_{2, 2},$ $\cdots, \FF_{2, a_2}, \cdots)$
are called \emph{characteristic-fans} of $\tilde{\FF}$
if for each $i\geq 1$, all upper vertices of $(\FF_{i, 1}, \FF_{i, 2}, \cdots, \FF_{i, a_i})$
are on the $i$-th layer of the $\GG$'s layerlike graph $\Pi$.
Also, the union of left wings and right wings of all fans in a characteristic-raising fan
on the $i$-th layer are denoted as $\LL_i(\tilde{\FF}^C)$ and $\RR_i(\tilde{\FF}^C)$, respectively. And,
the sequential lower vertices $(m_{i, 1}, m_{i, 2}, \cdots, m_{i, a_i})$
of the sequential fans $(\FF_{i, 1}, \FF_{i, 2}, \cdots, \FF_{i, a_i})$
are denoted as $\MM_i(\tilde{\FF}^C)$.

Let $\tilde{\FF}^C$
$=(\FF_{1, 1}, \FF_{1, 2},$
$\cdots,$
$\FF_{1, a_1},$
$\FF_{2, 1},$
$\FF_{2, 2},$
$\cdots,$
$\FF_{2, a_2},$
$\cdots)$ be the characteristic-raising fan of $\tilde{\FF}$.
The characteristic-raising fan $\tilde{\FF}^C$ can form sequential regions as follows:
initially, let the two term $\tilde{\WW}^L(\tilde{\FF}^C)$ and $\tilde{\WW}^R(\tilde{\FF}^C)$ be empty sequence.
Then repeatedly find the last fan $\FF\in \tilde{\FF}^C$
(the last fan $\FF\in \tilde{\FF}^C$ is inside a region between the rightmost and leftmost boundaries of $\tilde{\WW}^L(\tilde{\FF}^C)$ and $\tilde{\WW}^R(\tilde{\FF}^C)$, respectively),
remove the fan $\FF$ from $\tilde{\FF}^C$, and
add two sequential regions $\tilde{\WW}^L(\FF)$ and $\tilde{\WW}^R(\FF)$
into $\tilde{\FF}^L(\tilde{\FF}^C)$ and $\tilde{\FF}^L(\tilde{\FF}^C)$, respectively
where the two sequential regions $\tilde{\WW}^L(\FF)$ and $\tilde{\WW}^R(\FF)$ are formed by the fan $\FF$
that are at left and at right of the middle path $\MM$, respectively.
After all fan are removed from the characteristic-raising fan $\tilde{\FF}^C$,
we can have two sequential regions $\tilde{\WW}^L(\tilde{\FF}^C)$ and
$\tilde{\WW}^R(\tilde{\FF})$ that are at left and right of the middle path $\MM$, respectively.

For each subsequential fans $\tilde{\FF}_{i, j}$ from $\FF_{(i, j)+1}$ to $\FF_{(i, j+1)-1}, i, j\geq1$,
we know that the unions of left wings $\LL_{i, j}$ and right wings $\RR_{i, j}$ for all fans $\FF$ from $\FF_{(i, j)+1}$ to $\FF_{(i, j+1)-1}$
are inside the rightmost region $\WW^L_{i, j}$ and leftmost regions $\WW^R_{i, j}$ of $\tilde{\WW}^L(\tilde{\FF}^C)_{i, j}$ and $\tilde{\WW}^R(\tilde{\FF}^C)_{i, j}$, respectively.
Hence we can recursively partition the regions $\WW^L_{i, j}$ and $\WW^R_{i, j}$ by the two left wings $\LL_{i, j}$ and right wings $\RR_{i, j}$ and replace the two regions
$\tilde{\WW}^L_{i, j}$ and $\tilde{\WW}^R_{i, j}$
by the two sequential regions $\tilde{\WW}^L_{i, j}$
and $\tilde{\WW}^R_{i, j}$, respectively.
Now we can conclude that
the union of the left wings and right wings of the raising fan $\tilde{\FF}$
form two sequential regions $\tilde{\WW}^L(\tilde{\FF})=(\WW^L_1, \WW^L_2, \cdots, \WW^L_p)$ and
$\tilde{\WW}^R(\tilde{\FF})=(\WW^R_1, \WW^R_2, \cdots, \WW^R_q)$.

We can conclude that a raising fan $\tilde{\FF}=(\FF_1, \FF_2, \cdots)$ can form sequential regions as
$(\cdots,$
$\tilde{\WW}^L_2,$ $\tilde{\WW}^L_1,$ $\tilde{\WW}^R_1,$ $\tilde{\WW}^R_2,$
$\cdots)$
where $\tilde{\WW}^L_i$
and $\tilde{\WW}^R_i$
are sequential regions at left and right of the middle path $\MM$ for a fan $\FF_i$ of a raising fan $\tilde{\FF}$, respectively.
From the above discussion, we immediately have the following lemma.

\begin{lemma}\label{lemma:fan-path-partition}
Given a raising fan $\tilde{\FF}$ and its middle path $\MM$,
the unions of $\tilde{\FF}$'s left wings and right wings can form sequential regions $(\WW^L_1, \WW^L_2, \cdots, \WW^L_p,$ $\WW^R_1, \WW^R_2, \cdots, \WW^R_q)$
where the two subsequential regions $(\WW^L_1, \WW^L_2,$ $\cdots,$ $\WW^L_p)$ and $(\WW^R_1, \WW^R_2,$ $\cdots,$ $\WW^R_q)$ are
at left and right of the middle path $\MM$, respectively.
Moreover, $(\WW^L_1, \WW^L_2, \cdots, \WW^L_p)$ and $(\WW^R_1, \WW^R_2, \cdots, \WW^R_q)$
are called left and right sequential regions of $\tilde{\FF}$, and denoted to $\tilde{\WW}^L(\tilde{\FF})$ and $\tilde{\WW}^R(\tilde{\FF})$, respectively.
\end{lemma}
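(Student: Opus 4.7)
The plan is to formalize the construction sketched in the paragraphs immediately preceding the lemma, via induction on the number of fans in the characteristic-raising fan decomposition. First I would rewrite $\tilde{\FF}$ in its characteristic form $\tilde{\FF}^C=(\FF_{1,1},\FF_{1,2},\ldots,\FF_{1,a_1},\FF_{2,1},\ldots)$ grouped by the layer containing the upper vertices, and recall that a single fan $\FF$, once cut by the middle path $\MM$, decomposes into sequential left down-pointing triangles $\triangledown^L_1,\ldots,\triangledown^L_a$, the central triangle $\triangledown^M$, and right down-pointing triangles $\triangledown^R_1,\ldots,\triangledown^R_b$; the joints of these triangles give the raising-paths whose consecutive pairs bound $\tilde{\WW}^L(\FF)$ on the left and $\tilde{\WW}^R(\FF)$ on the right.

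Second, I would verify that any two consecutive raising-paths in the same wing actually bound a region in the sense of Section~\ref{sec:prelim}: this is where the upward-merging property of $\tilde{\Re}$ is essential, since it guarantees that two consecutive raising-paths are either vertex-disjoint or agree on a common tail ending on the first layer. In either case, together with the edge along the fan's upper vertices they enclose a separator of $\GG$, so the candidate region meets conditions (1) and (2) of the region definition. Since consecutive joints along the fan's upper vertices are contiguous by construction, the regions $\tilde{\WW}^L(\FF)$ (respectively $\tilde{\WW}^R(\FF)$) tile the left (respectively right) wing of $\FF$ with no gap and no overlap.

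Third, I would carry out the induction by peeling fans from the outermost inward. The base case is the innermost fan of $\tilde{\FF}^C$, for which the claim reduces to the previous paragraph. For the inductive step I invoke the defining property of a raising fan, namely that $\FF_{i+1}$ contains a down-pointing triangle bounding $\FF_i$; this implies that the wings contributed by $\FF_i$ and everything nested deeper sit entirely inside the rightmost region $\WW^L_{i,j}$ on the left side, and inside the leftmost region $\WW^R_{i,j}$ on the right side, of the sequence built so far. I then replace $\WW^L_{i,j}$ and $\WW^R_{i,j}$ by the (already sequential, by induction) regions coming from the inner fans; the replacement preserves contiguity because the boundaries of $\WW^L_{i,j}$ are exactly the outermost raising-paths of the inner sequence.

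The main obstacle I anticipate is orientation bookkeeping during this recursive substitution: left wings are accumulated from outer to inner while right wings must be accumulated in reverse order so that the final list reads $(\WW^L_1,\ldots,\WW^L_p,\WW^R_1,\ldots,\WW^R_q)$ in agreement with the placement shown in Fig.~\ref{fig:fan-path-layout}. I would manage this by indexing each fan by its depth in the recursive decomposition and defining $\tilde{\WW}^L(\tilde{\FF})$ and $\tilde{\WW}^R(\tilde{\FF})$ explicitly as the concatenations obtained by in-order (respectively reverse-order) traversal of that index tree. Once this indexing is fixed, the disjointness of the regions follows from the mutual disjointness of the underlying down-pointing triangles inside each fan and from the vertex-disjointness guaranteed by the upward-merging raising-paths between different fans, completing the proof.
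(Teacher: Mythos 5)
Your proposal is correct and follows essentially the same route as the paper's own argument, which appears in the paragraphs preceding the lemma: decompose $\tilde{\FF}$ into its characteristic fans, take the regions bounded by consecutive raising-paths in each wing, and recursively substitute the nested fans' region sequences into the rightmost left region and leftmost right region of the sequence built so far. Your additional care about the upward-merging property and the orientation bookkeeping only makes explicit what the paper leaves implicit.
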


Now we can consider how to have an $(\QQ, \XX, \DD)$-well-placed layout for a raising fan $\tilde{\FF}=(\FF_1, \FF_2, \cdots)$ in a ladder $\HH$.
When we have a raising fan $\tilde{\FF}$, it implies that we have two sequential
regions $\tilde{\WW}^L(\tilde{\FF})$ and $\tilde{\WW}^R(\tilde{\FF})$ from Lemma \ref{lemma:fan-path-partition}.
The basic idea to have an $(\QQ, \XX, \DD)$-well-placed layout in $\HH$ is that
we orderly place the sequential regions $\tilde{\WW}^L(\tilde{\FF})$ and reversely place the sequential regions $\WW^R(\tilde{\FF})$ in $\HH$.
To place the left wing and right wing of a raising fan $\tilde{\FF}$,
we can start to place the $\tilde{\FF}$'s characteristic-fans
$(\FF_{1, 1}, \FF_{1, 2}, \cdots, \FF_{1, a_1},$ $\cdots,$ $\FF_{2, 1}, \FF_{2, 2}, \cdots, \FF_{2, a_2}, \cdots)$
and their union of left wings $\LL_i$, union of right wings $\RR_i$ and
their sequential lower vertices $\MM_i=(m_{i, 1}, m_{i, 2}, \cdots, m_{i, a_i})$ as the order:
for each $i\geq 1$, we place $\LL_i$ orderly,
$\MM_i$ reversely and $\RR_i$ reversely on the $i$-th layer of $\HH$.

Since the sequential regions $\tilde{\WW}^L(\tilde{\FF}^C_i)$ are placed orderly in $\HH$,
we must place $\MM_i(\tilde{\FF}^C)=(m_{i, 1},$ $m_{i, 2},$ $\cdots,$ $m_{i, a_i})$ reversely in $\HH$ as
$(m_{i, a_i},$ $\cdots,$ $m_{i, 2},$ $m_{i, 1})$ to avoid $X$-crossing edges in $\HH$.
For each raising fan $\tilde{\FF}^C_i=(\FF^C_{i, 1}, \FF^C_{i, 2}, \cdots, \FF^C_{i, a_i})$, the sequential lower vertices $\MM_i=(m_{i, 1}, m_{i, 2}, \cdots, m_{i, a_i})$ are reversely placed as
$(m_{i, a_i},$ $\cdots,$ $m_{i, 2},$ $m_{i, 1})$ in $\HH$,
the sequential regions $\tilde{\WW}^R(\tilde{\FF}^C_i)$ need to be placed reversely in $\HH$ to avoid that
$\tilde{\WW}^R(\tilde{\FF}^C_i)$ make $X$-crossing edges in $\HH$.

\begin{lemma}\label{lem:fan-path-region-order}
Given the characteristic-raising fan $\tilde{\FF}^C$ of a raising fan $\tilde{\FF}$,
the unions of left wings and right wings of $\tilde{\FF}^C$
form two sequential regions $\tilde{\WW}^L(\tilde{\FF}^C)$ and
$\tilde{\WW}^R(\tilde{\FF}^C)$,
the sequential regions $\tilde{\WW}^L(\tilde{\FF}^C)$ are orderly placed
as $\tilde{\WW}^L_{\HH}(\tilde{\FF}^C)$ and
the sequential regions
$\tilde{\WW}^R(\tilde{\FF}^C)$ are reversely placed
as $\tilde{\WW}^R_{\HH}(\tilde{\FF}^C)$.
Also,
the sequential regions $\tilde{\WW}_{\HH}(\tilde{\FF}^C)=(\tilde{\WW}^L_{\HH}(\tilde{\FF}^C),
\tilde{\WW}^R_{\HH}(\tilde{\FF}^C))$ are $(\QQ, \XX, \DD)$-well-placed in $\HH$.
\end{lemma}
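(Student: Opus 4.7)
The plan is to translate Lemma \ref{lemma:fan-path-partition} into an explicit track-by-track placement and then verify, clause by clause, the four requirements of Definition \ref{def:well-placed}. First, I would invoke Lemma \ref{lemma:fan-path-partition} to obtain the sequential regions $\tilde{\WW}^L(\tilde{\FF}^C)$ and $\tilde{\WW}^R(\tilde{\FF}^C)$ flanking the middle path $\MM$. Then I would make the placement explicit: on track $i$ of $\HH$, place $\LL_i(\tilde{\FF}^C)$ from left to right (inherited from the orderly placement of $\tilde{\WW}^L$), followed by $\MM_i(\tilde{\FF}^C)$ in reverse, followed by $\RR_i(\tilde{\FF}^C)$ in reverse (inherited from the reverse placement of $\tilde{\WW}^R$). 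This realizes $\tilde{\WW}_{\HH}(\tilde{\FF}^C)$ as the concatenation $(\tilde{\WW}^L_{\HH}(\tilde{\FF}^C), \tilde{\WW}^R_{\HH}(\tilde{\FF}^C))$, which gives the sequential-regions clause of $(\QQ,\XX,\DD)$-well-placedness for free.

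Two of the remaining three clauses are short to verify. For the distance number, every edge of $\tilde{\FF}$ is either a chord lying on a single layer (gap $0$) or an edge between two contiguous layers of $\GG$ (gap $1$), so $\DD$ is an absolute constant. For the queue number on a single track, the only chords present are cycle-boundary edges of spines restricted to the layer associated to that track; within each spine cycle the number of mutually nested chord candidates is bounded by the local joint degree of the raising fan, so $\QQ$ stays constant. Both arguments are routine once the placement above is fixed.

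The real work is bounding the $X$-crossing number $\XX$ between contiguous tracks $i$ and $i+1$, and I would split the edges into four classes and handle them separately. Left-wing raising-path edges have endpoints in $\LL_i$ and $\LL_{i+1}$, both placed orderly, so the pairwise left-to-right order of endpoints agrees across the two tracks and the edges run in parallel. Right-wing raising-path edges have endpoints in $\RR_i$ and $\RR_{i+1}$, both placed reversely, and since reversal preserves pairwise order, they too run in parallel. Middle-path edges have endpoints in $\MM_i$ and $\MM_{i+1}$, both reversed, and are parallel for the same reason. The hard case is fan edges: the upper vertices of a fan $\FF\in\tilde{\FF}^C_i$ occupy a contiguous range that may straddle $\LL_i$, $\MM_i$, and $\RR_i$, while the lower vertex lies in $\MM_{i+1}$ under a reversal. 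The key claim to establish here is that the simultaneous reversal of $\MM$ and $\RR$ on each track exactly mirrors the clockwise boundary traversal used to define the raising fan, so that fan edges of distinct fans do not $X$-cross and fan edges of a single fan contribute only a bounded number of mutual nestings. I expect the trickiest subcase to be fans whose upper-vertex interval genuinely straddles the middle path: here I would rely on the sequential index structure of characteristic fans, which forces consecutive lower vertices along $\MM_{i+1}$ to correspond to consecutive upper-vertex intervals on layer $i$, so that reversing $\MM_{i+1}$ flips exactly those pairs that need to be flipped to eliminate any crossing with the orderly left wing and the reversed right wing. Once this geometric bookkeeping is complete, all four clauses of $(\QQ,\XX,\DD)$-well-placedness hold and the lemma follows.
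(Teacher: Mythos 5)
Your proposal follows essentially the same route as the paper: the paper states this lemma with no separate proof, relying on exactly the placement you describe (on each track, $\LL_i$ placed orderly, then $\MM_i$ and $\RR_i$ reversed) and the same informal reasoning that the order/reverse combination avoids $X$-crossings, given in the two paragraphs immediately preceding the lemma. Your clause-by-clause verification plan, including the isolation of the fan edges straddling the middle path as the hard case, is if anything more detailed than the paper's own justification.
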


In the next theorem, we prove that a raising fan $\tilde{\FF}$ can be $(\QQ, \XX, \DD)$-well-placed in a ladder $\HH$.

\begin{theorem}\label{thm:fan-path-region-order}
Given a raising fan $\tilde{\FF}$, the left wings and right wings of $\tilde{\FF}$
form two sequential regions $\tilde{\WW}^L(\tilde{\FF})=(\WW^L_1, \WW^L_2, \cdots, \WW^L_p)$ and
$\tilde{\WW}^R(\tilde{\FF})=(\WW^R_1, \WW^R_2, \cdots, \WW^R_q)$.
If the sequential regions $\tilde{\WW}^L(\tilde{\FF})=(\WW^L_1, \WW^L_2, \cdots, \WW^L_p)$ are sequentially placed in $\HH$ as the order:
$\tilde{\WW}^L_{\HH}(\tilde{\FF})=(\WW^L_1, \WW^L_2, \cdots, \WW^L_p)$
and
the sequential regions
$\tilde{\WW}^R(\tilde{\FF})=(\WW^R_1, \WW^R_2, \cdots, \WW^R_q)$ are reversely placed
as the order in $\HH$:
$\tilde{\WW}_{\HH}(\tilde{\FF})=$
$(\WW^L_1, \WW^L_2,$ $\cdots, \WW^R_p,$
$\WW^R_q, \cdots,$ $\WW^R_2, \WW^R_1)$,
then the sequential regions $\tilde{\WW}_{\HH}(\tilde{\FF})=(\tilde{\WW}^L_{\HH}(\tilde{\FF}), \tilde{\WW}^R_{\HH}(\tilde{\FF}))$ are $(\QQ, \XX, \DD)$-well-placed in $\HH$.
\end{theorem}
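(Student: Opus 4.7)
The plan is to reduce Theorem~\ref{thm:fan-path-region-order} to its already established characteristic version (Lemma~\ref{lem:fan-path-region-order}) by a recursive unfolding along the layer structure of $\tilde{\FF}$. Recall that $\tilde{\FF}$ admits the representation
$(\FF_{1,1},\ldots,\FF_{1,a_1},\FF_{2,1},\ldots,\FF_{2,a_2},\ldots)$,
where, for each fixed $i$, the upper vertices of the subsequence $(\FF_{i,1},\ldots,\FF_{i,a_i})$ all lie on the $i$-th layer of the underlying layerlike graph $\Pi$ of $\GG$. The collection $\tilde{\FF}^C$ is exactly this flattened sequence across all layers, and it is precisely the object handled by Lemma~\ref{lem:fan-path-region-order}. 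The extra structure in the full $\tilde{\FF}$ is that, nested between two consecutive characteristic fans $\FF_{(i,j)}$ and $\FF_{(i,j+1)}$ on layer $i$, there sits a further raising sub-fan $\tilde{\FF}_{i,j}$ whose wings live inside one specific region $\WW^L_{i,j}\in\tilde{\WW}^L(\tilde{\FF}^C)$ on the left and $\WW^R_{i,j}\in\tilde{\WW}^R(\tilde{\FF}^C)$ on the right.

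The first step will be to install the characteristic layout: place $\tilde{\WW}^L(\tilde{\FF}^C)$ in the natural left-to-right order and $\tilde{\WW}^R(\tilde{\FF}^C)$ in reverse, as prescribed by Lemma~\ref{lem:fan-path-region-order}. This directly yields a $(\QQ,\XX,\DD)$-well-placed layout of the characteristic layer-1 fan collection. The second step is recursion. For each pair $(i,j)$ I apply the induction hypothesis to the sub-raising fan $\tilde{\FF}_{i,j}$, producing $(\QQ,\XX,\DD)$-well-placed left and right sequential regions $\tilde{\WW}^L_{\HH}(\tilde{\FF}_{i,j})$ and $\tilde{\WW}^R_{\HH}(\tilde{\FF}_{i,j})$. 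These are then inserted into the region slots $\WW^L_{i,j}$ and $\WW^R_{i,j}$ of the characteristic layout, respectively, oriented left-to-right on the left wing and reversed on the right wing — exactly matching the orientations fixed at the characteristic level. The induction terminates because each recursive call strictly reduces the number of underlying layers.

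The third step is verifying the three well-placedness conditions for the composed layout. Because (i) the characteristic regions $\tilde{\WW}^L(\tilde{\FF}^C)$, $\tilde{\WW}^R(\tilde{\FF}^C)$ are pairwise disjoint separators of $\GG$, and (ii) each sub-fan $\tilde{\FF}_{i,j}$ is entirely contained in its own region slot, no edge from a recursive layer can cross, nest, or even share a track range with edges produced at a different recursive branch. Thus the queue number and the $X$-crossing number achieved at the characteristic level are not worsened by the recursive insertions, and the distance number is bounded by the maximum gap across a single layer transition — which is constant, since consecutive fans in $\tilde{\FF}$ straddle only contiguous layers. Together with the matching orientations (orderly on the left, reversed on the right), this excludes the two natural failure modes: nesting of chords on a single track and $X$-crossings between the middle path $\MM_i$ and the right wing $\RR_i$.

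The main obstacle I expect is the third step, specifically the interface between the middle path $\MM_i=(m_{i,1},\ldots,m_{i,a_i})$ placed reversely and the recursively placed right sub-regions $\tilde{\WW}^R_{\HH}(\tilde{\FF}_{i,j})$. The reversal is forced by the orderly placement of $\tilde{\WW}^L(\tilde{\FF}^C_i)$ to kill $X$-crossings on the left, but it then propagates: the right wing $\tilde{\WW}^R(\tilde{\FF}^C_i)$ must also be reversed, and any edge connecting a vertex $m_{i,j}$ to the interior of the nested region $\WW^R_{i,j}$ must respect this orientation. I will need to argue carefully that the induction hypothesis delivers $\tilde{\WW}^R_{\HH}(\tilde{\FF}_{i,j})$ in the orientation compatible with the reversal, so that the outer interface and the inner interface line up. This is where the definition of \emph{upward-merging} raising-paths is crucial: it guarantees that the boundaries of nested regions along the right wing inherit the reversed order consistently from the outer characteristic layout, closing the argument.
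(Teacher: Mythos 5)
Your proposal follows essentially the same route as the paper's proof: first lay out the characteristic-raising fan $\tilde{\FF}^C$ via Lemma~\ref{lem:fan-path-region-order} (left regions orderly, right regions reversed), then recursively insert each nested sub-fan $\tilde{\FF}_{i,j}$ into its region slots $\WW^L_{i,j}$ and $\WW^R_{i,j}$ with matching orientations, and finally check that arms at each layer transition cannot $X$-cross the characteristic arms. The delicate interface you flag — the reversed middle path against the recursively placed right sub-regions — is exactly the point the paper's argument also hinges on, so the proposal is consistent with the paper's proof.
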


\begin{proof}
From Lemma \ref{lemma:fan-path-partition}, we know that
a raising fan $\tilde{\FF}^C$ can form two sequential regions $(\tilde{\WW}^L(\tilde{\FF}),$
$\tilde{\WW}^R(\tilde{\FF}))$
that are at left and right of the middle path $\MM$, respectively.

For a characteristic-raising fan $\tilde{\FF}^C$,
we know that
$(\tilde{\WW}^L_{\HH}(\tilde{\FF}^C), \tilde{\WW}^L_{\HH}(\tilde{\FF}^C))$ are $(\QQ, \XX, \DD)$-well-placed in a ladder $\HH$ by Lemma \ref{lem:fan-path-region-order}.
Moreover,
the sequential regions
$\tilde{\WW}^L(\tilde{\FF}^C)$ are orderly placed in $\HH$ as$\tilde{\WW}^L_{\HH}(\tilde{\FF}^C)$ and
the sequential regions
$\tilde{\WW}^R(\tilde{\FF}^C)$ are reversely placed in $\HH$ as $\tilde{\WW}^R_{\HH}(\tilde{\FF}^C)$.
For each subsequential fans $\tilde{\FF}_{i, j}$ from $\FF_{(i, j)+1}$ to $\FF_{(i, j+1)-1}, i, j\geq1$,
we know that the unions of left wings $\LL_{i, j}$ and right wings $\RR_{i, j}$ of all fans from $\FF_{(i, j)+1}$ to $\FF_{(i, j+1)-1}$
are inside the rightmost region $\WW^L_{i, j}$ and leftmost regions $\WW^R_{i, j}$ of $\tilde{\WW}^L(\tilde{\FF}^C)_{i, j}$ and $\tilde{\WW}^R(\tilde{\FF}^C)_{i, j}$, respectively.
Hence we can recursively partition the regions $\WW^L_{i, j}$ and $\WW^R_{i, j}$ by the two left wings $\LL_{i, j}$ and right wings $\RR_{i, j}$ to
have two sequential regions
$\tilde{\WW}(\LL_{i, j})$ and
$\tilde{\WW}(\RR_{i, j})$ that can be
$(\QQ, \XX, \DD)$-well-placed as
$\tilde{\WW}_{\HH}(\LL_{i, j})$ and
$\tilde{\WW}_{\HH}(\RR_{i, j})$
in $\HH$.
Because the regions $\WW^L_{i, j}$ and $\WW^R_{i, j}$ are orderly and reversely placed in $\HH$, respectively,
we can place the sequential regions $\tilde{\WW}_{\HH}(\LL_{i, j})$ orderly
and the sequential regions $\tilde{\WW}_{\HH}(\RR_{i, j})$ reversely
inside the two regions
$\tilde{\WW}^L_{i, j}$ and $\tilde{\WW}^R_{i, j}$, respectively
to have an $(\QQ, \XX, \DD)$-well-placed layout in $\HH$.
Moreover, the middle path from $\FF_{(i, j)+1}$ to $\FF_{(i, j+1)-1}$ is placed reversely between the two vertices $m_{(i, j)}$ and
$m_{(i, j+1)-1}$.

Because (1) $\LL_{i, j}$ are placed between $\FF_{i, j}$ and $\FF_{i, j+1}$, and
(2) the middle path from $\FF_{(i, j)+1}$ to $\FF_{(i, j+1)-1}$ is placed reversely between the two vertices $m_{(i, j)}$ and
$m_{(i, j+1)-1}$,
each left arm from
$\FF_{(i, j)+1}$ to $\FF_{i, j+1}-1$ is placed between tracks $i$ and $i+1$ cannot make $X$-crossing with
left arms of the characteristic-raising fan $\tilde{\FF}^C$ between tracks $i$ and $i+1$in $\HH$.
Similarly, each right arm of $\RR_{i, j}$ is placed between
tracks $i$ and $i+1$ cannot
make $X$-crossing with
right arms of the characteristic-raising fan $\tilde{\FF}^C$ between tracks $i$
and $i+1$ in $\HH$.

Finally, we conclude that
the sequential regions $\tilde{\WW}_{\HH}(\tilde{\FF})=(\tilde{\WW}^L_{\HH}(\tilde{\FF}), \tilde{\WW}^R_{\HH}(\tilde{\FF}))$ are $(\QQ, \XX, \DD)$-well-placed in $\HH$.

\end{proof}

\section{How to Find a Skeleton $\Psi$ in a Region $\WW$ to Have an $(\QQ, \XX, \DD)$-Well-Placed Layout $\Psi_{\HH}$ in a Ladder $\HH$?}\label{sec:skeleton}

In this section, we prove that a skeleton $\Psi$ of a regions $\WW$ can be obtained from all \emph{rightward-outer} and \emph{leftward-outer} fans in $\WW$ and can have an
$(\QQ, \XX, \DD)$-well-placed layout in $\HH$.

\subsection{A Forest-Like Representation $\clubsuit_{\BB}$ of Raising Fans from a Boundary $\BB$}\label{sec:forest-like}

In this subsection, we show how to build a forest-like representation $\clubsuit_{\BB}$
from all \emph{rightward-outer} fans of all vertices on a boundary $\BB$.

Given a region $\WW=(\BB^L, \BB^R)$,
let $v$ be a vertex on the boundary $\BB^L$ and $(v_1, v_2, \cdots)$ be the sequential children of $v$ inside the region $\WW$.
A fan $\FF=(u_1, u_2, \cdots, m)$ is called a \emph{rightward} for $v$' children on a left boundary of a region $\WW'$
if $\FF$'s upper vertices can be
partitioned into two contiguous subsequences that the first contiguous subsequence is
among $v$'s children and
the second contiguous subsequence
is not among $v$'s children.

A fan $\FF=(u_1, u_2, \cdots, m)$ is called a \emph{rightward} fan for a vertex $v$ on a left boundary of a region $\WW'$
if $\FF$'s the first upper vertex $u_1$ is the vertex $v$.

Given a subregion $\WW'$ inside a region $\WW$ and the vertex $v$ is on the lowest layer of the intersection of the $\WW$'s left boundary $\BB^L$ and $\WW'$'s left boundary,
a \emph{leftmost-raising-fan} path $\tilde{\FF}$ inside the subregion $\WW'$ of $\WW$ is a maximal raising fan consisting of all $v$'s rightward-outer fans.
Intuitively, the layer of any rightward-outer fan not included in $\tilde{\FF}$ can not be lower than the lowest layer of a rightward-outer fan included in $\tilde{\FF}$.

Similarly,
a \emph{leftward-outer} fan and
a \emph{leftward} fan of a vertex $v$ in a region $\WW'$ also can be defined symmetrically to a rightward fan and a rightward-outer fan of a vertex $v\in \BB^R$, respectively. And,
a \emph{rightmost-raising-fan} path inside a region $\WW'$ also can be defined symmetrically to a leftmost raising fan inside a region $\WW'$.

Without loss of generality,
assume that $\BB$ is $\WW$'s
the left boundary,
we plan to find a set of raising fans from $\BB$'s all rightward-outer fans and
represent them as a forest-like representation $\clubsuit_{\BB}$
where each vertex of $\clubsuit_{\BB}$ represents a raising fan.
Initially $\clubsuit_{\BB}$'s root is a raising fan $\tilde{\FF}$ consisting the only region $\WW$
and add the region $\WW$ into $\tilde{\WW}(\clubsuit_{\BB})$.
Then $\clubsuit_{\BB}$ can be constructed as follows:
repeatedly find a region $\WW'$ from the sequential regions having uncolored rightward-outer fans inside the region $\WW'$
and $\WW'$ is a region of a raising fan $\tilde{\FF}$
, to
\begin{enumerate}
\item find a maximal raising fan $\tilde{\FF}'$ consisting of all leftmost and uncolored fans inside the region $\WW'$,

\item  assign the raising fan $\tilde{\FF}'$ to a child of $\tilde{\FF}$ consisting of the region $\WW'$ in $\clubsuit_{\BB}$ and color the raising fan $\tilde{\FF}'$,

\item partition the region $\WW'$ into sequential regions $\tilde{\WW}'$ by the raising fan $\tilde{\FF}'$ and replace the region $\WW'$ by the sequential regions $\tilde{\WW}'$ in $\tilde{\WW}(\clubsuit_{\BB})$, and

\end{enumerate}
till no any such region can be found. Now we immediately have the following two lemmas:




\begin{lemma}
Given a region $\WW=(\BB^L, \BB^R)$ with the left and right boundaries $\BB^L$ and $\BB^R$, the collection of all rightward-outer and leftward-outer fans of the boundary $\BB^L$ and
$\BB^R$ form sequential regions $(\WW^L_1, \WW^L_2, \cdots, \WW^L_p)$ and $(\WW^R_1, \WW^R_2, \cdots, \WW^R_q)$, respectively.
Also, the two sequential regions partition the region $\WW$ into sequential regions
$(\WW^L_1, \WW^L_2, \cdots, \WW^L_p,$ $\WW^M,$ $\WW^R_1, \WW^R_2, \cdots, \WW^R_q)$ where $\WW^M$
is a region bounded by the right boundary of $\WW^L_p$ and the left boundary of $\WW^R_1$.
\end{lemma}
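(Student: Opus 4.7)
The plan is to construct the two sequences $(\WW^L_1, \ldots, \WW^L_p)$ and $(\WW^R_1, \ldots, \WW^R_q)$ using the forest-like representations $\clubsuit_{\BB^L}$ and $\clubsuit_{\BB^R}$ from Subsection~\ref{sec:forest-like}, and then to argue that together they carve $\WW$ into the claimed sequence with a single leftover middle region $\WW^M$. The essential inputs are Lemma~\ref{lemma:fan-path-partition}, which converts each extracted raising fan into sequential regions, and the symmetry between rightward-outer and leftward-outer fans.

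First I would build $\clubsuit_{\BB^L}$ exactly as prescribed: start with $\WW$ at the root, and repeatedly pick a subregion $\WW'$ still containing an uncolored rightward-outer fan of $\BB^L$, extract a leftmost raising fan $\tilde{\FF}'$ inside $\WW'$ (which is maximal by construction), color it, and by Lemma~\ref{lemma:fan-path-partition} replace $\WW'$ in $\tilde{\WW}(\clubsuit_{\BB^L})$ by the left/right sequential regions produced by $\tilde{\FF}'$. I would prove by induction on the construction that at every stage the subregions sitting immediately to the right of $\BB^L$ form a sequence, and when no uncolored rightward-outer fan remains anywhere, I concatenate the resulting leaves that touch $\BB^L$ from left to right into $(\WW^L_1, \ldots, \WW^L_p)$. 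The symmetric construction applied to $\BB^R$ yields $(\WW^R_1, \ldots, \WW^R_q)$.

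Next I would verify that the two resulting sequences together partition $\WW$ as claimed. The key geometric observation is that a rightward-outer fan of $\BB^L$, by definition, has its starting upper vertex on $\BB^L$ and sweeps rightward within the embedding of $\WW$, while a leftward-outer fan of $\BB^R$ starts on $\BB^R$ and sweeps leftward; planarity of the layerlike embedding of $\GG(\WW)$ then prevents the two families from interleaving. Consequently, placing $(\WW^L_1, \ldots, \WW^L_p)$ flush against $\BB^L$ and $(\WW^R_1, \ldots, \WW^R_q)$ flush against $\BB^R$ leaves an uncovered middle piece whose left boundary is the right boundary of $\WW^L_p$ and whose right boundary is the left boundary of $\WW^R_1$; this piece is taken as $\WW^M$.

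The main obstacle I expect is verifying two things cleanly: (i) that the $\WW^L_i$'s are pairwise disjoint and sequentially adjacent with no residual sliver between consecutive ones, and (ii) that the uncovered middle piece is connected, i.e.\ a single $\WW^M$ rather than several components. For (i), I would lean on the maximality of the leftmost raising fan chosen at each step, which forces any unprocessed rightward-outer fan of $\BB^L$ to live strictly to the right of the current fan and hence to be absorbed in the sequential regions created to its right; together with the termination condition of the recursion (no remaining uncolored rightward-outer fan), this gives sequentiality. For (ii), I would use the fact that $\WW$ is a separator in the composite-layerlike graph and that $\BB^L$ and $\BB^R$ share only the root $r$; any purported gap in the middle would have to be enclosed by a boundary that is itself either part of some rightward-outer fan or some leftward-outer fan, contradicting the termination of both constructions. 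Once these points are in place, the concatenation $(\WW^L_1, \ldots, \WW^L_p, \WW^M, \WW^R_1, \ldots, \WW^R_q)$ is an honest sequential partition of $\WW$.
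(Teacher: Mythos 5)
Your proposal follows essentially the same route as the paper: the paper states this lemma with no explicit proof, presenting it as an immediate consequence of the forest-like construction of $\clubsuit_{\BB^L}$ and $\clubsuit_{\BB^R}$ together with the raising-fan partition of Lemma~\ref{lemma:fan-path-partition}, which is exactly the machinery you invoke. Your added justifications for sequentiality (maximality of the leftmost raising fan at each step) and for the single connected middle region $\WW^M$ (planarity preventing the two fan families from interleaving) are reasonable fillings-in of details the paper leaves implicit.
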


\begin{lemma}\label{lem:forest-like}
Given a region $\WW=(\BB^L, \BB^R)$,
we can have two forest-like representations $\clubsuit_{\BB^L}$ and $\clubsuit_{\BB^R}$
from the collection of all rightward-outer and leftward-outer fans of $\BB^L$ and $\BB^R$, respectively.
\end{lemma}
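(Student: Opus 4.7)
The plan is to verify that the three-step iterative procedure described just before the lemma is well-defined, terminates, and produces a valid tree for $\clubsuit_{\BB^L}$; the symmetric argument applied to leftward-outer fans on $\BB^R$ then yields $\clubsuit_{\BB^R}$. I would proceed by induction on the number of uncolored rightward-outer fans remaining in the regions currently listed in $\tilde{\WW}(\clubsuit_{\BB^L})$. For the base case, the root of $\clubsuit_{\BB^L}$ is the trivial node associated with $\WW$ itself, and $\tilde{\WW}(\clubsuit_{\BB^L})=(\WW)$; all rightward-outer fans of $\BB^L$ are initially uncolored and lie inside $\WW$, so the invariant holds.

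For the inductive step, I first show that whenever some region $\WW'\in\tilde{\WW}(\clubsuit_{\BB^L})$ still contains uncolored rightward-outer fans of $\BB^L$, the leftmost-raising-fan path $\tilde{\FF}'$ consisting of all leftmost uncolored rightward-outer fans inside $\WW'$ exists and is uniquely determined. The key observation is that by definition of a rightward-outer fan, each such fan is rooted at the vertex $v$ sitting at the lowest layer of the intersection of $\BB^L(\WW)$ and $\BB^L(\WW')$; taking the uncolored rightward-outer fan of $v$ at the lowest layer gives $\FF_1$, and the raising structure extends upward by selecting, at each successive layer, the unique uncolored rightward-outer fan whose down-pointing triangle bounds the lower vertex of the previous fan. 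Maximality then forces this chain to terminate only when no higher rightward-outer fan of $v$ remains uncolored. Once $\tilde{\FF}'$ is fixed, Lemma \ref{lemma:fan-path-partition} yields sequential subregions partitioning $\WW'$; these subregions replace $\WW'$ in $\tilde{\WW}(\clubsuit_{\BB^L})$, the fans of $\tilde{\FF}'$ become colored, and a new node representing $\tilde{\FF}'$ is attached as a child of the node representing the raising fan that previously contained $\WW'$.

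The main obstacle I expect is the uniqueness and well-definedness of the leftmost-raising-fan $\tilde{\FF}'$ at each iteration: one must show that in a composite-layerlike graph the uncolored rightward-outer fans rooted at the anchor vertex $v$ arrange themselves into a single vertical chain of nested down-pointing triangles. This relies on the planarity of the underlying composite-layerlike graph together with the outer-ness condition, which rules out two distinct rightward-outer fans sharing the same down-pointing triangle at a given layer. Once this structural fact is in hand, termination is immediate because each iteration strictly decreases the finite count of uncolored rightward-outer fans; and the forest-like property follows because every newly created node is attached to exactly one parent, namely the unique node whose raising fan contained the region $\WW'$ at the moment $\tilde{\FF}'$ was extracted. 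Applying the same construction to $\BB^R$ with leftward-outer fans, swapping the roles of left and right throughout, yields $\clubsuit_{\BB^R}$ and completes the proof.
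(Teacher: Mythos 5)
Your proposal follows essentially the same route as the paper: the paper itself gives no explicit proof of this lemma, merely describing the three-step iterative construction (pick a region with uncolored rightward-outer fans, extract the leftmost maximal raising fan, attach it as a child node, and repartition) and asserting the lemma follows ``immediately.'' Your induction on the number of uncolored fans, together with the observation that each new node has a unique parent and that the count strictly decreases, is exactly the verification the paper leaves implicit, so the approaches coincide.
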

A forest-like structure $\clubsuit_{\BB}$ is called \emph{left-forest-like} structure, if $\BB$ is the left boundary of a region $\WW$.

Recall that a boundary $\BB$'s forest-like structure $\clubsuit_{\BB}$ is the union of all rightward-outer fans of $(u_1, u_2, \cdots, u_l)$
where each $u_i, i\geq 1,$ is a vertex on the boundary $\BB$.
Observe that for each vertex $u\in \BB$,
every $u$'s child $u'$ is inside a region of $u$'s rightward-outer fan $\FF$ and $\FF$ is a fan in a raising fan $\tilde{\FF}\in (\tilde{\FF}_1, \tilde{\FF}_2, \cdots)$.
Hence $u'$ is inside a region of $\tilde{\WW}(\tilde{\FF})$ and $\tilde{\WW}(\clubsuit_{\BB})$.
From the above discussion, we have the following lemma:

\begin{lemma}\label{lem:region-partition}
Given a region $\WW=(\BB^L, \BB^R)$ and a boundary $\BB\in (\BB^L, \BB^R)$, for each vertex $u\in \BB$, every $u$'s child is inside a region of sequential regions $\tilde{\WW}(\clubsuit_{\BB})$.
\end{lemma}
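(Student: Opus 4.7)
The plan is to derive the lemma directly from the termination condition of the construction of $\clubsuit_{\BB}$ given in Section \ref{sec:forest-like}, combined with the preceding observation that every child of $u$ is the lower vertex of some rightward-outer fan of $u$. Recall that the construction iteratively picks a region $\WW'\in \tilde{\WW}(\clubsuit_{\BB})$ that still contains an uncolored rightward-outer fan, groups the leftmost uncolored such fans into a maximal raising fan $\tilde{\FF}'$, appends $\tilde{\FF}'$ as a child of the node corresponding to $\WW'$ in $\clubsuit_{\BB}$, and replaces $\WW'$ by the sequential subregions carved out by $\tilde{\FF}'$. Termination occurs precisely when no region in $\tilde{\WW}(\clubsuit_{\BB})$ contains an uncolored rightward-outer fan, and this is the hook on which the whole argument hangs.

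First I would fix a vertex $u\in\BB$ and one of its children $u'$ inside $\WW$, and note that $u'$ is the lower vertex of some rightward-outer fan $\FF$ of $u$, since $u\in \BB$ and $u'$ is a neighbor of $u$ interior to $\WW$. By the termination condition, $\FF$ must have been colored at some step of the construction, say as a fan of a maximal raising fan $\tilde{\FF}'$ that refined a region $\WW'\supseteq \FF$. The refinement produces sequential subregions of $\WW'$, and $u'$ lands inside one of them: it is the lower vertex of a down-pointing triangle of $\tilde{\FF}'$, and the two joints of that triangle seed the raising-paths that form the left and right boundaries of the subregion containing $u'$.

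Next I would verify that $u'$ remains inside a region of $\tilde{\WW}(\clubsuit_{\BB})$ throughout all subsequent iterations. Each later refinement only replaces a region containing $u'$ by a finite list of subregions, so by a short induction on the remaining iterations $u'$ is always inside some region of the current $\tilde{\WW}(\clubsuit_{\BB})$, and in particular inside a region of $\tilde{\WW}(\clubsuit_{\BB})$ at termination, which is what the lemma asserts.

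The main obstacle is justifying that every rightward-outer fan $\FF$ of $u$ is indeed colored by the time the loop halts. This reduces to the geometric fact that when a region $\WW'$ is subdivided by a maximal raising fan $\tilde{\FF}'$, no uncolored rightward-outer fan inside $\WW'$ is split across two of the resulting subregions; otherwise the ``leftmost uncolored fan'' step could get stuck processing fragments rather than genuine fans. That fact follows from the upward-merging property of the raising-path set: the raising-paths issuing from the joints of $\tilde{\FF}'$ are either disjoint from or share a common upper subpath with every other raising-path, so they cannot cut transversally through an uncolored fan, which therefore survives intact inside exactly one subregion and remains available for coloring at a later step. Once this is in place, combining it with the finite-fan termination argument closes the proof.
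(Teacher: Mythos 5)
Your proposal is correct and follows essentially the same route as the paper: the paper's own justification is precisely the observation preceding the lemma, namely that every child of a vertex $u\in\BB$ is covered by a rightward-outer fan of $u$, that the construction of $\clubsuit_{\BB}$ terminates only when all such fans have been absorbed into some raising fan, and hence the child lies in a region of $\tilde{\WW}(\clubsuit_{\BB})$. Your additional care about refinement only subdividing regions and about fans not being split across subregions (via the upward-merging property) just makes explicit what the paper leaves implicit.
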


From the above lemma, we know that the union of the two forest-like structures
$\clubsuit_{\BB^L}$ and $\clubsuit_{\BB^R}$ form a skeleton of a region $\WW$.
$\clubsuit_{\BB^L}$ and $\clubsuit_{\BB^R}$ are called \emph{left skeleton} $\Psi^L(\WW)$ and \emph{right skeleton} $\Psi^R(\WW)$, respectively.

\begin{theorem}\label{thm:skeleton}
Given a region $\WW=(\BB^L, \BB^R)$, the union of the left skeleton and right skeleton $\Psi^L(\WW)$ and $\Psi^R(\WW)$ is a skeleton of $\WW$.
\end{theorem}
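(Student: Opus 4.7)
The plan is to verify each requirement of Definition \ref{def:skeleton} for the candidate skeleton $\Psi = \Psi^L(\WW) \cup \Psi^R(\WW)$. First I would handle the root requirement: since $r$ is the common starting vertex of both $\BB^L$ and $\BB^R$, and the construction of each $\clubsuit_{\BB}$ (in Subsection \ref{sec:forest-like}) begins with the initial raising fan associated with the root region $\WW$, the vertex $r$ belongs to both $\Psi^L$ and $\Psi^R$, hence to $\Psi$. This settles condition (1).

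Next, I would establish the sequential region decomposition demanded by condition (2). Applying the (unnamed) lemma immediately preceding Lemma \ref{lem:forest-like}, the rightward-outer fans of $\BB^L$ and the leftward-outer fans of $\BB^R$ together partition $\WW$ into sequential regions of the form $(\WW^L_1, \dots, \WW^L_p, \WW^M, \WW^R_1, \dots, \WW^R_q)$, with $\WW^M$ the middle region bounded by the right boundary of $\WW^L_p$ and the left boundary of $\WW^R_1$. Refining each $\WW^L_i$ and $\WW^R_j$ recursively according to the construction of $\clubsuit_{\BB^L}$ and $\clubsuit_{\BB^R}$ yields the full sequence $\tilde{\WW}(\Psi)$, which contains the coarser subsequence above as a sub-collection.

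For the two sub-conditions concerning non-root boundary vertices, I would invoke Lemma \ref{lem:region-partition} separately for $\BB = \BB^L$ and $\BB = \BB^R$. The lemma gives that every child of a vertex $u \in \BB^L$ lies inside some region of $\tilde{\WW}(\clubsuit_{\BB^L})$, which by construction is precisely the left collection $(\WW^L_1, \WW^L_2, \dots)$ of $\tilde{\WW}(\Psi)$; symmetrically for $\BB^R$. This directly yields sub-conditions (1) and (2) of the definition.

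The main obstacle is sub-condition (3): every child of $r$ must lie inside a middle region $\WW^M_j$. The delicate point is that $r$ appears on both boundaries, so Lemma \ref{lem:region-partition} would nominally place $r$'s children on both the left and right sides. The resolution is that the rightward-outer fans of $r$ occurring in $\clubsuit_{\BB^L}$ absorb the leftmost children of $r$ as upper vertices of fans in $\Psi^L$ itself — so those children are vertices of $\Psi$ rather than interior to any $\WW^L_i$ — and symmetrically for $\clubsuit_{\BB^R}$. Any child of $r$ not absorbed in this way must then lie in the strip bounded on the left by the right boundary of the last left outer fan of $r$ and on the right by the left boundary of the first right outer fan of $r$, which is contained in $\WW^M$. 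A straightforward refinement of $\WW^M$ into $(\WW^M_1, \dots, \WW^M_m)$, obtained by cutting along the separating raising-paths of consecutive non-absorbed $r$-children, produces the required middle sequential regions and completes the verification of condition (2.3). Combining all parts shows that $\Psi = \Psi^L(\WW) \cup \Psi^R(\WW)$ is indeed a skeleton of $\WW$.
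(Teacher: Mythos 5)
Your proposal matches the paper's treatment: the paper states Theorem \ref{thm:skeleton} as an immediate consequence of the unnamed partition lemma preceding Lemma \ref{lem:forest-like} together with Lemma \ref{lem:region-partition} (applied to $\BB^L$ and $\BB^R$), which are exactly the two facts you invoke, and it offers no further written proof. Your extra care with the root's children (condition (3) of Definition \ref{def:skeleton}) supplies a detail the paper leaves implicit, and it is consistent with how the paper later uses the theorem in Lemma \ref{lem:partition}, namely that each child of $r$ not absorbed as a vertex of $\Psi$ lies inside a region of $\tilde{\WW}(\Psi)$.
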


\subsection{An $(\QQ, \XX, \DD)$-Well-Placed Layout in a Ladder $\HH$ for a Forest-Like Structure $\clubsuit_{\BB}$}\label{sec:placement-forest-like}

\begin{algorithm}[t]
\caption{A Framework to Place Forest-Like Raising-Fan Paths $\clubsuit_{\BB}$ in a Ladder $\HH$}
\label{alg:framework-boundary}

\KwIn{A forest-like structure $\clubsuit_{\BB}$ consists of the only one root $r$.}
\KwOut{sequential regions $\tilde{\WW}_{\HH}(r)$ in $\HH$.}

Let $\tilde{\FF}$ be the raising fan at the $\clubsuit_{\BB}$'s root $r$\;

Orderly place the sequential regions $\tilde{\WW}_{\HH}(\tilde{\FF})$ in $\HH$\;

Orderly add the sequential regions $\tilde{\WW}_{\HH}(\tilde{\FF})$ into the first-in-first-out queue $\tilde{\WW}$\;

\While{$\tilde{\WW}$ is not empty}
{

Remove the first region $\WW'$ in $\tilde{\WW}$\;

Place the subsequential regions
$\tilde{\WW}'_{\HH}$ of
$\tilde{\WW}_{\HH}(\clubsuit_{\BB}(\WW'))$ at rightmost in $\HH$
where each region $\WW''$ in $\tilde{\WW}'_{\HH}$ consists of some raising fans in $\clubsuit_{\BB}$\;

Orderly add $\tilde{\WW}_{\HH}(\clubsuit_{\BB}(\WW'))$ into the first-in-first-out queue $\tilde{\WW}$\;

}

\end{algorithm}
Before describing Algorithm \ref{alg:framework-boundary},
we need to roughly define the term $\clubsuit_{\BB}(\WW_i)$ as a subset of raising fans in $\clubsuit_{\BB}$ which consist of all raising fans inside $\clubsuit_{\BB}(\WW_i)$ meet at a vertex on the right boundary of $\WW_i$.
Moreover, the subset of raising fans $\clubsuit(\WW_i)$ partition $\WW_i$ into  sequential regions and can be a $(\QQ, \XX, \DD)$-well-placed layout in $\HH$. (They are proven in Lemmas \ref{lem:partition-boundary} and \ref{lem:layout-boundary}, respectively.)
Initially, in Algorithm \ref{alg:framework-boundary}, the root $r$'s raising fan $\tilde{\FF}$ in a forest-like structure $\clubsuit_{\BB}$ form sequential
regions $\tilde{\WW}_{\HH}(\tilde{\FF})$ in $\HH$ and we put them into the first-in-first-out queue orderly.
Next each iteration $i$ of Algorithm \ref{alg:framework-boundary} picks up
the first region $\WW_i$ in the first-in-first-out queue $\tilde{\WW}$
(it means that the region $\WW_i$ at the leftmost region
such that the region $\WW_i$ consists of raising fans in $\clubsuit_{\BB}$ in $\HH$;
$|\clubsuit_{\BB}(\WW_i)|>0$)
to place the sequential regions $\tilde{\WW}(\clubsuit_{\BB}(\WW_i))$ (it can be proven in Lemma \ref{lem:partition-boundary}) at the rightmost side in $\HH$ as $\tilde{\WW}_{\HH}(\clubsuit_{\BB}(\WW_i))$. (It can be proven in Lemma \ref{lem:layout-boundary}.)
It means that all regions $\WW_j$ at left of $\WW_i$ in $\HH$ doesn't consist any raising fan in $\clubsuit_{\BB}$ ($|\clubsuit_{\BB}(\WW_j)|=0$) and the chords between $\WW_j$ and $\tilde{\WW}(\clubsuit_{\BB}(\WW_j))$ don't nest with
the chords between $\WW_i$ and
$\tilde{\WW}_{\HH}(\clubsuit_{\BB}(\WW_i))$
because $\WW_j$ is at left of $\WW_i$ in $\HH$ and
$\tilde{\WW}_{\HH}(\clubsuit_{\BB}(\WW_j))$ are at left of
$\tilde{\WW}_{\HH}(\clubsuit_{\BB}(\WW_i))$ in $\HH$.
It means that
the left end-vertices of the chords $\tilde{\WW}_{\HH}(\clubsuit_{\BB}(\WW_j))$ are placed at left of
the left end-vertices of the chords $\tilde{\WW}_{\HH}(\clubsuit_{\BB}(\WW_i))$, and
the right end-vertices of the chords $\tilde{\WW}_{\HH}(\clubsuit_{\BB}(\WW_j))$ are placed at left of
the right end-vertices of the chords $\tilde{\WW}_{\HH}(\clubsuit_{\BB}(\WW_i))$ in $\HH$.

Next, we give a precise definition of $\clubsuit_{\BB}(\WW_i)$ as follows:

\begin{definition}
Given a region $\WW_i=(\BB^L, \BB^R)$ and
a left forest-like structure $\clubsuit_{\BB}$,
$\clubsuit_{\BB}(\WW_i)$ is defined to consist of two sequential raising fans
$\clubsuit^L=(\tilde{\FF}^L_1, \tilde{\FF}^L_2, \cdots, \tilde{\FF}^L_p)$ and
$\clubsuit^R=(\tilde{\FF}^R_1, \tilde{\FF}^R_2, \cdots, \tilde{\FF}^R_q)$
where
\begin{itemize}

\item
$\clubsuit^L=(\tilde{\FF}^L_1, \tilde{\FF}^L_2, \cdots, \tilde{\FF}^L_p)$
are the maximal subsequential raising fans in $\clubsuit_{\BB}$
such that for each raising fan $\tilde{\FF}^L_i, 1\leq i\leq p$,
(1) $\tilde{\FF}^L_i$ does not touch any vertex on the right boundary $\BB^R$,
(2) $\tilde{\FF}^L_i$ is not a descendant of any raising fan $\tilde{\FF}^R_j, 1\leq j\leq q$, and
(3) $\tilde{\FF}^L_i$ is a right sibling of $\tilde{\FF}^L_{i-1}$ in $\clubsuit_{\BB}$. And,

\item
$\clubsuit^R=(\tilde{\FF}^R_1, \tilde{\FF}^R_2, \cdots, \tilde{\FF}^R_q)$
are the maximal subsequential raising fans in $\clubsuit_{\BB}$
such that for each raising fan $\tilde{\FF}^R_i, 1\leq i\leq q$,
$\tilde{\FF}^R_i$'s right wings touch the right boundary $\BB^R$.
\end{itemize}
\end{definition}

In the followings, we prove three properties: the first one proves that sequential raising fans $\clubsuit^R$ form a contiguous path in $\clubsuit_{\BB}$. Properties \ref{prop:left-contiguous} and \ref{prop:right-contiguous} state that for each vertex $v$, all rightward-outer fans for children of a vertex $v\notin \BB$ form a contiguous subsequence in $\clubsuit^R$.
Also, all rightward fans for a vertex $v\notin \BB$ also form contiguous subsequence in $\clubsuit^R$.

For any region $\WW_i\in \tilde{\WW}(\tilde{\FF^R_i})$,
$\WW_i$ has three different types: the type one is the $\WW_i$'s root is at a vertex on the left boundary of $\WW_{i-1}$. The type two is that $\BB^L(\WW_i)\cap \BB^L(\WW_{i-1})$ and $\BB^R(\WW_i)\cap \BB^R(\WW_{i-1})$ are sub-paths of $\BB^L(\WW_{i-1})$ and $\BB^R(\WW_{i-1})$ from the $\WW_{i-1}$'s root, respectively. The type three is that the $\WW_i$'s is at a vertex on the right boundary of $\WW_{i-1}$.
In the sequential regions $\tilde{\WW}(\tilde{\FF}^R_i)$, types one, two and three are orderly appeared from left to right.
Notes that the third type of region can bound contiguous raising fans of $\clubsuit^R$.

Given a region $\WW_i\in \tilde{\FF}^R_i$, the region $\WW_i$ is called a \emph{black-hole} if $\WW_i$ is the type two.
Intuitively, a black-hole is a region such that it can bound a contiguous subsequence $(\tilde{\FF}^R_{i+1}, \tilde{\FF}^R_{i+2}, \cdots, \tilde{\FF}^R_q)$ in $\clubsuit^R$.
The following observation states that for each raising fan in $\clubsuit^R$, there is the only one region which can be a black-hole.
Note that for a region $\WW_i$ in a raising fan $\tilde{\FF}^R_i$.
If the $\WW_i$'s root is at a vertex on the left boundary of $\WW_{i-1}$ or on the right boundary of $\WW_{i-1}$, then $\WW_i$ cannot consist of any raising fan in $\clubsuit^R$.
Hence the only region in $\tilde{\FF}_i$ which can consist of contiguous raising fans $(\tilde{\FF}^R_{i+1}, \tilde{\FF}^R_{i+2}, \cdots, \tilde{\FF}^R_q)$ in $\clubsuit^R$ is the $\tilde{\FF}$'s black-hole.
\begin{observation}\label{obs:black-hole}
If a region $\WW_i\in \tilde{\WW}(\tilde{\FF}^R_i)$ is a black-hole, then the region $\WW_i$ bounds subsequential raising fans $(\tilde{\FF}^R_{i+1}, \tilde{\FF}^R_{i+2}, \cdots, \tilde{\FF}^R_q)\subseteq \clubsuit^R$
where $(\tilde{\FF}^R_{i+1}, \tilde{\FF}^R_{i+2}, \cdots, \tilde{\FF}^R_q)\subseteq \clubsuit^R$
 form a contiguous subpath in $\clubsuit^R$ and $\clubsuit_{\BB}$.
And, $\WW_i$ is the only one black-hole in the raising fan $\tilde{\FF}^R_i$.
\end{observation}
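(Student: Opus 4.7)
The plan is to exploit the defining feature of $\clubsuit^R$—each raising fan $\tilde{\FF}^R_j$ has a right wing touching the right boundary $\BB^R$ of $\WW$—so that subsequent raising fans in $\clubsuit^R$ are geometrically forced into a uniquely determined region of $\tilde{\WW}(\tilde{\FF}^R_i)$. The scaffolding is already available: the paragraph just before the statement lists the three types of regions inside $\tilde{\WW}(\tilde{\FF}^R_i)$, so the argument reduces to matching ``contains a later $\tilde{\FF}^R_j$'' with ``is the type-two region''.

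First, I would observe that since $\tilde{\FF}^R_i$ has been placed and partitions its enclosing region into the sequential decomposition $\tilde{\WW}(\tilde{\FF}^R_i)$, every raising fan $\tilde{\FF}^R_j$ with $j>i$ in $\clubsuit^R$ lies inside some member of this decomposition. The requirement on $\tilde{\FF}^R_j$ is that its right wing touches $\BB^R$, which forces the hosting region to have its own right boundary coincide with $\BB^R$ down to the depth of $\tilde{\FF}^R_j$.

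Second, I would examine the three region types in order. A type-one region is rooted at a vertex strictly on the left boundary of the parent $\WW_{i-1}$, so its right boundary descends along $\tilde{\FF}^R_i$ and is pulled away from $\BB^R$; hence it cannot host any $\tilde{\FF}^R_j$. A type-three region is rooted on the right boundary of $\WW_{i-1}$ below $\tilde{\FF}^R_i$'s descent point; although its right boundary briefly meets $\BB^R$, I will invoke the leftmost-maximal rule used in the construction of $\clubsuit_{\BB}$ in Section \ref{sec:forest-like} to argue that any rightward-outer raising fan inside such a region would have been absorbed into $\tilde{\FF}^R_i$ itself, so no new $\tilde{\FF}^R_j$ can start there. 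The type-two region, by the very definition of a black-hole, inherits both boundaries of $\WW_{i-1}$ as sub-paths from its root, so its right boundary is exactly a portion of $\BB^R$; this is the only region that can, and therefore must, contain every $\tilde{\FF}^R_j$ with $j>i$.

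Uniqueness of the black-hole inside $\tilde{\FF}^R_i$ will follow from the fixed ordering of region types noted earlier (type one, then type two, then type three, from left to right), which allows at most one region whose both boundaries extend from the root of $\WW_{i-1}$. The contiguity of $(\tilde{\FF}^R_{i+1},\dots,\tilde{\FF}^R_q)$ both in $\clubsuit^R$ and in the larger $\clubsuit_{\BB}$ will be inherited directly from the construction in Section \ref{sec:forest-like}: at each iteration a maximal leftmost-uncolored raising fan is selected, so the fan created immediately after $\tilde{\FF}^R_i$ inside its black-hole is precisely $\tilde{\FF}^R_{i+1}$, and iterating produces a contiguous subpath. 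The step I expect to need the most care is the type-three case, where the right boundary genuinely touches $\BB^R$; the obstacle is not topological but rests on the maximality rule of $\clubsuit_{\BB}$, and I plan to address it by contradiction, showing that any such $\tilde{\FF}^R_j$ inside a type-three region would merge with $\tilde{\FF}^R_i$, contradicting the maximality used when $\tilde{\FF}^R_i$ was introduced.
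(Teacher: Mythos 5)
Your proposal follows essentially the same route as the paper: the paper justifies the observation by the same three-type classification of the regions in $\tilde{\WW}(\tilde{\FF}^R_i)$, asserting that a region rooted on the left or right boundary of the enclosing region cannot contain any raising fan of $\clubsuit^R$, so only the (unique) type-two region, i.e.\ the black-hole, can bound the remaining fans, with contiguity coming from the leftmost-maximal construction of $\clubsuit_{\BB}$. Your added maximality argument for the type-three case merely makes explicit a step the paper leaves as a bare assertion, so the two arguments coincide in substance.
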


\begin{property}
Sequential raising fans $\clubsuit^R=(\tilde{\FF}^R_1, \tilde{\FF}^R_2, \cdots, \tilde{\FF}^R_q)$
are an ancestor-descendant path in $\clubsuit_{\BB}$
such that
for each $1\leq i\leq q-1$,
a raising fan $\tilde{\FF}^R_i$ is the parent
of $\tilde{\FF}^R_{i+1}$ in $\clubsuit_{\BB}$.
($\tilde{\FF}^R_i$ bounds $\tilde{\FF}^R_{i+1}$.)
\end{property}

\begin{proof}
From Observation \ref{obs:black-hole},
we know that for each raising fan $\tilde{\FF}^R_i, 1\leq i\leq q$, there is the only one black-hole in the raising fan $\tilde{\FF}^R_i$.
Then
for each raising fan $\tilde{\FF}^R_i, 1\leq i\leq q-1$,
$\tilde{\FF}^R_i$ bounds the  raising fan $\tilde{\FF}^R_{i+1}$ in $\clubsuit^R$ and $\tilde{\FF}^R_i$ is the parent of $\tilde{\FF}^R_{i+1}$ in $\clubsuit_{\BB}$.
Hence we can prove that $\clubsuit^R=(\tilde{\FF}^R_1, \tilde{\FF}^R_2, \cdots, \tilde{\FF}^R_q)
$ is a path in $\clubsuit_{\BB}$.
\end{proof}

\begin{property}\label{prop:left-contiguous}
Sequential raising fans $\clubsuit^R$ have the following property:
for each vertex $v \notin \BB$ in the right wing of a raising fan $\tilde{\FF} \in \clubsuit^R$,
\begin{enumerate}
\item all rightward fans for the vertex $v$ form at most one raising fan in $\clubsuit^R$ and

\item all rightward fans for $v$'s children form at most one raising fan in $\clubsuit^R$.
\end{enumerate}
\end{property}

\begin{proof}
Let $v$ be a vertex at the right boundary of $\tilde{\FF}$'s right wing.
All rightward fans for the vertex $v$ and rightward-outer fans for $v$'s children are consisted in at most one raising fan $\tilde{\FF}' \in \clubsuit^R$.
We know that no any raising fan in $\clubsuit^R$ can be inside a region formed by $\tilde{\FF}'$'s left wing and it leads that the right boundary of $\tilde{\FF}$'s left wing cannot overlap $\tilde{\FF}'$'s right wing.

Hence we can conclude that for each vertex $v\notin \BB$ that $v$ is a vertex of the right wing of $\tilde{\FF} \in \clubsuit^R$,
all rightward fans for the vertex $v$ and all rightward-outer fans for $v$'s children form at most one raising fan $\tilde{\FF}'$ in $\clubsuit^R$.
\end{proof}

\begin{property}\label{prop:right-contiguous}
Sequential raising fans $\clubsuit^R=(\tilde{\FF}^R_1, \tilde{\FF}^R_2, \cdots, \tilde{\FF}^R_q)$ have the following property:
for each vertex $v\notin \BB$ in the right wing of a raising fan $\tilde{\FF}^R_i\in \clubsuit^R$,
all $u$'s leftward-outer fans form contiguous raising fans $(\tilde{\FF}^R_i, \tilde{\FF}^R_{i+1}, \cdots, \tilde{\FF}^R_a) \in \clubsuit^R$.
\end{property}

\begin{proof}
Let $\WW'$ be a black-hole passes through a vertex $u$ where the vertex $u$ is on the right boundary $\BB^R(\WW')$ of $\WW'$.
Let $\tilde{\FF}^R_{a_1}$ be the first raising fan in $\clubsuit^R$ such that the black-hole $\WW_{a_1} \in \tilde{\WW}(\tilde{\FF}^R_{a_1})$ passes through the vertex $u$.
Observe that all $u$'s leftward-outer fans are shared by a maximal contiguous black-holes $(\WW_{a_1}, \WW_{a_1+2}, \cdots, \WW_{a_2})$ where
each black-hole $\WW_j, a_1\leq j\leq a_2$, passes through the vertex $u$ till the black-hole $\WW_{a_2+1}\in \tilde{\WW}(\tilde{\FF}^R_{a_2+1})$ doesn't pass through the vertex $u$.
Now we know that there are sequential raising fans $(\tilde{\FF}^R_{a_1}, \tilde{\FF}^R_{a_1+1}, \cdots, \tilde{\FF}^R_{a_2}) \subseteq \clubsuit^R$ such that
each raising fan $\tilde{\FF}^R_i, a_1\leq i\leq a_2,$ consists of a black-hole $\WW_i$.
When the black-hole in
$\tilde{\FF}^R_{a_2+1}$ doesn't passes through the vertex $u$,
each raising fan $\tilde{\FF}^R_j, a_2+1\leq j\leq q$, cannot consists of any $u$'s leftward-outer fan.
Hence we can prove that $u$'s leftward-outer fans are shared by contiguous raising fans $(\tilde{\FF}^R_{a_1}, \tilde{\FF}^R_{a_1+1}, \cdots, \tilde{\FF}^R_{a_2}) \subseteq \clubsuit^R$.
\end{proof}

Now we can construct sequential regions $\tilde{\WW}(\clubsuit_{\BB}(\WW))$ from the sequential raising fans $(\clubsuit^L, \clubsuit^R)$ in a region $\WW$ as follows:
firstly, because the sequential raising fans $(\tilde{\FF}^L_1, \tilde{\FF}^L_2, \cdots, \tilde{\FF}^L_p, \tilde{\FF}^R_1)$ are mutually disjoint (they don't have any ancestor-descendant relation in $\clubsuit_{\BB}$), the sequential raising fans $(\tilde{\FF}^L_1, \tilde{\FF}^L_2, \cdots, \tilde{\FF}^L_p, \tilde{\FF}^R_1)$ partition $\WW$ into sequential disjoint regions $(\tilde{\WW}(\tilde{\FF}^L_1),$ $ \tilde{\WW}(\tilde{\FF}^L_2),$ $\cdots,$ $\tilde{\WW}(\tilde{\FF}^L_p),$ $\tilde{\WW}(\tilde{\FF}^R_1))$.
Secondly, for each raising fan $\tilde{\FF}^R_i, 2\leq i\leq q$, process the following steps:
let $\WW_i$ be a region in $\tilde{\WW}(\tilde{\FF}^R_i)$ bounds a raising fan $\tilde{\FF}^R_{i+1}$.
Then replace $\WW_i$ by $\tilde{\WW}(\tilde{\FF}^R_{i+1})$.
After the last raising fan $\tilde{\FF}^R_q$ is processed,
we get sequential disjoint regions $\tilde{\WW}(\clubsuit_{\BB}(\WW))$.

Now we show how to place sequential regions $\tilde{\WW}(\clubsuit_{\BB}(\WW))$ in $\HH$ as $\tilde{\WW}_{\HH}(\clubsuit_{\BB}(\WW))$ and
utilize it to place sequential regions $\tilde{\WW}(\clubsuit_{\BB})$ in $\HH$ as $\tilde{\WW}_{\HH}(\clubsuit_{\BB})$
such that $\tilde{\WW}_{\HH}(\clubsuit_{\BB})$ are $(\QQ, \XX, \DD)$-well-placed in $\HH$.

\begin{enumerate}
\item Firstly, we place the sequential regions $(\tilde{\WW}(\FF^L_1), \tilde{\WW}(\FF^L_2), \cdots, \tilde{\WW}(\FF^L_p), \tilde{\WW}(\FF^R_1))$ as $(\tilde{\WW}_{\HH}(\FF^L_1),$ $\tilde{\WW}_{\HH}(\FF^L_2),$ $\cdots,$
$\tilde{\WW}_{\HH}(\FF^L_p),$ $\tilde{\WW}_{\HH}(\FF^R_1))$ in $\HH$.
\item Because each raising fan $\tilde{\FF}^R_i, 2\leq i\leq q$, is inside a region $\WW_i$ of $\tilde{\WW}_{\HH}(\tilde{\FF}^R_{i-1})$, we place the sequential regions $\tilde{\WW}_{\HH}(\tilde{\FF}^R_i)$ at right of the sequential regions $\tilde{\WW}_{\HH}(\tilde{\FF}^R_{i-1})$ in $\HH$.

\end{enumerate}
Now we have new sequential regions in $\HH$ as follows: $\tilde{\WW}_{\HH}(\clubsuit_{\BB}(\WW))=$
$(\tilde{\WW}_{\HH}(\tilde{\FF}^L_1),$ $\tilde{\WW}_{\HH}(\tilde{\FF}^L_2),$
$\cdots,$
$\tilde{\WW}_{\HH}(\tilde{\FF}^L_p),$ $\tilde{\WW}_{\HH}(\tilde{\FF}^R_1),$ $\tilde{\WW}_{\HH}(\tilde{\FF}^R_2),$
$\cdots,$
$\tilde{\WW}_{\HH}(\tilde{\FF}^R_q))$.
And, we immediately have the following lemma.

\begin{lemma}\label{lem:partition-boundary}
Regions $\tilde{\WW}_{\HH}(\clubsuit_{\BB}(\WW))$
are sequential in $\HH$.
\end{lemma}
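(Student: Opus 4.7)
\medskip

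\noindent\textbf{Proof proposal for Lemma \ref{lem:partition-boundary}.}
The plan is to reduce the statement to two facts already established: (i) each individual raising fan $\tilde{\FF}$ is laid out in $\HH$ as a sequential list of regions $\tilde{\WW}_{\HH}(\tilde{\FF})$ by Theorem \ref{thm:fan-path-region-order}; and (ii) the construction rule immediately preceding this lemma concatenates these per-fan layouts left-to-right in $\HH$. Granting these, the lemma is essentially the observation that the concatenation of sequential lists of regions in $\HH$ is again sequential; the work is in verifying that no step of the construction places a new region on top of, or strictly inside, a previously placed one, so that the resulting order in $\HH$ genuinely matches the prescribed sequence.

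First I would handle the prefix $(\tilde{\FF}^L_1,\ldots,\tilde{\FF}^L_p,\tilde{\FF}^R_1)$. By the defining conditions of $\clubsuit^L$ (no ancestor--descendant relation among the $\tilde{\FF}^L_i$ and none with $\tilde{\FF}^R_1$), these raising fans occupy pairwise disjoint subregions of $\WW$. Hence the layouts $\tilde{\WW}_{\HH}(\tilde{\FF}^L_i)$ and $\tilde{\WW}_{\HH}(\tilde{\FF}^R_1)$, each sequential by Theorem \ref{thm:fan-path-region-order}, can be concatenated in the left-to-right order without overlap, giving a sequential list
$(\tilde{\WW}_{\HH}(\tilde{\FF}^L_1),\ldots,\tilde{\WW}_{\HH}(\tilde{\FF}^L_p),\tilde{\WW}_{\HH}(\tilde{\FF}^R_1))$ in $\HH$.

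Next I would extend this sequential list along $\clubsuit^R$ by induction on $i\ge 2$. The inductive step is governed by Observation \ref{obs:black-hole}: $\tilde{\FF}^R_i$ lives inside the unique black-hole region $\WW_{i-1}\in\tilde{\WW}(\tilde{\FF}^R_{i-1})$, and all later raising fans $\tilde{\FF}^R_{i+1},\ldots,\tilde{\FF}^R_q$ are contained in that same black-hole. The construction places $\tilde{\WW}_{\HH}(\tilde{\FF}^R_i)$ to the right of the already-placed $\tilde{\WW}_{\HH}(\tilde{\FF}^R_{i-1})$; since the sequentiality of $\tilde{\WW}_{\HH}(\tilde{\FF}^R_{i-1})$ (again from Theorem \ref{thm:fan-path-region-order}) guarantees that its final region corresponds to the black-hole $\WW_{i-1}$, appending the sequential layout of $\tilde{\FF}^R_i$ to its right preserves a single sequential list in $\HH$. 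Iterating yields sequential placement of all of $\tilde{\WW}_{\HH}(\clubsuit_{\BB}(\WW))$.

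The main obstacle I expect is the bookkeeping needed to match the graph-side containment (each $\tilde{\FF}^R_i$ sits inside a black-hole of $\tilde{\FF}^R_{i-1}$) with the $\HH$-side adjacency (the layout of $\tilde{\FF}^R_i$ being appended to the right end of the previous layout); in particular, I must verify that the black-hole really does occupy the rightmost position of $\tilde{\WW}_{\HH}(\tilde{\FF}^R_{i-1})$, or at least that the appended layout can take the place of that rightmost region without disturbing the earlier regions. This should follow from Properties \ref{prop:left-contiguous} and \ref{prop:right-contiguous}, which guarantee that the rightward-outer fans rooted at interior vertices form contiguous subsequences of $\clubsuit^R$; combined with Observation \ref{obs:black-hole} the uniqueness of the black-hole in each $\tilde{\FF}^R_{i-1}$ makes this rightmost identification unambiguous, and the induction closes.
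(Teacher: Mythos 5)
Your argument matches the paper's approach: the paper gives no separate proof of Lemma \ref{lem:partition-boundary}, asserting it follows ``immediately'' from the construction that places the mutually disjoint fans $(\tilde{\FF}^L_1,\ldots,\tilde{\FF}^L_p,\tilde{\FF}^R_1)$ left to right and then appends each $\tilde{\WW}_{\HH}(\tilde{\FF}^R_i)$ to the right of $\tilde{\WW}_{\HH}(\tilde{\FF}^R_{i-1})$, which is exactly the concatenation-plus-induction along $\clubsuit^R$ that you spell out using Theorem \ref{thm:fan-path-region-order} and Observation \ref{obs:black-hole}. Your version simply makes explicit the bookkeeping (uniqueness of the black-hole, substitution of its refined sequence) that the paper leaves implicit.
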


The next lemma proves that
the sequential regions $\tilde{\WW}_{\HH}(\clubsuit_{\BB}(\WW))=$
$(\tilde{\WW}_{\HH}(\tilde{\FF}^L_1),$ $\tilde{\WW}_{\HH}(\tilde{\FF}^L_2),$
$\cdots,$
$\tilde{\WW}_{\HH}(\tilde{\FF}^L_p),$ $\tilde{\WW}_{\HH}(\tilde{\FF}^R_1),$ $\tilde{\WW}_{\HH}(\tilde{\FF}^R_2),$
$\cdots,$
$\tilde{\WW}_{\HH}(\tilde{\FF}^R_q))$
are $(\QQ, \XX, \DD)$-well-placed in $\HH$.

\begin{lemma}\label{lem:layout-boundary}
Sequential regions $\tilde{\WW}_{\HH}(\clubsuit_{\BB}(\WW))$ are $(\QQ, \XX, \DD)$-well-placed in $\HH$.
\end{lemma}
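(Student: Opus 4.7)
The plan is to build the bound piece-by-piece, treating $\clubsuit^L$ and $\clubsuit^R$ separately and then combining. First, by Theorem \ref{thm:fan-path-region-order}, every individual raising fan $\tilde{\FF}^L_i$ or $\tilde{\FF}^R_j$ admits an $(\QQ, \XX, \DD)$-well-placed layout whose regions are sequential. The job is to show that the concatenation prescribed by Algorithm \ref{alg:framework-boundary} — placing $\tilde{\WW}_{\HH}(\tilde{\FF}^L_1), \ldots, \tilde{\WW}_{\HH}(\tilde{\FF}^L_p), \tilde{\WW}_{\HH}(\tilde{\FF}^R_1)$ consecutively from left to right, and then inductively placing each $\tilde{\WW}_{\HH}(\tilde{\FF}^R_i)$ immediately to the right of $\tilde{\WW}_{\HH}(\tilde{\FF}^R_{i-1})$ — neither inflates the per-track queue number past $\QQ$, nor the $X$-crossing number between any two tracks past $\XX$, nor the distance number past $\DD$, while preserving the sequential-region form already guaranteed by Lemma \ref{lem:partition-boundary}.

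The $\clubsuit^L$ block is the easy case. The raising fans $\tilde{\FF}^L_1, \ldots, \tilde{\FF}^L_p, \tilde{\FF}^R_1$ are pairwise non-ancestor-descendant in $\clubsuit_{\BB}$ by construction, so they occupy pairwise disjoint subregions of $\WW$ and share no interior vertices. Placed left-to-right in $\HH$, they cannot nest chords with one another, and any edge connecting two distinct fans would have to cross a boundary they do not share — impossible. Hence the queue number on every track is just the maximum per fan ($\QQ$), the $X$-crossing count between any two tracks is the maximum per fan ($\XX$), and the distance number is at most $\DD$.

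The real obstacle is the chain $\tilde{\FF}^R_1, \tilde{\FF}^R_2, \ldots, \tilde{\FF}^R_q$: each $\tilde{\FF}^R_{i+1}$ sits inside the unique black-hole of $\tilde{\FF}^R_i$ (Observation \ref{obs:black-hole}) and shares boundary vertices with $\tilde{\FF}^R_i$, so edges incident to those shared vertices cross the boundary between the two layouts and could, a priori, accumulate arbitrary nests or $X$-crossings. The leverage here is Properties \ref{prop:left-contiguous} and \ref{prop:right-contiguous}: for any vertex $v$ on the shared right boundary, all rightward fans for $v$ together with all rightward fans for $v$'s children form at most \emph{one} raising fan in $\clubsuit^R$, and all of $v$'s leftward-outer fans form a \emph{contiguous} subsequence $(\tilde{\FF}^R_a, \ldots, \tilde{\FF}^R_b) \subseteq \clubsuit^R$. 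Consequently, the edges incident to $v$ that span across the chain are confined to a contiguous block of $\HH$, which contributes only a constant number to the nest count on $v$'s track and to the $X$-crossing count between $v$'s track and any other track. I would then finish by induction on $i$: assuming $(\tilde{\WW}_{\HH}(\tilde{\FF}^R_1), \ldots, \tilde{\WW}_{\HH}(\tilde{\FF}^R_{i-1}))$ is $(\QQ, \XX, \DD)$-well-placed, the insertion of $\tilde{\WW}_{\HH}(\tilde{\FF}^R_i)$ immediately to its right preserves each bound, since (a) the only new edges entering the picture are those incident to the boundary of the black-hole of $\tilde{\FF}^R_{i-1}$, controlled by the above contiguity, (b) no edge from $\tilde{\FF}^R_j$ with $j < i-1$ reaches $\tilde{\FF}^R_i$ except through that same boundary by Observation \ref{obs:black-hole}, and (c) the distance number does not grow because the added layout uses only the tracks already used by $\tilde{\FF}^R_{i-1}$ within the black-hole. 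The step where I expect to spend the most care is (a): pinning down precisely why the contiguity properties translate into an additive, rather than multiplicative, contribution to the $X$-crossing count between each pair of tracks.
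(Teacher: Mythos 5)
Your proposal follows essentially the same route as the paper: handle $\clubsuit^L$ by the pairwise disjointness of its sibling fans, handle the chain $\clubsuit^R$ via the black-hole structure of Observation \ref{obs:black-hole}, and control the cross-boundary edges with Properties \ref{prop:left-contiguous} and \ref{prop:right-contiguous}. The one step you explicitly leave open --- why the contributions from different boundary vertices are additive rather than multiplicative --- is exactly where the paper does its only real work: for boundary vertices $v_1,\dots,v_h$ the leftward-outer fans of $v_j$ occupy a contiguous index interval $[a_j,b_j]$ in $\clubsuit^R$, and because the corresponding black-holes are pairwise disjoint these intervals are ordered as $a_1\leq b_1 < a_2\leq b_2 < \cdots < a_h\leq b_h$; hence both the $v_j$'s and their child sets appear in the same left-to-right order on their respective tracks, so the edge bundles $\tilde{e}(v_1),\dots,\tilde{e}(v_h)$ are mutually non-$X$-crossing and the whole collection raises the $X$-crossing number by only one. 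With that interval-ordering observation inserted at the step you flagged, your argument matches the paper's.
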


\begin{proof}

Recall that $\clubsuit^R=(\tilde{\FF}^R_1, \tilde{\FF}^R_2, \cdots, \tilde{\FF}^R_q)$ are sequentially placed in $\HH$
as $(\tilde{\WW}_{\HH}(\tilde{\FF}^R_1)$, $\tilde{\WW}_{\HH}(\tilde{\FF}^R_2),$ $\cdots,$ $\tilde{\WW}_{\HH}(\tilde{\FF}^R_q))$
such that each raising fan $\tilde{\FF}^R_i, 2\leq i\leq q$, is at right of $\tilde{\FF}^R_{i-1}$ in $\HH$.

In the followings, we utilize
Property \ref{prop:left-contiguous} to prove that
for each vertex $v\notin \BB$,
edges between $v$ and $v$'s children have constant
$X$-crossing edges with other edges in $(\tilde{\WW}_{\HH}(\tilde{\FF}^R_1)$, $\tilde{\WW}_{\HH}(\tilde{\FF}^R_2),$ $\cdots,$ $\tilde{\WW}_{\HH}(\tilde{\FF}^R_q))$.
Because all $v$' rightward-outer fans form contiguous raising fans $(\tilde{\FF}_i, \tilde{\FF}_{i+1})$
with length at most two in $(\tilde{\FF}^R_1, \tilde{\FF}^R_2, \cdots, \tilde{\FF}^R_q)$.
The edges between $v$ and $v$'s children make $X$-crossing edges with
the only raising fan $\tilde{\WW}_{\HH}(\tilde{\FF}_{i+1})$ in $\HH$.
Hence the edges between $v$ and $v$'s children have $X$-crossing number at most one.

As we place $\clubsuit^R=(\tilde{\FF}^R_1, \tilde{\FF}^R_2, \cdots, \tilde{\FF}^R_q)$ as
$(\tilde{\WW}_{\HH}(\tilde{\FF}^R_1),$
$\tilde{\WW}_{\HH}(\tilde{\FF}^R_2),$
$\cdots,$
$\tilde{\WW}_{\HH}(\tilde{\FF}^R_q))$
in $\HH$,
we have sequential edges
$(\tilde{e}_1, \tilde{e}_2, \cdots, \tilde{e}_{q-1})$ that
each edge $e\in \tilde{e}_i, 1\leq i\leq q-1$, connects between two raising fans $\tilde{\FF}^R_i$ and
$\tilde{\FF}^R_{i+1}$ except for
edges between $v$ and $v$'s children,
Observe that
$(\tilde{e}_1, \tilde{e}_2, \cdots, \tilde{e}_{q-1})$
are orderly placed in $\HH$.
So, there is no any $X$-crossing edge among
$(\tilde{e}_1, \tilde{e}_2, \cdots, \tilde{e}_{q-1})$.

Since any two raising fans in $(\tilde{\FF}^L_1,$ $\tilde{\FF}^L_2,$
$\cdots,$
$\tilde{\FF}^L_p)$ are siblings in $\clubsuit_{\BB}$ (two raising fans $\tilde{\FF}$ and $\tilde{\FF}'$ are siblings in $\clubsuit_{\BB}$, $\tilde{\FF}$ and $\tilde{\FF}'$ are not bounded to each other),
the placement: $(\tilde{\WW}_{\HH}(\tilde{\FF}^L_1),$
$\tilde{\WW}_{\HH}(\tilde{\FF}^L_2),$ $\cdots,$
$\tilde{\WW}_{\HH}(\tilde{\FF}^L_p))$
are $(\QQ, \XX, \DD)$-well-placed in $\HH$.
Also,
because $(\tilde{\FF}^L_1, \tilde{\FF}^L_2, \cdots, \tilde{\FF}^L_p)$ and $(\tilde{\FF}^R_1, \tilde{\FF}^R_2, \cdots, \tilde{\FF}^R_q)$ are not bounded to each other,
except for the edges between a
vertex on a black-hole,
the placement: $(\tilde{\WW}_{\HH}(\tilde{\FF}^L_1), \tilde{\WW}_{\HH}(\tilde{\FF}^L_2), \cdots, \tilde{\WW}_{\HH}(\tilde{\FF}^L_p))$
and
$(\tilde{\WW}_{\HH}(\tilde{\FF}^R_1),$ $\tilde{\WW}_{\HH}(\tilde{\FF}^R_2),$ $\cdots,$
$\tilde{\WW}_{\HH}(\tilde{\FF}^R_q))$
are $(\QQ, \XX, \DD)$-well-placed in $\HH$.

In the followings, we utilize Property \ref{prop:right-contiguous} to prove that
for each vertex $v\notin \BB$,
all $v$'s leftward-outer fans have constant number of
$X$-crossing edges in $\HH$.
Let $v$ be a vertex on the right boundary of a black-hole.
By Property \ref{prop:right-contiguous},
there exists contiguous raising fans $(\tilde{\FF}^R_i, \tilde{\FF}^R_{i+1}, \cdots, \tilde{\FF}^R_a) \subseteq \clubsuit^R$ which consists of $v$'s leftward-outer fans.
All edges between $v$ and $v$'s children cross from the sequential raising fans
$(\tilde{\WW}_{\HH}(\tilde{\FF}^R_i),$ $\tilde{\WW}_{\HH}(\tilde{\FF}^R_{i+1}),$
$\cdots,$
$\tilde{\WW}_{\HH}(\tilde{\FF}^R_a))$ in $\HH$.

Let the sequential vertices $(v_1, v_2, \cdots, v_h)$ be orderly on a track in $\HH$
such that
each vertex $v_j, j\geq 1,$ consists of some leftward-outer fans.
Let $(\WW_1, \WW_2, \cdots, \WW_h)$ be sequential black-holes in $\HH$ such that
each vertex $v_j, 1\leq j\leq h$, is on the right boundary of $\WW_j$.

We know that
for each vertex $v_j, 1\leq j\leq h$, there are contiguous raising fans
$(\tilde{\FF}_{a_j},$ $\tilde{\FF}_{a_j+1},$
$\cdots$
$\tilde{\FF}_{b_j})$
passing through $v_j$.
Also, since the sequential black-holes $(\WW_1, \WW_2, \cdots, \WW_h)$ are disjoint, we have the following ordered relation
$(a_1 < a_2 \cdots < a_h)$.
Now,
the sequential vertices $(v_1, v_2, \cdots, v_h)$ are orderly placed on a track in $\HH$ and
the sequential children $(\CC_{v_1}, \CC_{v_2}, \cdots, \CC_{v_h})$ of $(v_1, v_2, \cdots, v_h)$
are also orderly placed on other track in $\HH$
because the order relation
$(a_1\leq b_1 < a_2 \leq b_2 \cdots < a_h \leq b_h)$.

Hence for all sequential edges
$(\tilde{e}(v_1),$
$\tilde{e}(v_2),$
$\cdots,$
$\tilde{e}(v_h))$
where each edges $\tilde{e}(v_j), 1\leq j\leq h$, are edges between
the vertex $v_j$ and and $v_j$'s children $\CC_{v_j}$,
$(\tilde{e}(v_1),$
$\tilde{e}(v_2),$
$\cdots,$
$\tilde{e}(v_h))$ are not $X$-crossing in $\HH$.

Finally, when we add the sequential edges
$(\tilde{e}(v_1), \tilde{e}(v_2), \cdots, \tilde{e}(v_h))$ into $\HH$,
the $X$-crossing number in $\HH$ increase one in $\HH$
because the sequential edges $(\tilde{e}(v_1), \tilde{e}(v_2), \cdots, \tilde{e}(v_h))$
make $X$-crossing edges with the sequential regions $(\WW_1, \WW_2, \cdots, \WW_h)$ in $\HH$.
and
the layout:
$(\tilde{\WW}_{\HH}(\tilde{\FF}^L_1),$ $\tilde{\WW}_{\HH}(\tilde{\FF}^L_2),$
$cdots,$
$\tilde{\WW}_{\HH}(\tilde{\FF}^L_p),$
$\tilde{\WW}_{\HH}(\tilde{\FF}^R_1),$ $\tilde{\WW}_{\HH}(\tilde{\FF}^R_2),$
$\cdots,$
$\tilde{\WW}_{\HH}(\tilde{\FF}^R_q))$
are $(\QQ, \XX, \DD)$-well-placed in $\HH$.
\end{proof}

\begin{lemma}\label{lem:full-layout-boundary}
Sequential regions $\tilde{\WW}_{\HH}(\clubsuit_{\BB})$ are $(\QQ, \XX, \DD)$-well-placed in $\HH$.
\end{lemma}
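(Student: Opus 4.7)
The plan is to prove Lemma \ref{lem:full-layout-boundary} by induction on the iterations of Algorithm \ref{alg:framework-boundary}, with the inductive invariant being that after the $i$-th pass of the while-loop the portion of $\clubsuit_{\BB}$ already placed in $\HH$ forms sequential regions and is $(\QQ, \XX, \DD)$-well-placed. The base case is the initial placement of the root raising fan's sequential regions $\tilde{\WW}_{\HH}(\tilde{\FF})$, which is $(\QQ, \XX, \DD)$-well-placed by Theorem \ref{thm:fan-path-region-order}.

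For the inductive step, I dequeue the first region $\WW'$ from the FIFO queue $\tilde{\WW}$ and place $\tilde{\WW}_{\HH}(\clubsuit_{\BB}(\WW'))$ at the rightmost position in $\HH$. By Lemma \ref{lem:partition-boundary} these new regions appear sequentially in $\HH$, and by Lemma \ref{lem:layout-boundary} the pair $\WW'$ together with $\tilde{\WW}_{\HH}(\clubsuit_{\BB}(\WW'))$ is internally $(\QQ, \XX, \DD)$-well-placed, i.e., the queue, $X$-crossing, and distance parameters are respected among the chords added in this iteration.

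The main obstacle, and the only nontrivial part, is to check that these newly added chords do not introduce extra $X$-crossings or nests with chords placed in earlier iterations. Because Algorithm \ref{alg:framework-boundary} enqueues regions in the order they are created and always appends new placements at the rightmost of $\HH$, the left-to-right order of regions in $\HH$ coincides with the FIFO processing order. Hence any earlier-processed region $\WW''$ lies strictly to the left of $\WW'$ in $\HH$, and its companion placement $\tilde{\WW}_{\HH}(\clubsuit_{\BB}(\WW''))$, inserted at the then-rightmost, lies strictly to the left of the current rightmost, hence strictly to the left of $\tilde{\WW}_{\HH}(\clubsuit_{\BB}(\WW'))$. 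Restricted to any common pair of tracks, this yields left endpoints in the order $u_2 < u_1$ and right endpoints in the order $v_2 < v_1$ for an old chord $(u_2, v_2)$ versus a new chord $(u_1, v_1)$, which rules out both $X$-crossing and nesting. Combining this cross-iteration noncrossing property with the local well-placedness from Lemma \ref{lem:layout-boundary} preserves the invariant. Upon termination, Lemmas \ref{lem:region-partition} and \ref{lem:partition-boundary} guarantee that the placed regions cover all of $\clubsuit_{\BB}$, and the preserved invariant then delivers that the resulting sequential regions $\tilde{\WW}_{\HH}(\clubsuit_{\BB})$ are $(\QQ, \XX, \DD)$-well-placed in $\HH$, establishing Lemma \ref{lem:full-layout-boundary}.
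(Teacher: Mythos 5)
Your proposal is correct and follows essentially the same route as the paper: both rely on Lemmas \ref{lem:partition-boundary} and \ref{lem:layout-boundary} for the within-iteration placement, and both resolve cross-iteration interactions by observing that the FIFO processing order agrees with the left-to-right order of regions in $\HH$ and that new placements are appended at the rightmost, so the chords from distinct iterations have consistently ordered left and right end-vertices and hence neither nest nor $X$-cross. Your only cosmetic difference is packaging the argument as an explicit induction on the iterations of Algorithm \ref{alg:framework-boundary}, which the paper states directly without the inductive scaffolding.
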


\begin{proof}
Initially, the root $r$'s raising fan $\tilde{\FF}$ in a forest-like structure $\clubsuit_{\BB}$ form sequential
regions $\tilde{\WW}_{\HH}(\tilde{\FF})$ in $\HH$.
Next each iteration of Algorithm \ref{alg:framework-boundary} picks up
the first region $\WW_i$ in the first-in-first-out queue $\tilde{\WW}$
to place the sequential regions $\tilde{\WW}(\clubsuit_{\BB}(\WW_i))$ at the rightmost side in $\HH$; The region $\WW_i$ at the leftmost region consists of raising fans in $\clubsuit_{\BB}$ in $\HH$.
($|\clubsuit_{\BB}(\WW_i)|>0$.)

All regions $\WW_j$ at left of $\WW_i$ in $\HH$ doesn't consist of any raising fan in $\clubsuit_{\BB}$ ($|\clubsuit_{\BB}(\WW_j)|=0$) and the chords between $\WW_j$ and $\tilde{\WW}(\clubsuit_{\BB}(\WW_j))$ don't nest with
the chords between $\WW_i$ and
$\tilde{\WW}_{\HH}(\clubsuit_{\BB}(\WW_i))$
because $\WW_j$ is at left of $\WW_i$ in $\HH$ and
$\tilde{\WW}_{\HH}(\clubsuit_{\BB}(\WW_j))$ are at left of
$\tilde{\WW}_{\HH}(\clubsuit_{\BB}(\WW_i))$ in $\HH$.
The above description also implies that
the left end-vertices of the chords $\tilde{\WW}_{\HH}(\clubsuit_{\BB}(\WW_j))$ are placed at left of
the left end-vertices of the chords $\tilde{\WW}_{\HH}(\clubsuit_{\BB}(\WW_i))$, and
the right end-vertices of the chords $\tilde{\WW}_{\HH}(\clubsuit_{\BB}(\WW_j))$ are placed at left of
the right end-vertices of the chords $\tilde{\WW}_{\HH}(\clubsuit_{\BB}(\WW_i))$ in $\HH$.

Observe that if sequential regions
$(\tilde{\WW}(\clubsuit_{\BB}(\WW_1)),$
$\tilde{\WW}(\clubsuit_{\BB}(\WW_2)),$
$\cdots)$ are orderly placed in
$\HH$ as
$(\tilde{\WW}_{\HH}(\clubsuit_{\BB}(\WW_1)),$
$\tilde{\WW}_{\HH}(\clubsuit_{\BB}(\WW_2)),$
$\cdots)$ in Algorithm \ref{alg:framework-boundary}, then
the sequential regions $(\WW_1, \WW_2, \cdots)$ are also orderly
placed in $\HH$.
Let $(\WW_1, \WW_2, \cdots)$
be sequential regions in $\HH$
and $(\tilde{e}_1, \tilde{e}_2, \cdots)$ be
sequential chords in $\HH$
where $\tilde{e}_i, i\geq 1,$ are edges between $\WW_i$ and $\clubsuit_{\BB}(\WW_i))$.
When we place a new sequential regions
$\tilde{\WW}_{\HH}(\clubsuit_{\BB}(\WW_i))$ at right of
the sequential regions
$(\tilde{\WW}_{\HH}(\clubsuit_{\BB}(\WW_1)),$
$\tilde{\WW}_{\HH}(\clubsuit_{\BB}(\WW_2)),$
$\cdots,$
$\tilde{\WW}_{\HH}(\clubsuit_{\BB}(\WW_{i-1})))$,
the sequential chords $(\tilde{e}_1,$
$\tilde{e}_2,$
$\cdots,$
$\tilde{e}_i)$
don't nest to each other in $\HH$
because
the order of $(\WW_1,$
$\WW_2,$
$\cdots,$
$\WW_i)$ in $\HH$ is the same as the order of
$(\tilde{\WW}_{\HH}(\clubsuit_{\BB}(\WW_1)),$
$\tilde{\WW}_{\HH}(\clubsuit_{\BB}(\WW_2)),$
$\cdots,$
$\tilde{\WW}_{\HH}(\clubsuit_{\BB}(\WW_i)))$ in $\HH$.
Hence the layout in Algorithm \ref{alg:framework-boundary}
is $(\QQ, \XX, \DD)$-well-placed in $\HH$.
\end{proof}

In Algorithm \ref{alg:framework-boundary}, we place a forest-like raising fans $\clubsuit_{\BB}$ of a boundary $\BB$ in a ladder $\HH$ where the input $\clubsuit_{\BB}$ consists of the only one root.
However, $\clubsuit_{\BB}$ would be a forest with sequential roots $(r_1, r_2, \cdots, r_s)$,
we can slightly modified Algorithm \ref{alg:framework-boundary} as follows:
if $\clubsuit_{\BB}$ is a forest with the sequential roots $(r_1, r_2, \cdots, r_s)$,
we can orderly place the sequential regions
$(\tilde{\WW}_{\HH}(r_1), \tilde{\WW}_{\HH}(r_2), \cdots, \tilde{\WW}_{\HH}(r_s))$ in $\HH$.

\begin{theorem}
Given a region $\WW=(\BB^L, \BB^R)$, the collection of all rightward-outer and leftward-outer fans from the boundaries $\BB^L$ and $\BB^R$
can be $(\QQ, \XX, \DD)$-well-placed in $\HH$ and
there are sequential regions $(\tilde{\WW}_{\HH}(\clubsuit_{\BB^L}), \WW^M, \tilde{\WW}_{\HH}(\clubsuit_{\BB^R}))$ in $\HH$
where $\WW^M$ is a region between
the rightmost boundary of $\tilde{\WW}(\clubsuit_{\BB^L})$
and the leftmost boundary of
$\tilde{\WW}(\clubsuit_{\BB^R})$.
Moreover, the union of the two forest-like structures
$\clubsuit_{\BB^L}$ and $\clubsuit_{\BB^R}$ is a skeleton $\Psi(\WW)$ of $\WW$.
\end{theorem}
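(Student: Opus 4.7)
The plan is to assemble the layout by placing the left and right forest-like structures independently on opposite sides of $\HH$, gluing them with the untouched middle region $\WW^M$, and then invoking the existing skeleton result for the last clause. First, I would apply Lemma~\ref{lem:full-layout-boundary} to $\clubsuit_{\BB^L}$ to obtain an $(\QQ, \XX, \DD)$-well-placed layout $\tilde{\WW}_{\HH}(\clubsuit_{\BB^L})$ occupying the left portion of $\HH$. Applying the symmetric right-hand version of Algorithm~\ref{alg:framework-boundary} to $\clubsuit_{\BB^R}$ yields a $(\QQ, \XX, \DD)$-well-placed layout $\tilde{\WW}_{\HH}(\clubsuit_{\BB^R})$, which I would install at the rightmost free position of $\HH$, leaving between the two blocks exactly the residual region $\WW^M$ bounded on the left by the rightmost boundary of $\tilde{\WW}(\clubsuit_{\BB^L})$ and on the right by the leftmost boundary of $\tilde{\WW}(\clubsuit_{\BB^R})$.

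Next I would verify that concatenating the two already $(\QQ, \XX, \DD)$-well-placed pieces preserves the constant queue number, $X$-crossing number, and distance number. The key observation is that every rightward-outer fan in $\clubsuit_{\BB^L}$ is strictly bounded away from $\BB^R$, and symmetrically every leftward-outer fan in $\clubsuit_{\BB^R}$ is strictly bounded away from $\BB^L$. Hence the two forests operate on disjoint subgraphs of $\GG(\WW)$, and once placed on opposite ends of $\HH$ their chord sets occupy disjoint horizontal ranges. This immediately forbids any new nesting or $X$-crossing between edges of $\clubsuit_{\BB^L}$ and edges of $\clubsuit_{\BB^R}$, so the queue and crossing constants are the maxima of the two pre-existing constants, and the distance number likewise remains constant. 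The induced sequence $(\tilde{\WW}_{\HH}(\clubsuit_{\BB^L}),\WW^M,\tilde{\WW}_{\HH}(\clubsuit_{\BB^R}))$ is therefore a valid sequential-region decomposition of $\WW$ in $\HH$.

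For the final assertion, I would cite Theorem~\ref{thm:skeleton} directly, identifying $\clubsuit_{\BB^L}$ with $\Psi^L(\WW)$ and $\clubsuit_{\BB^R}$ with $\Psi^R(\WW)$; that theorem already states that their union is a skeleton of $\WW$. If a self-contained verification of Definition~\ref{def:skeleton} is preferred, I would apply Lemma~\ref{lem:region-partition} twice: it guarantees that every child of a vertex $u \in \BB^L$ lies inside a region of $\tilde{\WW}(\clubsuit_{\BB^L})$ and every child of a vertex $v \in \BB^R$ lies inside a region of $\tilde{\WW}(\clubsuit_{\BB^R})$. Since the root $r$ of $\WW$ is shared by both boundaries, this yields all three structural requirements of a skeleton.

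The main obstacle I anticipate is the gluing step: I must be sure that $\WW^M$ is a well-defined composite-layerlike subregion, i.e.\ that the rightmost boundary of $\tilde{\WW}(\clubsuit_{\BB^L})$ and the leftmost boundary of $\tilde{\WW}(\clubsuit_{\BB^R})$ are compatible paths along contiguous layers, and that no chord of $\GG(\WW)$ with one endpoint on $\BB^L$ and the other on $\BB^R$ is forced across $\WW^M$ in a way that escapes the gap bound $\DD$. Both concerns are resolved by the maximality in the definitions of rightward-outer and leftward-outer fans and by Property~\ref{prop:left-contiguous}: any such ``trans-boundary'' chord is covered by a fan included in one of the two forest-like structures, leaving $\WW^M$ genuinely untouched by edges at this stage.
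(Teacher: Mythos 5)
Your proposal is correct and follows essentially the same route as the paper, which states this theorem as a direct consequence of the preceding development: Lemma~\ref{lem:full-layout-boundary} applied to each of $\clubsuit_{\BB^L}$ and $\clubsuit_{\BB^R}$, the earlier partition lemma giving the sequence $(\WW^L_1,\cdots,\WW^L_p,\WW^M,\WW^R_1,\cdots,\WW^R_q)$, and Theorem~\ref{thm:skeleton} for the skeleton claim. The paper in fact supplies no explicit proof for this theorem, so your assembly of those ingredients (including the check that the two forests occupy disjoint horizontal ranges in $\HH$) is, if anything, more detailed than the original.
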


\begin{figure}[t]
\begin{center}
\includegraphics[width=1\textwidth, angle =0]{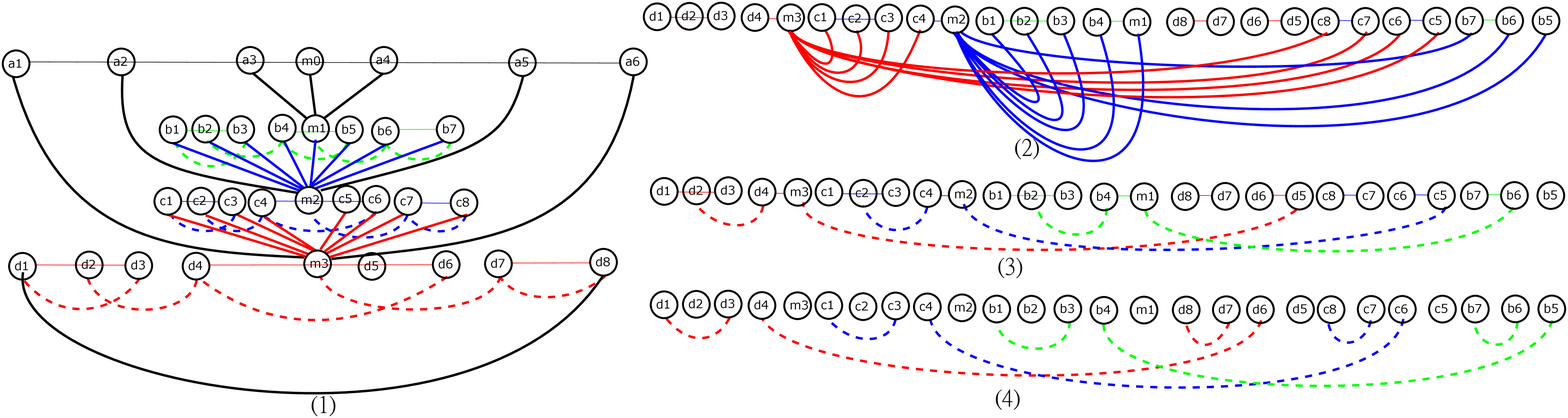}
  \centering
\caption{(1): $(\triangledown_1, \triangledown_2, \triangledown_3)$ are sequential down-pointing triangles that $\triangledown_1=(a_3, m_0, a_4, m_1)$,
$\triangledown_2=(a_2, a_6 m_2)$ and $\triangledown_3=(a_1, a_7, m_3)$.
The middle path $\MM=(m_0, m_1, m_2, m_3)$ partition $(\triangledown_1, \triangledown_2, \triangledown_3)$ into two disjoint parts.
By our algorithm, $(b_1, \cdots, b_4, m_1)$, $(c_1, \cdots, c_4, m_2)$ and $(d_1, \cdots, d_4, m_3)$ are placed on a track as the order: $(d_1, \cdots, d_4, m_3, c_1, \cdots, c_4, m_2,  b_1, \cdots, b_4, m_1)$.
The three sequential vertices $(b_5, \cdots, b_7)$, $(c_5, \cdots, c_8)$ and $(d_5, \cdots, d_8)$ are placed on a track as the order: $(d_8, \cdots, d_5, c_8, \cdots, c_5, b_7, \cdots, d_5)$;
(2): $m_3$'s left piles are
$\{(m_3, c_1), (m_3, c_2), (m_3, c_3), (m_3, c_4)\}$ and
$m_2$'s left piles are $\{(m_2, b_1), (m_2, b_2), (m_2, b_3),$ $(m_2, b_4)\}$. Because $(m_3, m_2, m_1)$ are orderly placed in $\HH$ and the sequential vertices $(c_1, c_2, c_3, c_3, c_4)$ are placed at left of the sequential vertices $(b_1, b_2, b_3, b_4)$, $m_2$'s left piles cannot make nested with $m_3$'s left piles in $\HH$.
Similarly,
$m_3$'s right piles are $\{(m_3, c_5), (m_3, c_6), (m_3, c_7), (m_3, c_8)\}$, and
$m_2$'s right piles are $\{(m_2, b_5), (m_2, b_6), (m_2, b_7)\}$.
Because the sequential vertices $(c_5, c_6, c_7, c_8)$ are placed at left of the sequential vertices $(b_5, b_6, b_7)$ in $\HH$,
$m_2$'s right piles cannot make nested with $m_3$'s right piles in $\HH$;
(3): The bridges $((d_2, d_4), (c_2, c_4), (b_2, b_4))$ at left of the middle path $\MM$ are orderly placed on a track in $\HH$. Hence
$((d_2, d_4), (c_2, c_4), (b_2, b_4))$ cannot form nested chords on a track in $\HH$.
The bridges $((m_3, d_7), (m_2, c_7), (m_1, b_6))$ that are connecting to middle path $\MM$ are also orderly placed, so $((m_3, d_7), (m_2, c_7), (m_1, b_6))$ cannot form nested chords on a track in $\HH$;
(4): The chord $((d_1, d_3), (c_1, c_3), (b_1, b_3))$ at left of the middle path $\MM$ are orderly placed on a track in $\HH$.
The right chord $((d_7, d_8), (c_7, c_8), (b_6, b_7))$ at right of middle path are also orderly placed on a track in $\HH$.
The chord $((d_4, d_6), (c_4, c_6), (b_4, b_5))$ across the middle path $\MM$ are not nested on a track in $\HH$ because both of the sequential vertices $(d_4, c_4, b_4)$ and
$(d_6, c_6, b_5)$ are orderly placed on a track in $\HH$.
}
\label{fig:ugly-edges-layout}
\end{center}
\vspace{-0.2in}
\end{figure}

\section{Deleted Edges Increase $X$-Crossing, Queue and Gap Numbers Sightly}

In this section, we explain why our layout is still $(\QQ, \XX, \DD)$-well-placed after deleted edges are re-added into our layout.


Recall that in the Section \ref{sec:transformation}, for a cycle $\OO$,
the cycle $\OO$ is clockwisely placed from the vertex $m$ of $\OO$ in a composite-layerlike graph $\GG$.
And,
for a down-pointing triangle $\triangledown$,
the down-pointing triagle $\triangledown$ is clockwisely placed from the lower vertex $m$ of $\triangledown$ where
all vertices except the lower vertex $m$ of $\triangledown$ are placed at a upper layer
and the lower vertex $m$ is placed at a lower layer of a composite-layerlike graph $\GG$.
Also, there exist sequential maximal inner cycles $(\OO_1, \OO_2, \cdots, \OO_p)$ inside $\OO$ or $\triangledown$;
The spines of the cycle $\OO$ or
the down-pointing triangle $\triangledown$.
And, the leftmost and rightmost vertices of the $i$-th cycle $\OO_i \in (\OO_1, \OO_2, \cdots, \OO_p)$ are the $i$-th joint of the spine.
Moreover,
for the $i$-th cycle $\OO_i \in (\OO_1, \OO_2, \cdots, \OO_p)$,
the $i$-th \emph{hoop}
$(u_i, u'_i)$ on the cycle $\OO$ is defined that the vertices $u_i$ and $u'_i$ are the parents of the $i$-th joint of the spine of the cycle $\OO$ or the down-pointing triangle $\triangledown$.
The vertex $m$ is called the \emph{bad} vertex of a cycle $\OO$
or a down-pointing triangle $\triangledown$.

\begin{definition}
For a cycle $\OO$,
a deleted edge on the bad vertex $m$ of the cycle $\OO$ are called a \emph{wire} of $m$ inside $\OO$ denoted as $\curlywedge(m)$. Also, a removed edge $e_i, 1\leq i\leq p-1,$ which connects between two contiguous maximal inner cycles $\OO_i$ and $\OO_{i+1}$ of the spine is called a \emph{bridge}.
\end{definition}

Recall that in Algorithm \ref{alg:framework},
we simultaneously
pick all joints of $\OO_i, 1\leq i\leq p$ and their hoops 
$\{(u_1, u'_1), (u_2, u'_2), \cdots, (u_p, u'_p)\}$
on the two contiguous tracks in $\HH$.
Also, we order all hoops $\{(u_1, u'_1),$
$(u_2, u'_2),$
$\cdots,$
$(u_p, u'_p)\}$ as ordering the lower boundary $\{L^B(\OO_1), L^B(\OO_2), \cdots, L^B(\OO_p)\}$ of the $\OO$'s spine.
Moreover, for all hoops $((u_1, u'_1),$
$(u_2, u'_2),$
$\cdots,$
$(u_p, u'_p))$ and the lower boundary
$\{L^B(\OO_1),$
$L^B(\OO_2),$
$\cdots,$
$L^B(\OO_p)\}$ of the spine, we place the two sequential vertices contiguously on any track in $\HH$.
Hence we have the following observations that state the key reasons why wires cannot make $X$-crossing in our layout.

\begin{observation}
Given a bad vertex $m$ on a cycle $\OO$,
let $\{\OO_1, \OO_2, \cdots, \OO_p\}$ be $\OO$'s spine and $\{(u_1, u'_1),$
$(u_2, u'_2),$
$\cdots,$
$(u_p, u'_p)\}$
be corresponding hoops on $\OO$, the layout in $\HH$ has the following properties:
\begin{enumerate}
\item all hoops are placed contiguously on the same track in $\HH$,

\item all joints are placed contiguously on a track in $\HH$,

\item the order of all joints on a track is the same as the order of all hoops on a track in $\HH$. And,

\item all hoops and all joints are placed at two contiguous tracks in $\HH$.
\end{enumerate}
\end{observation}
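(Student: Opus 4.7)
The plan is to verify the four properties by tracing how the cycle $\OO$, its spine, and the corresponding hoops and joints are placed, first in the composite-layerlike graph $\GG$ (Section \ref{sec:transformation}) and then transferred to the ladder $\HH$ by Algorithm \ref{alg:framework}. Throughout, I would keep in mind that hoops and joints are structural objects of one cycle $\OO$, so the claim is really about how one consecutive stretch of one layer of $\GG$ and the immediately following stretch of the next layer get mapped to two tracks of $\HH$.

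First, I would establish the placement inside $\GG$. By the reformation in Section \ref{sec:transformation}, the cycle $\OO$ is placed clockwise on a single layer of $\GG$, so its vertices appear consecutively on that layer; in particular, the hoops $(u_1,u'_1),\dots,(u_p,u'_p)$ occupy one contiguous stretch of that layer in the left-to-right order $1,2,\dots,p$. The spine $(\OO_1,\dots,\OO_p)$ is placed on the immediately succeeding layer, again clockwise, so the two joint-vertices of each $\OO_i$ lie on the next layer down, in the same left-to-right index order. This already yields the combinatorial versions of properties (1)--(3) at the $\GG$-level: hoops are contiguous on one layer, joints are contiguous on the next, and the orders match because the $i$-th hoop sits directly above the $i$-th joint by construction.

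Next, I would lift these observations from $\GG$ to $\HH$ using Algorithm \ref{alg:framework}. The key point is that hoops and joints are processed together: they belong to a single region $\WW$ produced during the recursion, and within $\WW$ the only vertices lying on $\OO$'s layer are the cycle vertices themselves, placed in their clockwise order on one track of $\HH$; the spine's joints similarly occupy the next track, with no intervening vertex, because no subsequent skeleton placement reaches back and inserts anything between the already-placed hoops or between the already-placed joints (subsequent skeletons are placed strictly to the right and strictly beyond the $2\ZZ$-track offset). This gives properties (1), (2), and (3) in $\HH$. Property (4) then follows because Algorithm \ref{alg:framework} maps consecutive layers of a region to consecutive tracks of $\HH$ (the layer of $\OO$ goes on track $t$, the layer of the spine on track $t+1$), so hoops and joints end up on two contiguous tracks.

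The main obstacle I expect is property (3), the order-matching, because one has to rule out that the reverse-placement conventions used elsewhere (for example the reversed placement of the middle path $\MM$ in Theorem \ref{thm:fan-path-region-order}) accidentally flip the hoop or joint order within the processing of $\OO$. I would handle this by arguing that $\OO$ is laid out as a single clockwise traversal starting from a fixed vertex, so no fan/middle-path reversal is applied to the hoop stretch or to the joint stretch as a whole; they are placed monotonically in index order, and so the order on the two contiguous tracks must agree.
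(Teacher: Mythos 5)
Your reconstruction takes essentially the same route as the paper: the paper offers no separate proof of this observation, deriving it directly from the recollection immediately preceding it that Algorithm \ref{alg:framework} simultaneously places all joints of $\OO_1,\dots,\OO_p$ and their hoops on two contiguous tracks, ordered as the lower boundaries $L^B(\OO_1),\dots,L^B(\OO_p)$ of the spine. Your added care about ruling out order reversals from the middle-path conventions goes slightly beyond what the paper writes, but the argument --- tracing the clockwise placement of $\OO$ and its spine in Section \ref{sec:transformation} and its transfer to $\HH$ --- is the same one the paper intends.
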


\begin{observation}
Let $m$ and $m'$ be bad vertices on cycles $\OO$
and $\OO'$, respectively.
The bad vertex $m$ is placed at left of the bad vertex $m'$ on a track in $\HH$ if and only if
the spine of $\OO$ is placed at
left of the spine of $\OO'$ at any track in $\HH$.
\end{observation}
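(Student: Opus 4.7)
The plan is to derive both directions of the biconditional from the monotone left-to-right processing invariant of Algorithm \ref{alg:framework}, combined with the previous observation about how a spine is placed relative to its hoops and its bad vertex. First I would set up the following invariant that the framework maintains: whenever a region $\WW$ rooted at a vertex $v$ is removed from the FIFO queue $\tilde{\YY}$, every vertex and every edge that will later be placed inside $\WW$'s sequential skeletons lands to the right of $v$ and to the right of all previously placed material. Because $\tilde{\YY}$ is first-in-first-out and items are always appended at the rightmost of $\HH$, if region $\WW$ sits to the left of region $\WW'$ when they are dequeued, then the sequential skeletons of $\WW$, and therefore all of the vertices that eventually descend from $\WW$'s root, are placed strictly to the left of everything that descends from $\WW'$'s root.

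Next I would specialize this invariant to bad vertices. A bad vertex $m$ of a cycle $\OO$ is the common apex of the down-pointing triangles that form the interior of $\OO$, so by the transformation of Section \ref{sec:transformation} the spine $(\OO_1,\OO_2,\ldots,\OO_p)$ of $\OO$ lies inside the region whose root is $m$ (or inside a region that is dequeued from $\tilde{\YY}$ strictly after $m$ is placed). By the previous observation, the hoops on the track of $m$ and the joints on the next track are placed contiguously, in the same left-to-right order as the lower boundaries $L^B(\OO_1),\ldots,L^B(\OO_p)$; these joint/hoop pairs are precisely the anchors of the spine in $\HH$. Therefore the horizontal coordinates of the spine of $\OO$ in $\HH$ are determined (and monotonically controlled) by the horizontal coordinate of $m$.

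For the forward direction, suppose $m$ is placed to the left of $m'$ on their respective tracks. Because Algorithm \ref{alg:framework} places its output monotonically to the right and because of the invariant above, the sequential skeletons rooted at $m$ are produced before those rooted at $m'$; the hoop/joint packets associated to $\OO$ therefore appear to the left of those associated to $\OO'$, and by the preceding observation these packets coincide with the placement of the spines. Hence the spine of $\OO$ lies to the left of the spine of $\OO'$ on every track on which both appear. For the converse, I would contrapose: if $m$ were placed to the right of $m'$, the same monotonicity argument would force the spine of $\OO$ to lie to the right of the spine of $\OO'$, contradicting the hypothesis.

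The main obstacle I anticipate is the bookkeeping needed to handle the case in which $m$ and $m'$ are dequeued at different iterations but eventually land on the same track through the wrapping step of Theorem \ref{thm:wrap}; I would address this by noting that wrapping preserves the relative left-to-right order inside each track and that spines of $\OO$ and $\OO'$ are wrapped by the same shift as their bad vertices (both gap numbers are bounded by $2\ZZ$ by Lemma \ref{lem:gap_number}), so the order is preserved after wrapping. A secondary subtlety is ensuring that when $m$ and $m'$ sit on the same track but belong to nested cycles, the enclosing region is processed first and its root is placed to the left; this is exactly the FIFO invariant applied to the ancestor-descendant relation in the composite-layerlike graph.
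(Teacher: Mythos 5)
The paper states this observation without proof: it is asserted as an immediate consequence of the placement rules recalled just beforehand (each cycle's hoops and joints are placed contiguously, in the order of the spine's lower boundaries $L^B(\OO_1),\dots,L^B(\OO_p)$, on two contiguous tracks tied to the bad vertex). Your argument --- the monotone left-to-right, FIFO placement invariant of Algorithm \ref{alg:framework} together with preservation of within-track order under the wrapping of Theorem \ref{thm:wrap} --- is precisely an elaboration of that implicit justification, so it takes essentially the same approach as the paper.
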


\begin{observation}
For a cycle $\OO$ with the bad vertex $m$,
\begin{enumerate}
\item the gap number between the bad vertex $m$ and any vertex on the lower boundary $\{L^B(\OO_1),$
$L^B(\OO_2),$
$\cdots,$
$L^B(\OO_p)\}$ of the $\OO$'s spine is at most $2\JJ$, and

\item each lower boundary $L^B(\OO_i),$
$1\leq i\leq p$
except its joint
is placed contiguously on a track in $\HH$.
\end{enumerate}
\end{observation}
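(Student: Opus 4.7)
The plan is to derive both parts of the observation by tracing how Algorithm \ref{alg:framework} places the bad vertex $m$, the spine $(\OO_1, \ldots, \OO_p)$, and their lower boundaries once the plane graph $G$ has been reformed into a composite-layerlike graph $\GG$ via the construction of Section \ref{sec:transformation}, and then invoking the well-placedness guarantees that have already been proved.

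For part (1), I would first observe that in the transformation of Section \ref{sec:transformation}, the maximal inner cycles forming $\OO$'s spine sit on the layer immediately below the layer containing $\OO$, and the edges of $\OO$ together with the hoops provide the skeleton that attaches these cycles to $m$. When the framework processes the region $\WW$ whose root is $m$, the joints of the spine appear precisely as $m$'s children inside $\WW$; hence, by Theorem \ref{thm:skeleton} and Lemma \ref{lem:skeleton-wrap}, the skeleton containing them is $(\QQ, \XX, \JJ)$-well-placed in $\HH$, so the gap from $m$ to every joint is at most $\JJ$. A subsequent iteration of Algorithm \ref{alg:framework} then roots at each joint and places the rest of $L^B(\OO_i)$ inside a fresh $(\QQ, \XX, \JJ)$-well-placed skeleton, contributing a further gap of at most $\JJ$ between the joint and any other lower-boundary vertex. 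Composing these two $\JJ$-bounded hops yields the desired gap of at most $2\JJ$ between $m$ and every vertex of $\{L^B(\OO_1), \ldots, L^B(\OO_p)\}$.

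For part (2), the key step will be to note that once the skeleton rooted at the joint of $\OO_i$ is laid out by Theorem \ref{thm:fan-path-region-order}, the vertices of $L^B(\OO_i)$ other than the joint itself are exactly the upper vertices of a single fan in the associated raising-fan path, since by construction they are all adjacent to the joint and lie on the same layer. The layout rule from Section \ref{sec:layout-fan-path}, which places each characteristic fan's upper vertices clockwise on one layer of $\HH$, then deposits $L^B(\OO_i)$ minus its joint as a single contiguous block on one track.

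The main obstacle will be making rigorous the identification of $L^B(\OO_i)$ minus its joint with the upper vertices of exactly one fan, because the transformation of Section \ref{sec:transformation} reorders cycle boundaries, adds dummy edges, and passes through a $1$-subdivision; one must verify that the clockwise placement of $\OO_i$ starting from the joint makes the non-joint portion of $L^B(\OO_i)$ a contiguous stretch of upper vertices whose common lower neighbour is the joint. Once this correspondence is pinned down, both items of the observation reduce to direct applications of Theorem \ref{thm:skeleton}, Lemma \ref{lem:skeleton-wrap}, and Theorem \ref{thm:fan-path-region-order}.
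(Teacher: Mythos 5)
There is a genuine gap here, and it is worth noting first that the paper itself offers no proof of this observation: it is asserted directly from the placement conventions recalled in the paragraph immediately preceding it (the joints of the spine and their hoops are placed on two contiguous tracks, all hoops contiguously on one track, all joints contiguously on another in the same order, and each lower boundary $L^B(\OO_i)$ is placed contiguously following its hoop/joint). Your plan instead tries to derive the observation from Theorem \ref{thm:skeleton}, Lemma \ref{lem:skeleton-wrap} and Theorem \ref{thm:fan-path-region-order}, but the step that carries all the content --- the identification of $L^B(\OO_i)$ minus its joint with the upper vertices of a single fan, and the decomposition of the wire's gap into two hops of size at most $\JJ$ each --- is exactly what you defer as ``the main obstacle,'' so the proposal is a plan rather than a proof of either item.

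Moreover, the first hop of your part (1) argument is on shaky ground. You assume the joints of $\OO$'s spine ``appear precisely as $m$'s children inside $\WW$,'' so that skeleton well-placedness bounds the gap from $m$ to each joint by $\JJ$. But in the transformation of Section \ref{sec:transformation} the edges between the bad vertex $m$ and the vertices of $(L^B(\OO_1),\cdots,L^B(\OO_p))$ are precisely the edges that are \emph{deleted} (these are the wires whose gap the observation is meant to bound), and dummy edges attaching the spine are added to other vertices of $\OO$; in the layout the spine is anchored near $m$ through the hoops $(u_i,u_i')$, which lie on $\OO$'s layer, not through edges incident to $m$ in $\GG$. So the bound from $m$ to a joint cannot be read off from a skeleton containing an $m$--joint edge; it has to be argued via the placement of $m$, the hoops, and the joints on nearby tracks, which is the route the paper's preceding discussion (and its $2\JJ$ constant) implicitly takes. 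Until that is done, and until the single-fan/contiguity correspondence for $L^B(\OO_i)$ is actually established rather than flagged, neither item of the observation is proved by your argument.
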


For any two bad vertices $m_1$ and $m_2$ that $m_1$ is at left of $m_2$ on a track in $\HH$,
the wires $\curlywedge(m_1)$ are placed at left of the wires $\curlywedge(m_2)$ in $\HH$.
So, there is no any $X$-crossing edge between $\curlywedge(m_1)$ and $\curlywedge(m_2)$.
From the above observations,
we have the following lemma:

\begin{lemma}\label{lem:vertical-X-crossing}
Suppose sequential bad vertices $(m_1, m_2, \cdots, m_p)$ are placed from left to right on a track in $\HH$,
the sequential wires $(\curlywedge(m_1),$ $\curlywedge(m_2),$ $\cdots,$ $\curlywedge(m_p))$ are not $X$-crossing in $\HH$.
\end{lemma}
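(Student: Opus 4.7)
The plan is to reduce the statement to a left-to-right comparison of endpoint positions, using the three observations just established. Specifically, for any pair of indices $i<j$ in $\{1,2,\dots,p\}$ I would prove that every wire in $\curlywedge(m_i)$ lies entirely to the left of every wire in $\curlywedge(m_j)$; this immediately rules out $X$-crossings between the two sets. For the crossings within a single set $\curlywedge(m_k)$ I would argue that all wires of $\curlywedge(m_k)$ share the common endpoint $m_k$ and so cannot be $X$-crossing with one another: an $X$-cross requires two wires with disjoint endpoints.

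To carry out the between-sets argument, I would first invoke the second observation preceding the lemma, which says that the order of the bad vertices $(m_1,\dots,m_p)$ on their track matches the order of the spines of the cycles $(\OO_1,\dots,\OO_p)$ on every track on which the spines appear. Next I would use the first observation to pin down where the endpoints of a wire $\curlywedge(m_k)$ live: one endpoint is $m_k$, the other is a vertex on the lower boundary $\{L^B(\OO^{(k)}_1),\dots,L^B(\OO^{(k)}_{p_k})\}$ of the spine of $\OO_k$, and these lower boundaries are laid out contiguously on a single track in the prescribed order (third observation of the previous paragraph). Combining these two facts, for $i<j$ both endpoints of every wire in $\curlywedge(m_i)$ sit strictly to the left of both endpoints of every wire in $\curlywedge(m_j)$ on their respective tracks. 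Two edges whose left endpoints are in left-to-right order and whose right endpoints are also in left-to-right order cannot $X$-cross, so the pairwise claim follows.

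To finish, I would assemble the pairwise statement into the global conclusion: any putative $X$-crossing in $(\curlywedge(m_1),\dots,\curlywedge(m_p))$ involves two wires, and these two wires either share a bad-vertex endpoint (hence cannot $X$-cross) or come from different $\curlywedge(m_i)$, $\curlywedge(m_j)$ (hence cannot $X$-cross by the argument above). The step I expect to be the main obstacle is verifying that the left-to-right ordering of endpoints is preserved after the ladder is wrapped by Theorem~\ref{thm:wrap}: I need to use the third observation's bound that the gap between a bad vertex and any vertex on the lower boundary of its spine is at most $2\JJ$, together with the fact that the wrapped ladder has $2\DD$ tracks with $\DD>\JJ$, to conclude that the pair of tracks on which a single wire sits in the wrapped ladder is uniquely determined by $\LL_{\HH}(m_k)$ and a lower-boundary track index, so that the left-to-right order is inherited from the unwrapped layout.
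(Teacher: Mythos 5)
Your proposal matches the paper's argument: the paper likewise derives the lemma directly from the preceding observations by noting that if $m_i$ is left of $m_j$ on its track then the wires $\curlywedge(m_i)$ lie entirely to the left of the wires $\curlywedge(m_j)$, so no two of them can $X$-cross (wires at a common bad vertex share an endpoint and are not an issue). Your final worry about order preservation under wrapping is not part of this lemma in the paper; that is deferred to Theorem~\ref{thm:wrap}, which handles the wrapped ladder separately, so the lemma itself only concerns the unwrapped layout.
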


\begin{definition}
Suppose (1) $\triangledown$ is a down-pointing triangle  with the bad vertex $m$ and
(2) the down-pointing triangle $\triangledown$ is in a fan $\FF$ of a raising fan $\tilde{\FF}$ with the middle path $\MM$,
\begin{enumerate}
\item a left (right, respectively) \emph{pile} of the bad vertex $m$
is defined as an edge connecting between the bad vertex $m$ and a vertex in the lower boundary $\{L^B(\OO_1), L^B(\OO_2), \cdots, L^B(\OO_p)\}$ of the spine of $\triangledown$ that
is at left (right, respectively) of the middle path $\MM$.
A left (right, respectively) pile with respect to the middle path $\MM$ is denoted to $\curlyvee^L(m)$ ($\curlyvee^R(m)$, respectively).

\item
A left (right, respectively) spine of the down-pointing triangle $\triangledown$ with respect to the middle path $\MM$ is the subsequential spine of the down-pointing triangle $\triangledown$ at left (right, respectively) of the middle path $\MM$.

\item
A left (right, respectively) hoop with respect to the middle path $\MM$ is a hoop of the down-pointing triangle $\triangledown$ at left (right, respectively)
of the middle path $\MM$.
\end{enumerate}
\end{definition}

Suppose $\triangledown'$ is a down-pointing triangle with the bad vertex $m'$
inside the down-pointing triangle $\triangledown$.
Then
the sequential regions $\tilde{\WW}(m)$ consisting of all $m$'s left joints are placed at left of the sequential regions $\tilde{\WW}(m')$ consisting of all $m'$'s left joints.
Because we place the bad vertex $m$ at left of the bad vertex $m'$ on a track in $\HH$ and
all $m$'s joints at left of $m'$'s hoops on a track in $\HH$,
we can have that the left piles $\curlyvee^L(m)$ are not nested with the left piles $\curlyvee^L(m')$ on any track in $\HH$.
Similarly,
the sequential regions $\tilde{\WW}(m)$ consisting of all $m$'s right joints are at left of the sequential regions $\tilde{\WW}(m')$ consisting of all $m'$'s right joints.
Because we place the bad vertex $m$ at left of
bad vertex $m'$ on a track in $\HH$ and
place
all $m$'s right joints at left of
all $m'$'s right joints on a track in $\HH$,
there is no any nested edge between
all right piles $\curlyvee^R(m)$ of the bad vertex $m$ and all right piles $\curlyvee^R(m')$ of the bad vertex $m'$.
Now we can the following observations:
\begin{observation}
Suppose raising down-pointing triangles $(\triangledown_1, \triangledown_2, \cdots, \triangledown_p)$ and their sequential bad vertices
$\MM=(m_1, m_2, \cdots, m_p)$ are placed on the same track in $\HH$,
\begin{enumerate}
\item the sequential left joints of the sequential down-pointing triangles $(\triangledown_1, \triangledown_2, \cdots, \triangledown_p)$ are orderly placed at a track in $\HH$. And,

\item the sequential right joints of the sequential down-pointing triangles $(\triangledown_1, \triangledown_2, \cdots, \triangledown_p)$ are orderly placed at a track in $\HH$.
\end{enumerate}
\end{observation}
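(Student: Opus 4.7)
The plan is to read off both parts of the observation from the orderly (respectively reverse) placement of the left and right wings guaranteed by Theorem~\ref{thm:fan-path-region-order}. Under the hypothesis, the sequence $(\triangledown_1, \triangledown_2, \cdots, \triangledown_p)$ constitutes a raising sequence whose bad vertices $\MM=(m_1, m_2, \cdots, m_p)$ lie on a common middle path, so they are precisely the lower vertices of consecutive fans $\FF_1, \FF_2, \cdots, \FF_p$ in the corresponding raising fan $\tilde{\FF}$. The two parts of the observation then become statements about how the joints, which are vertices of the left and right wings of these fans, are laid out on a track of $\HH$.

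First I would identify each left joint of $\triangledown_i$ as a vertex lying on a raising-path inside the left wing $\LL(\FF_i)$: by definition in Section~\ref{sec:layout-fan-path}, a left joint is the leftmost vertex of a maximal inner cycle of the left spine $LS(\triangledown_i)$ with respect to $\MM$, and such vertices belong to the raising-paths composing $\LL(\FF_i)$. Next I would invoke Theorem~\ref{thm:fan-path-region-order}, which places the sequential regions $\tilde{\WW}^L(\tilde{\FF})$ orderly as $\tilde{\WW}^L_{\HH}(\tilde{\FF})$; since the subregions associated with $\LL(\FF_1), \LL(\FF_2), \ldots, \LL(\FF_p)$ appear along the ladder in this raising order, the left joints of $\triangledown_1$ precede those of $\triangledown_2$, which precede those of $\triangledown_3$, and so on, along the relevant track of $\HH$. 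The second claim is mirror-symmetric: the right wings $\RR(\FF_i)$ are placed reversely as $\tilde{\WW}^R_{\HH}(\tilde{\FF})$, so the right joints appear along the track in the reverse order $p, p-1, \ldots, 1$, which is still monotone and therefore qualifies as ``orderly'' (no two edges to them can nest).

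The main obstacle will be handling the finer recursive structure: within a single fan $\FF_i$ there may sit several down-pointing triangles whose joints collectively populate the same track, so one cannot simply appeal to the coarse order of the $\tilde{\WW}^L(\tilde{\FF})$-regions between distinct fans. To close this gap I would unfold the recursion used in Lemma~\ref{lem:fan-path-region-order} and in the proof of Theorem~\ref{thm:fan-path-region-order}, wherein each rightmost subregion $\WW^L_{i,j}$ is further partitioned by the left wings $\LL_{i,j}$ of the subsequent characteristic-fans and those subregions are themselves placed orderly. Iterating this partitioning produces, on each track, a left-to-right sequence of raising-paths whose ordering along the ladder matches the raising order of the triangles that produced them; extracting the joint vertices from these raising-paths then yields the desired orderly placement for part~(1), and the argument for part~(2) is the mirror image using the reverse placement of the right wings.
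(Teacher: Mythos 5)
Your reduction of part (1) to the orderly placement of the left wings in Theorem~\ref{thm:fan-path-region-order} is in the spirit of what the paper intends (the paper itself gives no formal proof, only the preceding discussion about placing $m$'s joints to the left of $m'$'s joints and hoops for nested triangles $\triangledown \supset \triangledown'$). However, your treatment of part (2) contains a genuine error. You claim that because the right wings are placed reversely, the right joints appear in the \emph{reverse} of the order in which the left joints (and the bad vertices) appear, and that this is harmless because a monotone reverse order ``still qualifies as orderly.'' Both halves of that claim are wrong. First, the reversal in Theorem~\ref{thm:fan-path-region-order} is applied to the \emph{planar} left-to-right order of the right regions, which already runs from the innermost fan outward (mirror-symmetric to the left side, which runs from the outermost fan inward); reversing it therefore puts the right regions into the \emph{same} outermost-to-innermost order as the left regions and as the reversed middle path. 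This is exactly what Fig.~\ref{fig:ugly-edges-layout} shows: the bad vertices appear as $(m_3, m_2, m_1)$, the left spines as $(c_1,\dots,c_4)$ before $(b_1,\dots,b_4)$, and the right spines as $(c_8,\dots,c_5)$ before $(b_7,\dots,b_5)$ --- all three sequences in the same relative order. Your conclusion that the two joint sequences run in opposite directions misreads the construction.

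Second, even granting your reading, the fallback argument fails: ``orderly placed'' in this observation must mean \emph{in the same left-to-right order as the bad vertices $(m_1,\dots,m_p)$ on their track}, because the sole purpose of the observation is to feed Lemma~\ref{lem:pile-nested}. The right piles are the chords $(m_i, w)$ with $w$ a right joint of $\triangledown_i$; if the bad vertices appeared in one order and their right joints in the opposite order, the intervals spanned by $m_i$'s right piles would contain the intervals spanned by $m_j$'s right piles for $j$ later in the sequence, i.e.\ the piles would nest --- precisely the situation the observation is supposed to rule out. So ``monotone in the reverse direction'' is not an acceptable substitute for ``orderly'' here. To repair the proof, replace the last step of your second paragraph by the symmetry argument above: the planar order of $\tilde{\WW}^R(\tilde{\FF})$ is the mirror image of that of $\tilde{\WW}^L(\tilde{\FF})$, so the reverse placement of the right regions realigns the right joints with the track order of $(m_1,\dots,m_p)$, exactly as the orderly placement of the left regions does for the left joints. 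Your final paragraph about unfolding the recursion inside a single fan is a reasonable way to handle triangles sharing a layer and is consistent with the paper's construction.
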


\begin{observation}
For a down-pointing triangle $\triangledown$ with the bad vertex $m$,
\begin{enumerate}
\item the gap number between the bad vertex $m$ and any vertex on the lower boundary
$\{L^B(\OO_1),$
$L^B(\OO_2),$
$\cdots,$
$L^B(\OO_p)\}$ of the $\triangledown$'s spine is at most $2\JJ$, and

\item each lower boundary $L^B(\OO_i), 1\leq i\leq p$
except its joint is placed contiguously on a track in $\HH$.
\end{enumerate}
\end{observation}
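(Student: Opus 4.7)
The plan is to prove the two clauses separately, each by tracing how the spine of $\triangledown$ is placed in $\HH$ through the recursive framework of Algorithm \ref{alg:framework}.

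For the first clause (gap bound $2\JJ$), I would argue as follows. The down-pointing triangle $\triangledown$ corresponds, in $\GG$, to a region whose first layer consists of the upper vertices of $\triangledown$, and the bad vertex $m$ lies on the layer immediately below. When Algorithm \ref{alg:framework} processes the region containing $\triangledown$, it first places the skeleton $\Psi$ of that region, which by Lemma \ref{lem:skeleton-wrap} is $(\QQ, \XX, \JJ)$-well-placed. The spine $(\OO_1, \OO_2, \ldots, \OO_p)$ is reached in a second raising-fan placement inside the subregion rooted at $m$, which again is $(\QQ, \XX, \JJ)$-well-placed. Since any vertex in the lower boundary $L^B(\OO_i)$ is reached from $m$ by traversing at most these two nested $(\QQ, \XX, \JJ)$-well-placed layouts, the vertical offset from the track of $m$ to the track containing that vertex is bounded by $\JJ + \JJ = 2\JJ$, giving the claim. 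The same reasoning already appears for generic edges in Lemma \ref{lem:gap_number}, which bounds each edge's gap by $2\ZZ$ (where $\ZZ \geq \JJ$); I would simply specialize that argument to edges between $m$ and $L^B(\OO_i)$.

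For the second clause (contiguity of $L^B(\OO_i)$ minus its joint), I would use the reformation rules in Section \ref{sec:transformation} together with the layout order produced by Theorem \ref{thm:fan-path-region-order}. By construction, every vertex in $L^B(\OO_i)$ is adjacent to $m$ and is placed on a single layer of $\GG$ in contiguous clockwise order, with only the two endpoints of $L^B(\OO_i)$ acting as joints that may also serve as boundaries shared with neighbouring cycles $\OO_{i-1}$ and $\OO_{i+1}$. In the raising-fan layout, the interior vertices of $L^B(\OO_i)$ all belong to the same fan whose lower vertex is $m$; by Lemma \ref{lem:fan-path-region-order} and Theorem \ref{thm:fan-path-region-order} the upper vertices of a fan are placed contiguously on one track of $\HH$ (either orderly on the left wing or reversely on the right wing). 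Hence $L^B(\OO_i)\setminus\{\text{joints}\}$ sits as a contiguous block on one track, as claimed.

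The main obstacle I anticipate is verifying that the two‑level bound on the gap really is tight and that no additional indirection is hidden in the recursive processing of inner cycles. In particular, one must confirm that when the spine is composed of several inner cycles, the skeleton extraction does not route the path from $m$ to $L^B(\OO_i)$ through a third raising-fan layer; this requires looking carefully at how the characteristic-raising fan of $\tilde{\FF}$ interacts with the spine structure described in Section \ref{sec:layout-fan-path}. Once that is resolved, the contiguity claim is immediate from the placement order inside a single fan, and the whole observation follows.
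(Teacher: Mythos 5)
The paper does not actually prove this observation: it is stated as an immediate consequence of the placement conventions recalled just before it (all joints and hoops of the spine are put on two contiguous tracks, the hoops are ordered exactly as the lower boundaries $L^B(\OO_1),\dots,L^B(\OO_p)$, and each lower boundary is kept contiguous next to its hoop). Your proposal instead tries to re-derive it from the skeleton/raising-fan machinery, and in doing so it leaves a genuine gap in each clause. For clause 1 you yourself flag the decisive step as unresolved: you assert that a vertex of $L^B(\OO_i)$ is separated from $m$ by only two nested $(\QQ,\XX,\JJ)$-well-placed placements and then admit you have not checked that the recursion does not interpose a third level; without that check the bound $\JJ+\JJ=2\JJ$ is unsupported. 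Moreover, Lemma \ref{lem:gap_number} cannot simply be ``specialized'' to the edges between $m$ and $L^B(\OO_i)$: that lemma bounds gaps of edges that Algorithm \ref{alg:framework} actually places, whereas the edges from the bad vertex $m$ to $L^B(\OO_i)$ are exactly the deleted piles, which are absent from the composite-layerlike graph during the layout --- that is the whole reason this section exists. What is needed is an argument about where the spine itself lands relative to $m$, which is what the paper's construction conventions assert directly.

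Clause 2 has the same misidentification of structure. You argue contiguity of $L^B(\OO_i)$ because its interior vertices are ``upper vertices of the same fan whose lower vertex is $m$,'' but in the graph that is laid out no such fan exists: the edges between $m$ and $L^B(\OO_i)$ were removed in Section \ref{sec:transformation} and replaced by dummy edges to a vertex $v_i$ on the upper cycle, so Theorem \ref{thm:fan-path-region-order} and Lemma \ref{lem:fan-path-region-order} say nothing about these vertices staying together; one must also rule out that the recursive region reordering (which does permute regions, as in Fig.~\ref{fig:skeleton}) interleaves other vertices into $L^B(\OO_i)$. The contiguity is a property built into the construction (lower boundary placed contiguously alongside its hoop), not a consequence of the fan-layout theorem, so as written both halves of your argument rest on unverified or misattributed structure.
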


From the above observations, we can have that the right piles $\curlyvee^R(m_i)$ are not nested with the left piles $\curlyvee^R(m_j)$ on any track in $\HH$ in the following lemma:

\begin{lemma}\label{lem:pile-nested}
Given sequential bad vertices $(m_1, m_2, \cdots, m_p)$ orderly placed on a track in $\HH$,
their sequential left and right piles $(\curlyvee^L(m_1), \curlyvee^L(m_2), \cdots, \curlyvee^L(m_p))$ and $(\curlyvee^R(m_1), \curlyvee^R(m_2), \cdots, \curlyvee^R(m_p))$ are not nested on any track in $\HH$.
\end{lemma}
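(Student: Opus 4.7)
The plan is to rule out nesting in two stages: first among piles sharing a common bad vertex, and then among piles whose bad vertices are distinct. Two chords $(x_1,y_1)$ and $(x_2,y_2)$ nest only when $\{x_1,y_1,x_2,y_2\}$ are four distinct points and one chord's endpoints separate the other's on each of the two tracks. Hence any two piles in $\curlyvee^L(m_i)$ (or in $\curlyvee^R(m_i)$) automatically fail to nest, since they share the endpoint $m_i$. So the only case to analyse is a pile from $m_i$ versus a pile from $m_j$ with $i\neq j$.

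Assume without loss of generality that $m_i$ is placed strictly to the left of $m_j$ on the track containing the middle path, and pick $e=(m_i,v)\in\curlyvee^L(m_i)$ and $e'=(m_j,v')\in\curlyvee^L(m_j)$. To establish non-nesting it suffices to show that $v$ is placed strictly to the left of $v'$ on the track containing the lower boundaries of the spines of $\triangledown_i$ and $\triangledown_j$. By the first observation preceding the lemma, the left joints of $(\triangledown_1,\ldots,\triangledown_p)$ appear in the same order as $(m_1,\ldots,m_p)$ on their track, so the left joint of $\triangledown_i$ precedes the left joint of $\triangledown_j$. By the second observation, the non-joint vertices of each lower boundary $L^B(\OO_k)$ of the spine of $\triangledown_i$ (respectively $\triangledown_j$) are placed contiguously on this same track, packed as a block anchored at the corresponding joints. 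Combining these two facts, the entire lower left boundary of $\triangledown_i$ occupies an interval strictly to the left of the entire lower left boundary of $\triangledown_j$ on the track. Therefore $v<v'$, and $e$ and $e'$ cannot nest.

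The argument for $\curlyvee^R(m_i)$ versus $\curlyvee^R(m_j)$ is symmetric: the sequential right joints of $(\triangledown_1,\ldots,\triangledown_p)$ are orderly placed on their track, and the contiguous non-joint portions of the lower right boundaries inherit that same left-to-right order. So if $m_i$ lies to the left of $m_j$, every target of a right pile of $m_i$ lies to the left of every target of a right pile of $m_j$, again forcing parallel (not nested) placement.

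The main subtlety is that a pile may terminate at a non-joint vertex of the lower boundary rather than at a joint, so the first observation on joints alone is not enough. The step I expect to require the most care is therefore the combination of the two observations: using the orderly placement of joints to separate the blocks, and the contiguous placement of the non-joint portions of each $L^B(\OO_k)$ to ensure that no target of $m_j$ leaks into the interval occupied by targets of $m_i$. Once this block-separation is pinned down, the non-nesting conclusion follows immediately from the four-endpoint characterisation of nesting.
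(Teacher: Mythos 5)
Your proposal is correct and follows essentially the same route as the paper: both arguments reduce the claim to showing that the order of the bad vertices $(m_1,\ldots,m_p)$ on their track agrees with the left-to-right order of the blocks formed by the lower boundaries of their spines (via the two observations on orderly placed joints and contiguously placed lower boundaries), so that any two piles from distinct bad vertices are parallel or crossing rather than nested. Your explicit handling of piles sharing a common endpoint and your flagging of the non-joint targets as the delicate point are welcome refinements, but the underlying argument is the one the paper gives.
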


For a cycle $\OO$,
all joints of the spine of the cycle are orderly placed on any track
in $\HH$,
all bridges of the spine cannot have nested chords on any track in $\HH$.
Similarly,
for a down-pointing triangle $\triangledown$,
all left and right joints of the left and right spines of the cycle are orderly placed on any track in $\HH$, respectively,
all bridges of the spine cannot have nested chords on any track in $\HH$.
From the above fact, we can have the following lemma:
\begin{lemma}\label{lem:bridge-nested}
Given a cycle $\OO$ or a down-pointing triangle $\triangledown$ with their  spine $(\OO_1, \OO_2, \cdots, \OO_p)$,
their sequential bridges $(e_1, e_2, \cdots, e_{p-1})$ are not nested on any track in $\HH$
where $e_i, 1\leq i\leq p-1,$ is the bridge between
the cycles $\OO_i$ and $\OO_{i+1}$.
\end{lemma}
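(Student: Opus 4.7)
The plan is to reduce the claim to the ordering statement about joints that the paper has already established right before this lemma: for a cycle $\OO$ the sequential joints of its spine $(\OO_1, \OO_2, \cdots, \OO_p)$ are placed in order on every track in $\HH$, and for a down-pointing triangle $\triangledown$ the sequential left joints (respectively right joints) are placed in order on every track. The intuition is simple: a bridge $e_i$ lives between two consecutive spine cycles $\OO_i$ and $\OO_{i+1}$, its endpoints are forced to be the ``adjacent'' joint vertices of $\OO_i$ and $\OO_{i+1}$, and once the joint sequence is orderly laid out, any chord sequence whose endpoints follow that same order cannot nest.

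First I would make precise which vertices the bridge $e_i$ can touch. Because $\OO_i$ and $\OO_{i+1}$ are consecutive maximal inner cycles of the spine, the only vertex of $\OO_i$ that can be joined by an edge to $\OO_{i+1}$ lies on the side of $\OO_i$ facing $\OO_{i+1}$; by the definition of a joint in Section \ref{sec:layout-fan-path} this is precisely the joint of $\OO_i$ adjacent to $\OO_{i+1}$, and symmetrically for $\OO_{i+1}$. Hence both endpoints of $e_i$ are joints, and the bridge sequence $(e_1, e_2, \cdots, e_{p-1})$ is a sequence of chords whose left endpoints are successive joints of $\OO_1, \OO_2, \cdots, \OO_{p-1}$ and whose right endpoints are successive joints of $\OO_2, \OO_3, \cdots, \OO_p$.

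Next I would invoke the observation recalled above: these joints are orderly placed on each track of $\HH$. For the cycle case this gives the ordering on both endpoint tracks directly; for the down-pointing triangle case I would split the bridge sequence into the portion to the left of the middle path $\MM$ and the portion to the right of $\MM$, apply the left-joint ordering to the first portion and the right-joint ordering to the second, and handle the (at most one) bridge crossing $\MM$ separately by noting that it only contributes a constant, not a nesting, to the layout. In all subcases, both the left-endpoint sequence and the right-endpoint sequence of the bridges are monotone along their respective tracks, so no two bridges $e_i, e_j$ with $i<j$ can satisfy the nesting pattern $(\text{left}_i, \text{left}_j, \text{right}_j, \text{right}_i)$.

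The step that looks most delicate is the down-pointing triangle case where a single bridge straddles the middle path $\MM$: its two endpoints then lie on different sides of $\MM$, and I cannot appeal to a single monotone joint sequence. The cleanest way I see to finish that case is to note that there is at most one such ``straddling'' bridge per spine, because consecutive spine cycles in $\triangledown$ can meet $\MM$ only once, and a single chord by itself is never nested with anything. After dispatching this boundary bridge, the remaining bridges decompose cleanly into a left block and a right block, each governed by the already-proved monotone joint ordering, which completes the proof.
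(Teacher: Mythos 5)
Your proposal is correct and follows essentially the same route as the paper: the paper's own (very terse) proof likewise reduces the claim to the preceding observations that the joints of the spine (respectively the left and right joints, for a down-pointing triangle) are orderly placed on every track of $\HH$, so that the bridge endpoints inherit a monotone order and cannot realize the nesting pattern. Your extra care with the single bridge straddling the middle path $\MM$ goes slightly beyond the paper's proof body (the paper only addresses that subcase informally in the caption of Fig.~\ref{fig:ugly-edges-layout}, by noting that both endpoint sequences of the straddling chords are themselves orderly placed), and is consistent with the constant-bound conclusion the paper needs.
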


\begin{theorem}\label{thm:G1-well-placed}
Every $1$-subdivision plane graph $G^1$ can have an $(\QQ, \XX, \DD)$-well-placed layout on constant number of tracks.
\end{theorem}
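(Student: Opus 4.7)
The plan is to assemble the theorem from three ingredients developed earlier in the paper: the reform procedure of Section \ref{sec:transformation}, the layout framework of Section \ref{sec:perfect-layer-graph-layout}, and the ``deleted edges'' analysis of the final section. I would first apply Theorem \ref{thm:reform} to the $1$-subdivision plane graph $G^1$ to obtain a composite-layerlike graph $\GG$, noting that this transformation deletes three kinds of edges from $G^1$, namely the wires $\curlywedge(m)$, the left/right piles $\curlyvee^L(m)$ and $\curlyvee^R(m)$ for each bad vertex $m$, and the bridges of spines, while also inserting some dummy edges to maintain the composite-layerlike structure.

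Next, I would apply Theorem \ref{thm:cons-track} to $\GG$: using the framework in Algorithm \ref{alg:framework}, and in view of Theorem \ref{thm:skeleton} (which resolves Conjecture \ref{conj:skeleton} via the forest-like structures $\clubsuit_{\BB^L}$ and $\clubsuit_{\BB^R}$), the composite-layerlike graph $\GG$ admits an $(\QQ, \XX, \DD)$-well-placed layout on $2\DD$ tracks of a ladder $\HH$. At this point all edges of $\GG$ respect the three conditions of Definition \ref{def:well-placed}: bounded queue number per track, bounded $X$-crossing number between any two tracks, and bounded gap.

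The remaining step, and the main obstacle, is to reinsert the deleted edges of $G^1$ into this layout and argue that each parameter $\QQ$, $\XX$, $\DD$ increases only by an additive constant. For this I would appeal to the three lemmas of the last section in order. Lemma \ref{lem:vertical-X-crossing} handles the wires: since the sequential bad vertices $(m_1, \dots, m_p)$ on a common cycle are left-to-right ordered on their track in $\HH$, their wires $(\curlywedge(m_1), \dots, \curlywedge(m_p))$ introduce no new $X$-crossings and, by the two observations preceding it, sit within gap $2\JJ$ of their endpoints. Lemma \ref{lem:pile-nested} handles the piles on a down-pointing triangle: the left piles $\curlyvee^L(m_i)$ lie on the left side of the middle path and the right piles $\curlyvee^R(m_i)$ on the right side, and by the observation that left/right joints are orderly placed, neither family produces nested chords on any track. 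Lemma \ref{lem:bridge-nested} handles the bridges between consecutive inner cycles of a spine, for the same orderly-placement reason. Finally, I must check the gap: every deleted edge has both endpoints either on two contiguous tracks (wires, bridges) or within $2\JJ$ of the bad vertex's track (piles), so the distance-number increases by at most $2\JJ$.

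Combining these, each of $\QQ$, $\XX$, $\DD$ absorbs only a constant additive penalty from the reinsertion of wires, piles and bridges. Hence the resulting layout of $G^1$ is still $(\QQ', \XX', \DD')$-well-placed for constants $\QQ', \XX', \DD'$, on $2\DD'$ tracks, which gives the theorem. The real subtlety in the argument is the bookkeeping for the piles at a down-pointing triangle, where one must simultaneously use the ordering of the bad vertices on the middle-path track and the fact that the left and right spines are themselves orderly placed on their tracks; once this is granted, the other two families are straightforward consequences of the framework's ordered layout property.
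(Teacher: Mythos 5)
Your proposal follows essentially the same route as the paper: reform $G^1$ into a composite-layerlike graph via Theorem \ref{thm:reform}, lay it out via Theorem \ref{thm:cons-track}, and absorb the deleted wires, piles and bridges using Lemmas \ref{lem:vertical-X-crossing}, \ref{lem:pile-nested} and \ref{lem:bridge-nested}; if anything you are more careful than the paper, which never explicitly checks the gap of the reinserted edges. The only point to watch is the order of operations --- the paper reinserts the deleted edges \emph{before} wrapping, so that the wrap width $2\DD$ already accounts for their gaps, whereas you wrap first; this is easily repaired by taking $\DD$ to bound the gaps of the reinserted edges as well.
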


\begin{proof}
From Theorem \ref{thm:reform}, we know a plane graph $G$ can be reformed into a composite-layerlike graph $\GG$.
From Lemmas \ref{lem:vertical-X-crossing}, \ref{lem:pile-nested}
and \ref{lem:bridge-nested}, deleted edges slightly increase $X$-crossing number in any two tracks and queue number in any track in $\HH$.
Hence we conclude that
a plane graph $G$ can be $(\QQ, \XX, \DD)$-well-placed in a ladder $\HH$.
Also, from Theorem \ref{thm:cons-track},
an $(\QQ, \XX, \DD)$-well-placed layout can
be wrapped into a ladder $\HH$ on constant number of tracks.
\end{proof}

\bibliographystyle{plain}

\end{document}